\newtheorem{theorem}{Theorem}[section] 
\newtheorem{definition}[theorem]{Definition}
\newtheorem{lemma}[theorem]{Lemma}
\newtheorem{proposition}[theorem]{Proposition}
\newtheorem{example}[theorem]{Example}
\numberwithin{equation}{section}
\newcommand \widehatu{\widehat u}
\newcommand \bz {\begin{itemize}}
\newcommand \ez {\end{itemize}}
\newcommand \ben {\begin{enumerate}}
\newcommand\een {\end{enumerate}} 
\newcommand \N {\mathbb N}
\newcommand \R {\mathbb R} 
\newcommand{\Eqref}[1]{\eqref{#1}}
\newcommand{\Sectionref}[1]{Section~\ref{#1}}  
\newcommand{\Defref}[1]{Definition~\ref{#1}}
\newcommand{\Lemref}[1]{Lemma~\ref{#1}}
\newcommand{\Propref}[1]{Proposition~\ref{#1}}
\newcommand{\Theoremref}[1]{Theorem~\ref{#1}}
\newcommand \auth {} 
\newcommand \jou {\textit}
\newcommand \muh M 
\newcommand \vv u
\newcommand \RR 		{\mathbb{R}}  
\newcommand \del 	\partial
\newcommand \eps 	\epsilon
\newcommand \lam 	\lambda 
\newcommand \be 		{\begin{equation}}
\newcommand\ee 		{\end{equation}}
\let\oldmarginpar\marginpar
\renewcommand\marginpar[1]{\-\oldmarginpar[\raggedleft\footnotesize #1]%
{\raggedright\footnotesize #1}}
\begin{document}

\title{Second-order hyperbolic Fuchsian systems.\\
  General theory} 

\author{Florian Beyer$^1$ and Philippe G. LeFloch$^2$ 
}

\date{November 10, 2010}

\maketitle 

\footnotetext[1]{
Department of Mathematics and Statistics, University of Otago, P.O.~Box 56, Dunedin 9054, New Zealand, E-mail: {fbeyer@maths.otago.ac.nz}.
\newline
$^2$Laboratoire Jacques-Louis Lions \& Centre National de la Recherche Scientifique, 
Universit\'e Pierre et Marie Curie (Paris 6), 4 Place Jussieu, 75252 Paris, France. E-mail: {pgLeFloch@gmail.com.}
\newline 
\textit{\ AMS Subject Classification.} Primary: 35L10, 83C05. Secondary: 83C75. 
\, \textit{Key words and phrases.} Einstein field equations, Gowdy spacetime, singular partial differential equation, 
Fuchsian equation, canonical expansion, singular initial value problem. 
\newline 
\bf A shortened version of the material presented in this 
preprint is found in:
{\sl F. Beyer and P.G. LeFloch, Second-order hyperbolic Fuchsian systems and applications,
Class. Quantum Grav. 27  (2010), 245012.}   
}

\begin{abstract}  
We introduce a class of singular partial differential equations, the 
{\sl second-order hyperbolic Fuchsian systems,}  
and we investigate the associated initial value problem when data are imposed on the singularity. 
First of all, we analyze a class of equations in which hyperbolicity is not assumed 
and we construct asymptotic solutions of arbitrary order. 
Second, for the proposed class of second-order hyperbolic Fuchsian systems, 
we establish the existence of solutions with prescribed asymptotic behavior on the singularity. 
Our proof is based on a new scheme which is also suitable to design numerical approximations.  
Furthermore, as shown in a follow-up paper, the second-order Fuchsian framework is
appropriate to handle Einstein's field equations for Gowdy symmetric
spacetimes and allows us to recover (and slightly generalize) earlier
results by Rendall and collaborators, while providing a direct
approach leading to accurate numerical solutions. The proposed framework is also robust enough to encompass matter models
arising in general relativity.
\end{abstract}
 
\tableofcontents

\section{Introduction}

This is the first of a series of papers \cite{BeyerLeFloch2,BeyerLeFloch3}
devoted to solving the initial value problem for certain classes 
of spacetimes of general relativity. Specifically, we are interested in spacetimes enjoying certain symmetries, 
especially the Gowdy symmetry, and in a formulation when data are imposed on a singular hypersurface where curvature generically blows-up. 
  
For instance, one may consider $(3+1)$-dimensional, vacuum spacetimes $(M,g)$ with spatial topology $T^3$,  
satisfying the vacuum Einstein equations under the Gowdy symmetry assumption, i.e.~under 
the existence of an Abelian $T^2$ isometry group with spacelike orbits whose so-called ``twist constants'' \cite{Chrusciel}
vanish identically. These so-called Gowdy spacetimes on $T^3$ were first studied in \cite{Gowdy73}. 
A combination of theoretical and numerical works has led to a detailed picture of the behavior of solutions 
to the Einstein equations as one approaches the singular boundary of such spacetimes; 
see \cite{Chrusciel, IsenbergMoncrief, Moncrief, Ringstrom2, Ringstrom4, Ringstrom6}.

For the above analysis, one important tool was provided by 
Rendall and his collaborators \cite{AnderssonRendall, KichenassamyRendall, Rendall00} who 
developed the so-called {\sl Fuchsian method} to handle the singular evolution equations associated 
with the Einstein equations for such spacetimes. 
This method allowed the authors to derive precise information about the behavior of solutions near the singularity, 
which was a key step in the general proof of Penrose's strong cosmic conjecture eventually 
established by Ringstr\"om \cite{Ringstrom6}. 

Our aim here and in the follow-up papers \cite{BeyerLeFloch2,BeyerLeFloch3} is two-fold. 
On one hand, we re-visit Rendall's theory (which covers smooth solutions to first-order equations) 
and we develop here a well-posedness theory in Sobolev spaces for the class of 
{\sl second-order hyperbolic Fuchsian systems,} defined below in Section~3. 
As we will show, this class includes many systems of equations arising in general relativity. 

On the other hand, following a strategy initially proposed in Amorim, Bernardi, and LeFloch \cite{ABL}, 
we are interested in the numerical approximation of these Fuchsian equations  
when data are imposed on the singularity of the spacetime and one evolves the solution {\sl from}
 the singularity.
In contrast, standard numerical approaches consider the evolution {\sl toward} the singularity.  
Our main improvement here upon \cite{ABL} is that, in short, no restriction need be imposed on the coefficients of the Fuchsian system, 
provided asymptotic expansions of arbitrary large order are sought. This issue will be developed in \cite{BeyerLeFloch2}. 

The present paper is theoretical in nature, and our main contributions can be summarized as follows: 

\begin{itemize}

\item {\bf Second-order formulation.} 
 First-order Fuchsian systems have been used successfully to handle the equations describing (vacuum) 
Gowdy spacetimes \cite{AnderssonRendall, KichenassamyRendall, Rendall00}. However, we argue here that 
second-order hyperbolic Fuchsian systems, as we define them in this paper, arise more naturally in the applications. 
For instance, Gowdy spacetimes described by second-order partial differential equations, and 
it is natural to keep the second-order structure. In particular, expansions required in the theoretical analysis  
(as well as in actual computations required for the numerical 
discretization) are also more natural with the second-order formulation. 

\item {\bf Hyperbolicity property.}  In addition, in the applications to general relativity, 
ensuring and checking the hyperbolicity of the equations under consideration (after a suitable 
reduction of the original equations) is expected
  to be more convenient with the second-order formulation. 
  For instance, it is easier to recognize from the original second-order system if the equations form
  a system of coupled wave equations, while this property is
  much less evident in the first-order formulation. As the
  discussion in \cite{Rendall00} demonstrates, it can be
  cumbersome to formulate a system of coupled non-linear wave
  equations as a first-order hyperbolic Fuchsian system satisfying all the properties required for local well-posedness.

\item {\bf Singular part.} We also introduce here a construction algorithm which includes 
the singular part of the solution. This is different from the classical first-order
  Fuchsian approach, where one first makes an ansatz by removing from the solution 
  its (expected) singular part, and then
  applies the Fuchsian theory to the (regular) remainder. Our approach leads to the notion of the singular 
  initial value problem for Fuchsian hyperbolic equations which can be understood as a generalization 
  of the initial value problem for (standard) hyperbolic equations.  
  This singular initial value problem covers cases whose singularity
  is oscillatory in nature, in a manner consistent with the so-called BKL conjecture (introduced by 
  Belinsky, Khalatnikov, and Lifshitz). 
  
\end{itemize}

We shall, first, define the class of systems of interest, that is, the second-order (hyperbolic) Fuchsian systems
and, then, investigate the associated initial value problem when data are imposed on the singularity. 
Precisely, in Section~2, we analyze a class of equations in which hyperbolicity is not assumed 
and we construct asymptotic solutions of arbitrary order. In Section~3, we treat
the proposed class of second-order hyperbolic Fuchsian systems, and 
establish the existence of solutions with prescribed asymptotic behavior on the singularity. 

In \cite{BeyerLeFloch2}, we will apply our Fuchsian framework and treat the class of Gowdy spacetimes, and this 
will allow us to recover (and slightly generalize) earlier results by Rendall and collaborators, 
while providing a more direct approach. Although conceptually similar, the proposed analysis based on 
second-order equations lead to a simpler description and provides a definite advantage for the applications. 
Moreover, as we will demonstrate in \cite{BeyerLeFloch2}.  
the approach introduced in the present paper can be cast into 
a discretization scheme and allows us to numerically and accurately compute solutions to the
initial value problem. Our theory also turns out to be robust enough to extend to matter models, 
especially to the Einstein-Euler equations \cite{BLSS, LeFlochRendall,LeFlochStewart,LeFlochStewart2}, 
as discussed in \cite{BeyerLeFloch3}.


\section{Second-order Fuchsian systems}
\label{sec:firstorderFuchsian}

\subsection{Terminology and objectives}

In this section, we rely mainly on techniques for ordinary
differential equations (ODE's).  The main theory of interest developed
in the next section (Section~3) will require an hyperbolicity assumption
which is not yet made at this stage.  The main purpose of this section
is to present some important terminology and concepts within the
simple framework of ODE's, and to point out some difficulty arising with
singular equations.

First of all, $t \geq 0$ denoting the time variable, the operator 
\[
D:=t\del_t
\] 
will often be used, rather than the partial derivative
$\del_t$. Indeed, the weight $t$ is convenient to handle asymptotic
expansions near the singularity $t=0$.  Occasionally, we write $D_t$
instead of $D$, especially when several time variables are involved.

\begin{definition}[Second-order Fuchsian systems]
\label{def:Fuchsian}
A second-order Fuchsian system is a system of partial
differential equations of the form
\be
    \label{eq:secondorderFuchsian}
    D^2 u(t,x)+2A(x) \, D u(t,x)+B(x) \,  u(t,x) 
    = f[u](t,x) 
\ee
with unknown function $u:(0,\delta]\times U\rightarrow \R^n$ (for some
$\delta>0$ and interval $U\subset\R$), where the coefficients $A=A(x)$
and $B=B(x)$ are diagonal,
 $n\times n$ matrix-valued maps defined on $U$,
 and the source-term $f=f[u](t,x)$ is an 
$n$-vector-valued map of the form 
\be
  \label{eq:inhomog}
  f[u](t,x):=f(t, x, u, Du, \del_x u, \del_x Du,\ldots,
  \del_x^k u, \del_x^k Du),
\ee
for some integer $k \geq 0$.
\end{definition}
 
We assume that the coefficients $A$ and $B$ do not depend on $t$ and
we derive our results under this assumption. The generalization to
time-dependent coefficients does not bring essential difficulties.
The assumption that the matrices $A$, $B$ are diagonal is not a
genuine restriction, if the system with arbitrary matrices $A$, $B$
can be recast in diagonal form, i.e.\ if $A$, $B$ admit a common basis
of eigenvectors. Under this condition 
(which is always satisfied in the applications of interest in general relativity), the system is 
``essentially decoupled'' since the coupling takes place in
terms of non-leading order, only.

We denote the eigenvalues
of $A$ and $B$ by  $a^{(1)},\ldots,a^{(n)}$ and
$b^{(1)},\ldots,b^{(n)}$, respectively. When it is not necessary to
specify the superscripts, we just write $a, b$ to denote any eigenvalues of $A,B$. 
With this convention, we introduce: 
\be
  \label{eq:deflambda2}
    \lambda_{1}:=a+\sqrt{a^2-b},\quad \lambda_{2}:=a-\sqrt{a^2-b}.
\ee
It will turn out that these coefficients describe the expected
behavior at $t=0$ of general solutions to \Eqref{eq:secondorderFuchsian}. 

In \Defref{def:Fuchsian}, the assumption that $U$ is a one-dimensional
domain makes the presentation simpler, but most results below remain
valid for arbitrary spatial dimensions.  For definiteness and without
much loss of generality, we assume throughout this paper that all
functions under consideration are periodic in the spatial variable $x$
and that $U$ is the periodicity domain. All data and solutions are
extended by periodicity outside the interval $U$.

The left-hand side of \eqref{eq:secondorderFuchsian} is referred to as
the principal part of the system. The reason for incorporating certain
lower derivative terms in the principal part is that we expect these
terms to be of the same leading-order at the singularity $t=0$. In
contrast, the source-term is anticipated as negligible in some sense
there, see below. Observe that, at this level of generality, there is
some freedom in bringing terms from the principal part to the
right-hand side, and absorbing them into the source-function $f$ (or
vice-versa). This freedom has several (interesting) consequences,
as we will discuss later on: roughly speaking, some normalization will
be necessary later, yet at this stage, we do not fix the behavior of
$f$ at $t=0$.

We are mainly interested in solving a {\sl singular} initial value
problem associated with \Eqref{eq:secondorderFuchsian}, with data
prescribed on the singularity $t=0$, in a sense made precise later on.
The fundamental question is, of course, to determine conditions on the
data and coefficients ensuring existence and uniqueness of a solution
$u$.  It will turn out that the behavior at $t=0$ cannot be prescribed
arbitrarily, but is tight to the value of the coefficients
$\lambda_{1}$ and $\lambda_{2}$ defined in \eqref{eq:deflambda2}.
Indeed, we will specify the behavior of solutions at $t=0$, in terms
of freely specifiable functions (the data on the singularity), and
derive an asymptotic expression of arbitrary order providing the
asymptotic form of general solutions.


\subsection{The case of linear ODE's depending on
  \texorpdfstring{$x$}{x} as a parameter}
\label{sec:linearODEcase}

\subsubsection*{Explicit formula}

We begin our investigation of the second-order Fuchsian systems
\Eqref{eq:secondorderFuchsian} by treating the case $f=w(t,x)$ for
some given function $w$.  We are led to consider a family of scalar
ordinary differential equations that are completely independent from
each other; recall that the matrices $A$ and $B$ of the principal part
are diagonal.  Without loss of generality, we thus assume that
$n=1$ throughout the present section.  In turn, the spatial variable
$x$ is treated as a {\sl parameter} which we do not need to
write explicitly yet.  As we will see, it is instructive to express
the general solution in this elementary case, as we now do.

Consider the following singular, inhomogeneous, singular ordinary differential equation
\be
    \label{eq:inhomEq}
    D^2 u(t)+2a\,D u(t)+b\, u(t)= w(t) 
\ee
with unknown $u=u(t)$, where $w=w(t)$ is a given locally integrable function. Further
integrability of $w$ near $t=0$ will be imposed shortly below. 
Recall that $a$ and $b$ are constant in $t$ and, 
$\lambda_1, \lambda_2$ were defined in \Eqref{eq:deflambda2}.  We
begin with a formal result and the convergence of the integral terms
will be discussed rigorously later.

  \begin{proposition}[Linear second-order Fuchsian ODE. Formal version]
    \label{prop:explicitexpr} 
   General solutions of the inhomogeneous 
  singular ordinary differential equation \Eqref{eq:inhomEq} are given by 
\be
    \label{eq:inhomSol}
    u(t)=
    \begin{cases}
      u_*\,t^{-a}\,\ln t+u_{**}\,t^{-a}
      + \displaystyle \int_1^{\infty}w(t/\zeta)\zeta^{-a-1}\ln\zeta\, d\zeta, 
      & \quad a^2=b,\\
      u_*\,t^{-\lambda_1}+u_{**}\,t^{-\lambda_2} & \\
      \quad+\frac{1}{\lambda_1-\lambda_2}
      \displaystyle  \int_1^{\infty}w(t/\zeta)
      \left(\zeta^{-\lambda_2-1}-\zeta^{-\lambda_1-1}\right)      
      d\zeta,
      & \quad a^2 \neq b, 
    \end{cases}
\ee
in which $u_*$ and $u_{**}$ are prescribed data. 
Alternatively, one can write \eqref{eq:inhomSol} in the form 
\[  
    u(t)=
    \begin{cases}
      u_*\,t^{-a}\,\ln t+u_{**}\,t^{-a}
      + t^{-a}\displaystyle \int_0^{t}w(s)s^{a-1}\ln\frac ts\, ds, 
      & \quad a^2=b,\\
      u_*\,t^{-\lambda_1}+u_{**}\,t^{-\lambda_2} & \\
      \quad+\frac{1}{\lambda_1-\lambda_2}
      \displaystyle\left( 
        t^{-\lambda_2}\int_0^{t}w(s)s^{\lambda_2-1}ds
        -t^{-\lambda_1}\int_0^{t}w(s)s^{\lambda_1-1}\right),
      & \quad a^2 \neq b.
    \end{cases}
\]
\end{proposition}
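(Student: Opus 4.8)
The plan is to reduce the singular equation \Eqref{eq:inhomEq} to a linear ODE with \emph{constant} coefficients by passing to the logarithmic time $\tau:=\ln t$, to solve the latter by the classical variation-of-parameters (Duhamel) formula, and then to transform the result back to the variable $t$; everything is carried out at the formal level, the convergence of the integrals being postponed to the later, non-formal version. Since $D=t\del_t$ becomes $\del_\tau$ under $\tau=\ln t$ (with $t\to 0^+$ corresponding to $\tau\to-\infty$), setting $v(\tau):=u(e^\tau)$ and $\widetilde w(\tau):=w(e^\tau)$ turns \Eqref{eq:inhomEq} into $v''+2a\,v'+b\,v=\widetilde w$. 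Its characteristic equation $\mu^2+2a\mu+b=0$ has roots $\mu=-a\pm\sqrt{a^2-b}$, that is $\mu=-\lambda_2$ and $\mu=-\lambda_1$ with $\lambda_1,\lambda_2$ as in \Eqref{eq:deflambda2}; in particular $t^{-\lambda_1},t^{-\lambda_2}$ (resp.\ $t^{-a}$ and $t^{-a}\ln t$ when $a^2=b$) form a fundamental system for the homogeneous version of \Eqref{eq:inhomEq}.

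Next I would write a particular solution of the $v$-equation as $v_p(\tau)=\int^\tau K(\tau-\sigma)\,\widetilde w(\sigma)\,d\sigma$, where $K$ is the elementary (impulse-response) solution: $K(\theta)=\theta e^{-a\theta}$ when $a^2=b$, and $K(\theta)=(\lambda_1-\lambda_2)^{-1}(e^{-\lambda_2\theta}-e^{-\lambda_1\theta})$ when $a^2\neq b$, using that the difference of the characteristic roots equals $\lambda_1-\lambda_2$. Undoing the substitution via $\sigma=\ln s$, $d\sigma=ds/s$, $e^{-\lambda_j(\tau-\sigma)}=(t/s)^{-\lambda_j}$ and $\tau-\sigma=\ln(t/s)$, transforms $v_p$ into exactly the ``alternative'' displayed expression in the statement, with integrals $\int_0^t$; the lower limit $t=0$ corresponds to sending $\tau\to-\infty$, and changing it would only add a homogeneous solution. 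Adding an arbitrary linear combination of the homogeneous solutions — which accounts for the freely prescribable constants $u_*,u_{**}$ — and invoking linearity (two solutions of \Eqref{eq:inhomEq} differ by a solution of the homogeneous equation, whose solution space is two-dimensional) then yields all solutions.

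To pass from the alternative form to the first displayed form, I would substitute $s=t/\zeta$ (so $\zeta=t/s$, $ds=-t\zeta^{-2}\,d\zeta$, with $s=0\leftrightarrow\zeta=\infty$ and $s=t\leftrightarrow\zeta=1$), which turns $t^{-\lambda_j}\int_0^t w(s)s^{\lambda_j-1}\,ds$ into $\int_1^\infty w(t/\zeta)\zeta^{-\lambda_j-1}\,d\zeta$, and similarly in the logarithmic case; collecting terms gives precisely \Eqref{eq:inhomSol}. As an alternative to the change of variables, one may check the formulas directly: writing $I_j(t):=\int_0^t w(s)s^{\lambda_j-1}\,ds$ so that $DI_j=w(t)\,t^{\lambda_j}$, a short computation using the identity $\lambda_j^2-2a\lambda_j+b=0$ shows that the candidate $u_p$ satisfies $D^2u_p+2a\,Du_p+b\,u_p=w$, the degenerate case being handled analogously from $D(t^{-a}\ln t)$ and $D(t^{-a})$.

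The main difficulty is, honestly, minor at this formal stage: the one genuinely error-prone point is the bookkeeping relating the characteristic roots $\mu=-\lambda_2,-\lambda_1$ of the reduced equation to $\lambda_1,\lambda_2$, hence keeping straight the signs of the exponents and of the prefactor $1/(\lambda_1-\lambda_2)$, and correctly distinguishing the two cases $a^2=b$ and $a^2\neq b$. The substantive analytic issues — convergence of the $\zeta$- and $s$-integrals near $t=0$ under a suitable integrability hypothesis on $w$, and the precise sense in which \Eqref{eq:inhomSol} represents \emph{all} solutions — are exactly what the word ``formal'' in the statement defers, and will be addressed in the rigorous version.
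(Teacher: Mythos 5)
Your proposal is correct and follows essentially the same route as the paper: reduce to a constant-coefficient ODE in logarithmic time, solve the homogeneous equation via the characteristic roots, construct a particular solution by Duhamel/variation of parameters, and transform back (the paper uses $\eta=-\ln t$ where you use $\tau=\ln t$, which is only a sign convention, and your impulse-response kernel matches the paper's $\widetilde u$ with data $\widetilde u(0)=0$, $\widetilde u'(0)=-1$ up to that reflection). The sign bookkeeping you flag as the delicate point indeed checks out: the roots $-\lambda_1,-\lambda_2$ of $\mu^2+2a\mu+b=0$ give exactly $t^{-\lambda_1},t^{-\lambda_2}$, and your kernel normalization $K(0)=0$, $K'(0)=1$ reproduces the prefactor $1/(\lambda_1-\lambda_2)$.
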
 

\begin{proof} In the rescaled time variable $\eta:=-\ln t$, 
equation \Eqref{eq:inhomEq} has constant coefficients, indeed 
\be
\label{eq:inhomEqeta}
\widehatu''(\eta)-2a\,\widehatu'(\eta) + b\, \widehatu(\eta) = \widehat w(\eta), \ee
where $\widehatu(\eta):=u(e^{-\eta})$ and $\widehat w(\eta):=w(e^{-\eta})$
and the prime $'$ denotes a derivative with respect to $\eta$.  This is nothing but 
a linear harmonic oscillator equation with friction term $-2a\,
\widehatu'$ and forcing term $\widehat w$. The singularity of
\Eqref{eq:inhomEq} at $t=0$ corresponds to the singularity at infinity $\eta=\infty$.

First, we seek for general solutions of the homogeneous equation for
$\widehat w\equiv 0$. From the ansatz $\widehatu = e^{\lambda\eta}$ we
obviously get the roots defined in \eqref{eq:deflambda2}.  The general
solution of the homogeneous equation is thus
\[
\widehatu(\eta)=
\begin{cases}
  -u_*\,\eta\,e^{a\eta}+u_{**}\,e^{a\eta}, & \quad a^2=b,\\
  u_*\,e^{\lambda_1\eta}+u_{**}\,e^{\lambda_2\eta}, & \quad a^2 \neq b,
\end{cases}
\]
where $u_*$ and $u_{**}$ are constants with respect to $\eta$ (and the
negative sign is chosen for convenience in the following discussion).

Particular solutions of the general inhomogeneous equation
\Eqref{eq:inhomEqeta} are easily constructed by the Duhamel
principle. Let $\widetilde u := \widetilde u(\eta)$ be the solution of
the homogeneous equation for vanishing data $\widetilde u(0)=0$ and
${\widetilde u}'(0)=-1$. For any given $\widehat w$, the function
\[\widehatu(\eta)=\int_{-\infty}^0 \widetilde u(\tau) \, \widehat w(\eta-\tau) \, d\tau
\] 
is a particular solution of the inhomogeneous equation.  (This is true
only formally at this stage, since we have not yet checked under which
conditions the integral exists and can be differentiated.) Hence, the
general solution $u$ of \Eqref{eq:inhomEqeta} reads
\[\widehatu(\eta)=
\begin{cases}
  - u_*\,\eta\,e^{a\eta}+u_{**}\,e^{a\eta}
  -\int_{-\infty}^0 \tau e^{a\tau}\widehat w(\eta-\tau)d\tau, 
  & \quad a^2=b,\\
  u_*\,e^{\lambda_1\eta}+u_{**}\,e^{\lambda_2\eta} & \\
  \quad-\frac{1}{\lambda_1-\lambda_2}
  \int_{-\infty}^0 \left(e^{\lambda_1\tau}-e^{\lambda_2\tau}\right)
  \widehat w(\eta-\tau)d\tau, 
  &  \quad a^2 \neq b.
\end{cases}
\] 
  
Returning to the original time $t>0$, we find
\begin{align*}
  u(t)&=
  \begin{cases}
    u_*\,t^{-a}\,\ln t+u_{**}\,t^{-a}
    -\int_{\infty}^1 (-\ln\zeta) \zeta^{-a}\widehat w(\ln(\zeta/t))
    (-1/\zeta)d\zeta, 
    & \quad a^2=b,\\
    u_*\,t^{-\lambda_1}+u_{**}\,t^{-\lambda_2} & \\
    \quad\quad-\frac{1}{\lambda_1-\lambda_2}
    \int_{\infty}^1\left(\zeta^{-\lambda_1}-\zeta^{-\lambda_2}\right)
    \widehat w(\ln(\zeta/t))
    (-1/\zeta)d\zeta,
    & \quad a^2 \neq b, 
  \end{cases}
\end{align*}
and by substituting $\tau=-\ln\zeta$, this concludes the proof. 
\end{proof}

\subsubsection*{The spatial coordinate \texorpdfstring{$x$}{x} as a parameter}
For the later discussion, it is convenient to write the spatial
variable $x$ explicitly as a parameter now.  Define
$\Gamma(x):=\sqrt{a(x)^2-b(x)}$ which might be real or imaginary
dependent on the values of the coefficients. If there are points
$x_0\in U$ so that $\Gamma(x_0)=0$ and other points $x_1\in U$ with
$\Gamma(x_1)\not=0$, then we will renormalize the coefficients
$u_*(x)$ and $u_{**}(x)$ in \Eqref{eq:inhomSol} as follows. In order
to obtain a continuous transition from the non-degenerate case
$\Gamma\not=0$ to the degenerate case $\Gamma=0$, let us first rename
the coefficient functions for the case $a^2\not=b$ in
\Eqref{eq:inhomSol} to $\hat u_*$ and $\hat u_{**}$. Now if we set
\begin{equation}
  \label{eq:transitiondata}
  \hat u_*(x)=\frac{u_*(x)-u_{**}(x)/\Gamma(x)}2,\quad
  \hat u_{**}(x)=\frac{u_*(x)+u_{**}(x)/\Gamma(x)}2,
\end{equation} 
and choose $u_*(x)$, $u_{**}(x)$ to be, say, continuous functions,
then the function determined by the two leading terms in
\Eqref{eq:intuitiveexpansion} is continuous in $x$ for all $t>0$ even
at $x=x_0$, provided $\Gamma$ is continuous. Indeed the full general
solution $u(t,x)$ of \Eqref{eq:inhomEq} is continuous in $x$ for all
$t>0$ in this case.

In view of
Proposition~\ref{prop:explicitexpr} it is natural to define the
solution operator $H$ associated with a source function $w=w(t,x)$ by
\begin{equation}
  \label{eq:defH}
  (H[w])(t,x):=\begin{cases}
    \displaystyle t^{-a(x)} \int_0^{t}w(s,x)s^{a(x)-1}\ln\frac ts\, ds, 
    & \quad (a(x))^2 = b(x),\\
    \displaystyle\frac{1}{\lambda_1(x)-\lambda_2(x)}
    \Bigl( 
    t^{-\lambda_2(x)}\int_0^{t}w(s)s^{\lambda_2(x)-1}ds & \\ 
    \qquad\qquad\qquad
    \displaystyle-t^{-\lambda_1(x)}\int_0^{t}w(s,x)s^{\lambda_1(x)-1}\Bigr), &
    \quad (a(x))^2 \neq b(x). 
  \end{cases}
\end{equation}
It represents the solution of \Eqref{eq:inhomEq} for the choice
$u_*=u_{**}=0$, at least on the formal level so far. According to the
previous discussion, the non-degenerate case $a^2\not=b$ in the
definition converges to the degenerate case $a^2=b$ at degenerate
points continuously, provided the coefficients are continuous, and
vice versa.

Fixing some $\delta>0$, we now assume that the source $w$ in
\eqref{eq:inhomEq} belongs to $C^{l\times m}((0,\delta]\times U)$,
that is, $w$ is $l$-times continuously differentiable with respect $t$
and $m$-times continuously differentiable with respect to $x$ on
$(0,\delta]\times U \subset \RR^2$. Here, $l$ and $m$ are non-negative
integers. Moreover, we assume that the coefficients $a$ and $b$ of the
equation are $C^m(U)$. In this case, the general theory of ordinary
differential equations implies that the solution $u(t,x)$ of
\Eqref{eq:inhomEq} depends as a $C^m$ function on $x$. Hence, if
$H[w]$ rigorously represents a particular solution of
\Eqref{eq:inhomEq}, then i) the function $H[w](t,x)$ is in $C^m(U)$
with respect to $x$ for each $t>0$, ii) we can take the spatial
derivatives under the integral, iii) each spatial derivative of
$H[w](t,x)$ converges from the non-degenerate to the degenerate case at
degenerate points as a continuous function, and vice versa.


\subsubsection*{Behavior near the singular time}

Now we go beyond a formal derivation and determine precise conditions
on $w$ under which the integrals in \Eqref{eq:inhomSol} make sense and
\eqref{eq:inhomSol} provides actual solutions of \Eqref{eq:inhomEq}. We use here the notation 
$\Re$ for the real part of a complex number. 

\begin{proposition}[Pointwise properties of the solution operator $H$]
  \label{prop:behaviorH}
Fix some $\delta>0$, a compact set $K\subset U$, and $l, m \geq 0$, 
and let $w$ be a function in $C^{l\times m}((0,\delta]\times U)$, 
and $a,b \in C^m(U)$.  
In addition, suppose that $w$ satisfies the following asymptotic conditions: 
there exists a constant $\alpha$ such that 
  \[ 
    \alpha>-\Re\lambda_2(x), \qquad  x \in K  
  \]  
and, for all $0\le p\le l$, $0\le q\le m$, 
  \[ 
    \sup_K \big| D^p\del_x^q w(t,\cdot) \big| = O(t^{\alpha}).
  \]
  Then, the operator $H$ given by \Eqref{eq:defH} is well-defined and,
  if $l\ge 2$, it provides a particular classical solution of
  \Eqref{eq:inhomEq}.  Moreover, we have $H[w]\in C^{l\times
    m}((0,\delta]\times K)$ with, for all sufficiently small $\eps>0$,
  $$
\sup_K \big| D^p\del_x^q H[w](t,\cdot) \big| = O(t^{\alpha- \eps}).
  $$
  In addition, for $1
  \leq p \leq l$ one has
  \be
  \label{eq:expressionsH}
  (D_t^{p} H[w])(t,x)=
  \int_0^{t}D_s^{p-1}w(s,x)s^{a(x)-1}
  \left(a(x)\ln\frac st+1\right)\, ds
  \ee
  when $a^2(x)=b(x)$ and, otherwise, 
   \be
  \label{eq:expressionsH2}
  \begin{split}
    (D_t^{p} H[w])(t,x)= 
    \frac{1}{\lambda_1(x)-\lambda_2(x)}\Biggl(
   & -\lambda_2(x)t^{-\lambda_2(x)}
    \int_0^{t}D_s^{p-1} w(s,x)
    s^{\lambda_2(x)-1}ds \\
    &+\lambda_1(x)t^{-\lambda_1(x)}
    \int_0^{t}D_s^{p-1} w(s,x)
    s^{\lambda_1(x)-1}ds \Biggr).
  \end{split}
  \ee 
\end{proposition}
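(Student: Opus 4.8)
The plan is to remove the bifurcation between the degenerate and non-degenerate cases by absorbing $t$ into the integration variable: substituting $s=t\sigma$ with $\sigma\in(0,1]$ turns both branches of \eqref{eq:defH} into the single expression
\[
 H[w](t,x)=\int_0^1 w(t\sigma,x)\,K(\sigma,x)\,d\sigma, \qquad
 K(\sigma,x):=\frac{\sigma^{\lambda_2(x)-1}-\sigma^{\lambda_1(x)-1}}{\lambda_1(x)-\lambda_2(x)},
\]
with the convention $K(\sigma,x)=\sigma^{a(x)-1}\ln(1/\sigma)$ wherever $a(x)^2=b(x)$. The central step, in my view, is to control $K$ as a function of $x$ \emph{uniformly over $K$ and across the degenerate set} $\{a^2=b\}$ on which $\lambda_1-\lambda_2$ vanishes. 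With $L:=\ln(1/\sigma)\ge0$ and $\Gamma(x):=\sqrt{a(x)^2-b(x)}$ one has
\[
 K(\sigma,x)=\sigma^{a(x)-1}\,L\,\mathrm{sinhc}\bigl(\Gamma(x)L\bigr),\qquad \mathrm{sinhc}(z):=\frac{\sinh z}{z}=\sum_{k\ge0}\frac{z^{2k}}{(2k+1)!},
\]
and since $\mathrm{sinhc}$ is even and entire it depends on $\Gamma(x)$ only through $\Gamma(x)^2=a(x)^2-b(x)\in C^m(U)$. Consequently $K(\sigma,\cdot)\in C^m(U)$ for each $\sigma$, each $\del_x^q K$ ($0\le q\le m$) is jointly continuous on $(0,1]\times U$, the continuous transition between the two branches of \eqref{eq:defH} is now automatic, and, using $\sigma^{a(x)-1}\,|\mathrm{sinhc}(\Gamma(x)L)|\le C\,\sigma^{\Re\lambda_2(x)-1}$ (the real case $a^2\ge b$ via $\mathrm{sinhc}(\Gamma L)\le e^{\Gamma L}=\sigma^{-\Gamma}$, the complex case via $|\mathrm{sinhc}|\le1$ and $\Re\lambda_2=a$) together with $\Re\lambda_2\le\Re\lambda_1$, one obtains for every $\eps>0$ and $0\le q\le m$ a constant $C=C(\eps,q,K)$ with
\[
 \bigl|\del_x^q K(\sigma,x)\bigr|\le C\,\sigma^{\Re\lambda_2(x)-1-\eps},\qquad \sigma\in(0,1],\ x\in K,
\]
the factor $\sigma^{-\eps}$ absorbing the finitely many powers of $L$ generated by $\mathrm{sinhc}$ and by the $x$-differentiations. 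In particular $H[w]$ is well-defined.

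The remaining assertions follow by differentiating this representation. From the hypotheses and continuity, $|D^p\del_x^q w(t,x)|\le C_0\,t^\alpha$ on $(0,\delta]\times K$ for $0\le p\le l$, $0\le q\le m$; combined with the kernel bound and with the fact that $\alpha+\Re\lambda_2(\cdot)$ is continuous, hence bounded below by some $c_0>0$ on the compact set $K$ (by the strict hypothesis $\alpha>-\Re\lambda_2$), this furnishes dominating functions of the form $t^\alpha\sigma^{\alpha+\Re\lambda_2(x)-1-\eps}$ (with $\eps<c_0$), integrable in $\sigma$. Differentiation under the integral sign is therefore justified, and, using $D_t\bigl(w(t\sigma,x)\bigr)=(Dw)(t\sigma,x)$ (since $D_t=t\del_t$) and the Leibniz rule, it yields $H[w]\in C^{l\times m}((0,\delta]\times K)$, the formula
\[
 D_t^{p}\del_x^{q}H[w](t,x)=\sum_{j=0}^{q}\binom{q}{j}\int_0^1 (D^{p}\del_x^{q-j}w)(t\sigma,x)\,\del_x^{j}K(\sigma,x)\,d\sigma,
\]
and, after extracting the factor $t^\alpha$, the bound $\sup_K|D^p\del_x^q H[w](t,\cdot)|=O(t^\alpha)$ — a fortiori the claimed $O(t^{\alpha-\eps})$ — for $0\le p\le l$, $0\le q\le m$. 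Taking $q=0$ and $p=1$ in this formula gives the commutation $D_tH[w]=H[Dw]$, whence $D_t^pH[w]=H[D^pw]$ for $0\le p\le l$.

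For the explicit formulas, for $1\le p\le l$ write $D_t^pH[w]=H[D(D^{p-1}w)]$ and integrate by parts once in the inner $s$-integrals of \eqref{eq:defH}: with $g:=D^{p-1}w$ and any exponent $c$,
\[
 \int_0^t (Dg)(s,x)\,s^{c-1}\,ds=\int_0^t \del_s g(s,x)\,s^{c}\,ds= g(t,x)\,t^{c}-c\int_0^t g(s,x)\,s^{c-1}\,ds,
\]
the endpoint at $s=0$ vanishing because $g(s,x)=O(s^\alpha)$ with $\alpha+\Re c>0$. In the non-degenerate case the resulting endpoint terms at $s=t$ cancel in pairs and \eqref{eq:expressionsH2} follows; in the degenerate case one integrates by parts instead in $\int_0^t (Dg)(s,x)s^{a-1}\ln(t/s)\,ds$, the endpoint at $s=t$ now vanishing because of the factor $\ln(t/t)=0$, and \eqref{eq:expressionsH} follows similarly. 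Finally, if $l\ge2$ then $H[w]$ is $C^2$ in $t$ on $(0,\delta]$, so the formal computation of \Propref{prop:explicitexpr} is now rigorous; equivalently, differentiating the representation twice and using that $\lambda_1(x),\lambda_2(x)$ are the roots of $\lambda^2-2a(x)\lambda+b(x)=0$ makes all non-source terms cancel, leaving $D^2H[w]+2a\,DH[w]+b\,H[w]=w$, with the right-hand side $w$ produced exactly by the endpoint $s=t$; thus $H[w]$ is a particular classical solution of \eqref{eq:inhomEq}. I expect the genuine difficulty to lie entirely in the first paragraph — getting regularity in $x$ together with sharp decay in $\sigma$ of the kernel \emph{simultaneously} on the non-degenerate part of $K$ and on $\{a^2=b\}$, not merely away from it. The $\mathrm{sinhc}$ identity (equivalently, the renormalisation \eqref{eq:transitiondata}) is what makes the apparent $1/(\lambda_1-\lambda_2)$ singularity removable and the transition between the two branches of \eqref{eq:defH} transparent; once the kernel is under control, everything downstream is routine.
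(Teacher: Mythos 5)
Your proof is correct, and it takes a genuinely different route from the paper's. The paper's own proof works directly with the two-branch representation \eqref{eq:inhomSol}--\eqref{eq:defH}: it differentiates under the integral sign in the $\zeta$-variable, integrates by parts to obtain \eqref{eq:expressionsH2}, treats only the non-degenerate case explicitly, and disposes of the degenerate case and of the transition between the two branches by ``appropriate limiting procedures''; the well-definedness, the $C^{l\times m}$ regularity and the decay rate are declared ``quite direct'' and not written out. Your substitution $s=t\sigma$ together with the identity $K(\sigma,x)=\sigma^{a(x)-1}L\,\mathrm{sinhc}(\Gamma(x)L)$ buys two things the paper's argument leaves implicit: first, since $\mathrm{sinhc}$ is even and entire, the kernel depends on $x$ only through $a$ and $a^2-b$, both in $C^m(U)$, so the degenerate set $\{a^2=b\}$ needs no separate treatment and no limiting argument --- this is precisely the content of the renormalization \eqref{eq:transitiondata}, now built into the kernel once and for all; second, because the kernel depends only on $\sigma=s/t$, every logarithm produced by $x$-differentiation is a power of $\ln(1/\sigma)$ on the fixed range $\sigma\in(0,1]$, integrable against $\sigma^{\alpha+\Re\lambda_2(x)-1}$ uniformly on the compact set, so you actually obtain the sharper decay $O(t^{\alpha})$ with no $\eps$-loss (the paper's remark after the proposition attributes the loss to spatially dependent powers of $t$, but in the combined variable $s/t$ those logarithms recombine into $\ln(s/t)$; your stronger conclusion implies the stated one). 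The integration by parts yielding \eqref{eq:expressionsH2}, the pairwise cancellation of the $s=t$ endpoint terms, and the verification $D^2H[w]+2a\,DH[w]+b\,H[w]=w$ via the characteristic polynomial coincide in substance with the paper's computation. One small point worth recording: your degenerate-case integration by parts produces $t^{-a(x)}\int_0^t D_s^{p-1}w\,s^{a(x)-1}\bigl(a(x)\ln\frac st+1\bigr)\,ds$, and the confluent limit $\lambda_1,\lambda_2\to a$ of \eqref{eq:expressionsH2} carries the same prefactor, so \eqref{eq:expressionsH} as printed is missing a factor $t^{-a(x)}$; this is a typo in the statement, not a gap in your argument.
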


We note that we are allowed to choose $\epsilon=0$ in the previous
proposition only if $a(x)$ and $b(x)$ are constants in space or for
$q=0$. The constant $\epsilon>0$ is necessary in order to control
logarithms which arise when spatial derivatives are taken of functions
involving spatially dependent powers of $t$.

\begin{proof} The proof is quite direct and we only show the
  derivation of the formulas \eqref{eq:expressionsH} and
  \eqref{eq:expressionsH2}. For definiteness, we treat the case
  $a^2(x)\not=b(x)$ only, since the other case is treated
  similarly and the transition between the two cases can be obtained
  by appropriate limiting procedures. Consider the expression for $H$
  in \Eqref{eq:inhomSol}. Under our regularity assumptions, all
  derivatives are calculated by differentiation under the integral
  sign, and we obtain
  \[
  (D^p H[w])(t,x)
  = \frac{1}{\lambda_1-\lambda_2}
  \int_1^{\infty}D_t^p w(t/\zeta,x)
  \left(\zeta^{-\lambda_2-1}-\zeta^{-\lambda_1-1}\right)      
  d\zeta.
  \]
  As done earlier, we introduce the new variable $s:=t/\zeta$ and
  observe that $D_t=t\del_t=s\del_s=:D_s$ for fixed $\zeta$. Hence, we
  obtain
  \begin{align*}
    &(D^p H[w])(t,x)
    = \frac{1}{\lambda_1-\lambda_2}
    \int_0^{t}D_s^p w(s,x)
    \left(\zeta^{-\lambda_2-1}-\zeta^{-\lambda_1-1}\right)\frac{\zeta}s
    ds\\
    =& \frac{1}{\lambda_1-\lambda_2}
    \int_0^{t}\del_s(D_s^{p-1} w)(s,x)
    \left(t^{-\lambda_2}s^{\lambda_2}-t^{-\lambda_1}s^{\lambda_1}\right)
    ds\\
    =& \frac{1}{\lambda_1-\lambda_2}\Biggl(
    t^{-\lambda_2}\left(\left.D_s^{p-1} w(s,x)
    s^{\lambda_2}\right|_0^t-\lambda_2\int_0^{t}D_s^{p-1} w(s,x)
    s^{\lambda_2-1}ds\right)\\
    & \qquad \qquad -t^{-\lambda_1}\left(\left.D_s^{p-1} w(s,x)
    s^{\lambda_1}\right|_0^t-\lambda_1\int_0^{t}D_s^{p-1} w(s,x)
    s^{\lambda_1-1}ds\right)
    \Biggr),
  \end{align*}
  where we used integration by parts. Now, in view of our regularity
  assumptions, we conclude that all terms here have a limit when $t
  \to 0$.  All terms, except for the main integrals, either vanish or
  cancel.
\end{proof}


\subsection{Asymptotic solutions of second-order Fuchsian systems}
\label{sec:asymptoticsolutions}

\subsubsection*{Asymptotic data and canonical expansions}

We now return to the non-linear problem
\Eqref{eq:secondorderFuchsian}.  We are going to identify the
``canonical'' asymptotic behavior (at $t=0$) of general solutions to
second-order Fuchsian equations using first heuristic arguments, and
we determine a ``canonical'' expansion.  Such an expansion involves
certain free functions, interpreted as ``data on the singularity'' and
allows us to formulate a singular initial value problem.

The basic understanding of the behavior of Gowdy solutions at $t=0$ is
obtained from the BKL conjecture, and the idea (which we explain in
further detail in the second paper \cite{BeyerLeFloch2}) is to neglect
spatial derivatives in the evolution equations while solving the
remaining ordinary differential equations at each spatial point
$x$. This approach allows to identify the leading order terms in the
expansion of the solution at $t=0$.  Indeed, the Gowdy equations turn
out to be of second-order Fuchsian form, as shown in the second paper.
This suggests that we use similar arguments for the derivation of
solutions of general second-order Fuchsian equations.

According to the above heuristics, the behavior of general solutions
to \Eqref{eq:secondorderFuchsian} should be driven by the principal
part of the PDE's with coefficients evaluated at $t=0$ if the
source-term satisfies certain ``decay properties'' at $t=0$, as we
will discuss later.  More precisely, the two leading terms in the
expansion of solutions at $t=0$ should be determined by the
homogeneous equations obtained by setting the right-hand side $f$ in
\Eqref{eq:secondorderFuchsian} to zero. According to the discussion
in \Sectionref{sec:linearODEcase}, any solution $u$ to the
second-order Fuchsian equations \Eqref{eq:secondorderFuchsian} should
hence have an expansion of the form
\be
  \label{eq:intuitiveexpansion}
  u^{(i)}(t,x)=
  \begin{cases}
    u_*^{(i)}(x)\,t^{-a^{(i)}(x)}\ln
    t +u_{**}^{(i)}(x)\,t^{-a^{(i)}(x)}+O(t^{-a^{(i)}(x)+\alpha^{(i)}}), & 
 \\   
 & \hskip-2.cm (a^{(i)}(x))^2 = b^{(i)}(x), 
 \\
    & \\
    u_*^{(i)}(x)\,t^{-\lambda_1^{(i)}(x)} 
    +u_{**}^{(i)}(x)\,t^{-\lambda_2^{(i)}(x)}
    +O(t^{-\Re\lambda_2^{(i)}(x)+\alpha^{(i)}}),
    \\
     & \hskip-2.cm  (a^{(i)}(x))^2 \neq b^{(i)}(x),
  \end{cases}    
\ee
for each $i=1,\ldots,n$ and each $x\in U$.  Here, the functions
$u_{*}^{(i)}$ and $u_{**}^{(i)}$ are prescribed, and $\alpha^{(i)}>0$
are real constants. The meaning of the Landau symbols $O$ in this
context will be made precise later; at this stage of the discussion
they have to be understood ``intuitively'' as representing terms of
higher order in $t$ at $t=0$.  In the case of a transition from the
non-degenerate case to the degenerate one or vice versa, the
renormalization given by \Eqref{eq:transitiondata} is necessary and will
always be assumed.

At this stage of the discussion, we clearly see the dependence of
the expected leading-order behavior at $t=0$ on the coefficients of
the principal part of the equation. If the roots $\lambda_1$ and
$\lambda_2$ are real and distinct, i.e.\ if $a^2>b$, we expect a
{\sl power-law} behavior. In the degenerate case $\lambda_1=\lambda_2$,
i.e.\ if $a^2=b$, we expect a {\sl logarithmic} behavior. Finally, when
$\lambda_1$ and $\lambda_2$ are complex for $a^2<b$, the solution is
expected to have an {\sl oscillatory} behavior at $t=0$ of the form
\[u(t,x)=t^{-a(x)}\bigl(\tilde u_*\cos(\lambda_I(x)\log
t)+\tilde u_{**}\sin(\lambda_I(x)\log t)\bigr)+\ldots
\] 
for some real coefficient functions $\tilde u_*(x)$ and $\tilde u_{**}(x)$;
note that in this case, $\lambda_1=\bar\lambda_2=a+i\lambda_I$ with
$\lambda_I:=\sqrt{b^2-a}$.


\subsubsection*{Relevant function spaces}
 
Consider any second-order Fuchsian system of the form described in
\Defref{def:Fuchsian}, with coefficients $a,b, \lambda_1, \lambda_2$
satisfying \Eqref{eq:deflambda2}.  To simplify the presentation, we
restrict attention to scalar equations ($n=1$) and shortly comment on
the general case in the course of the discussion.

Fix some integers $l,m \geq 0$ and constants $\alpha,\delta>0$.  For $w\in
C^{l}((0,\delta],H^m(U))$, we define the norm 
\be
  \label{eq:Banachspace}
  \|w\|_{\delta,\alpha,l,m}:=\sup_{0<t\le \delta} 
  \left(
    \sum_{p=0}^l\sum_{q=0}^m\int_U t^{2(\Re\lambda_2(x)-\alpha)}
    \,\bigl|\del_x^q D^pw(t,x)\bigr|^2 \, dx
  \right)^{1/2}, 
\ee
and denote by $X_{\delta,\alpha,l,m}$ the space of all functions with
finite norm $\|w\|_{\delta,\alpha,l,m}<\infty$.  Throughout, $H^m(U)$
denotes the standard Sobolev space and we recall that all functions
are periodic in the variable $x$ with $U$ being a periodicity domain.
To cover a system of $n\ge 1$ second-order Fuchsian equations, the
norm above is defined by summing over all vector components with
different exponents used for different components; recall that each
equation in the system will have a different root function $\lambda_2$
and we allow that $\alpha=(\alpha^{(1)},\ldots,\alpha^{(n)})$ is a
vector of different positive constants for each equation. The constant
$\delta$, however, is assumed to be common for all equations in the
system. With this modification, all results in the present section
remain valid for systems of equations.

Throughout it is assumed that $\Re\lambda_2$ is continuous and it is 
then easy to check that $(X_{\delta,\alpha,l,m},\|\cdot\|_{\delta,\alpha,l,m})$ is a Banach
space and that the following property holds. 

\begin{lemma}[Approximation by smooth functions]
  \label{lem:density}
  Given any $w\in X_{\delta,\alpha,l,m}$, and constant $\epsilon>0$, 
  there exists a sequence $(w_n)\in
  X_{\delta,\alpha,l,m}\cap C^\infty((0,\delta]\times U)$
  such that
  \begin{equation}
    \label{eq:mollifierconvergence}
    \lim_{n\rightarrow\infty}\|w-w_n\|_{\delta,\alpha-\epsilon,l,m}=0.
  \end{equation}
\end{lemma}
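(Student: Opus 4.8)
The plan is to use mollification in the spatial variable only, combined with a truncation–mollification in the time variable, and to exploit the $\epsilon$-loss in the exponent to absorb the commutator errors that arise from the spatially dependent weight $t^{2(\Re\lambda_2(x)-\alpha)}$. First I would set $\beta:=\Re\lambda_2$, which by hypothesis is continuous, hence uniformly continuous and bounded on the periodicity domain $\overline U$; let $\beta_-:=\min_{\overline U}\beta$ and $\beta_+:=\max_{\overline U}\beta$. The key point is that for any $0<t\le\delta$ and any two points $x,x'$ one has
\[
  t^{2(\beta(x)-\alpha)} = t^{2(\beta(x')-\alpha)}\, t^{2(\beta(x)-\beta(x'))},
\]
and $t^{2(\beta(x)-\beta(x'))}$ lies between $\delta^{-2\,\mathrm{osc}\,\beta}$ and $\delta^{2\,\mathrm{osc}\,\beta}$ (for $\delta\le 1$; the case $\delta>1$ is symmetric), so weights at nearby points are comparable up to a fixed constant, but to get the full $\epsilon$-gain I will use instead that for $x'$ within distance $\rho$ of $x$, $\mathrm{osc}_{\le\rho}\,\beta\to 0$ as $\rho\to 0$, so $t^{2(\beta(x)-\beta(x'))}\le t^{-2\,\mathrm{osc}_{\le\rho}\,\beta}\le t^{-\epsilon}$ once $\rho$ is small enough.

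The construction then goes as follows. Fix $w\in X_{\delta,\alpha,l,m}$ and $\epsilon>0$. Let $\chi_\rho$ be a standard periodic mollifier in $x$ supported in $\{|x|\le\rho\}$, and set $w_n^{(1)}:=\chi_{\rho_n}*_x w$ with $\rho_n\downarrow 0$; since $w(t,\cdot)\in H^m(U)$ for each $t$ and $\del_x^q$ commutes with convolution, each $\del_x^q D^p w_n^{(1)}(t,\cdot) = \chi_{\rho_n}*_x \del_x^q D^p w(t,\cdot)$, and by Minkowski's integral inequality together with the weight-comparison above,
\[
  \int_U t^{2(\beta(x)-\alpha+\epsilon/2)}\bigl|\del_x^q D^p w_n^{(1)}(t,x)\bigr|^2\,dx
  \le C\int_U t^{2(\beta(x)-\alpha)}\bigl|\del_x^q D^p w(t,x)\bigr|^2\,dx
\]
uniformly once $\rho_n$ is small, so $w_n^{(1)}\in X_{\delta,\alpha-\epsilon/2,l,m}$ with bounded norm, and a routine continuity-of-translation argument in $L^2$ (applied pointwise in $t$ and then dominated-convergence in $t$, using the uniform bound as the dominating function) gives $\|w-w_n^{(1)}\|_{\delta,\alpha-\epsilon/2,l,m}\to 0$. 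This $w_n^{(1)}$ is $C^\infty$ in $x$ but only $C^l$ in $t$; to upgrade to $C^\infty$ in $t$ as well, I would further mollify in $t$, but only after noting that near $t=\delta$ this is harmless and near $t=0$ the weight is large, so one must mollify at scale comparable to $t$ itself — e.g. replace $t\mapsto w_n^{(1)}(t,\cdot)$ by a convolution in the variable $\eta=-\ln t$ against a mollifier of small fixed width, which commutes with $D=t\del_t$ and only shifts $t$ multiplicatively by a bounded factor, hence again changes the weight by a bounded factor and costs at most another $\epsilon/2$ in the exponent after the same oscillation estimate.

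The main obstacle, and the only genuinely delicate point, is exactly this interplay between the spatially varying weight and the spatial mollification: one must verify that sliding the $x$-argument of $w$ by an amount $\le\rho_n$ changes the weight $t^{2(\beta(x)-\alpha)}$ by no more than a factor $t^{-\epsilon}$ uniformly in $t\in(0,\delta]$, which is precisely why the lemma must concede the loss from $\alpha$ to $\alpha-\epsilon$ and cannot be stated with $\epsilon=0$. Everything else — commuting $\del_x^q$ and $D^p$ through the convolutions, Minkowski's inequality, $L^2$-continuity of translations, and the final diagonal extraction of a subsequence realizing $\|w-w_n\|_{\delta,\alpha-\epsilon,l,m}\to 0$ — is standard. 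I would also remark that in the vector-valued case one simply runs the argument componentwise with the common $\delta$ and the respective $\Re\lambda_2^{(i)}$, which changes nothing essential.
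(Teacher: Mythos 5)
Your construction (spatial mollification followed by mollification in $\eta=-\ln t$) is in the same spirit as the paper's, which uses the single conjugated mollifier \eqref{eq:mollifiers}. However, there is a genuine gap at the convergence step for $w_n^{(1)}$. The norm $\|\cdot\|_{\delta,\alpha-\epsilon/2,l,m}$ involves a \emph{supremum} over $t\in(0,\delta]$, not an integral in $t$, so ``pointwise convergence in $t$ plus dominated convergence with the uniform bound as dominating function'' proves nothing: a sequence of functions of $t$ can converge pointwise to $0$, be dominated by the constant $1$, and still have supremum $1$ for every $n$ (indicators of $[1/(n+1),1/n]$). Securing \emph{uniform} convergence on the non-compact interval $(0,\delta]$ is precisely the delicate point of this lemma, and it --- not the spatial weight mismatch --- is the fundamental reason for the loss from $\alpha$ to $\alpha-\epsilon$; the paper says so explicitly in the sentence following the lemma. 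The repair is the usual splitting: on $(0,t_0]$ the extra factor $t^{\epsilon}$, available because the error is measured in the weaker $\alpha-\epsilon$ norm, bounds the weighted $L^2$ error by $C\,t_0^{\epsilon}\,\|w\|_{\delta,\alpha,l,m}$ uniformly in $n$; on $[t_0,\delta]$ the weight is bounded above and below, and $t\mapsto\partial_x^qD^pw(t,\cdot)$ is continuous into $L^2(U)$, hence has compact image, so the mollification error tends to $0$ uniformly there; one then sends $t_0\to0$ after $n\to\infty$. The same issue recurs, and the same fix applies, in your time-mollification step.

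A secondary discrepancy: your unconjugated spatial mollification only yields $w_n^{(1)}\in X_{\delta,\alpha-2\,\mathrm{osc}_{\rho_n}(\Re\lambda_2),l,m}$ rather than $w_n^{(1)}\in X_{\delta,\alpha,l,m}$ as the lemma asserts, because the transfer factor $t^{2(\Re\lambda_2(x)-\Re\lambda_2(y))}$ is unbounded as $t\to0$ whenever $\Re\lambda_2$ is non-constant, no matter how small $\rho_n$ is. The paper's device avoids both problems at once: it mollifies the conjugated function $s^{\lambda_2(y)}w(s,y)$ jointly in $(\ln s,y)$ and multiplies back by $t^{-\lambda_2(x)}$, cf.\ \eqref{eq:mollifiers}, which makes the mollification exactly weight-compatible and leaves the entire $\epsilon$ for the uniformity-in-$t$ issue. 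Grafting that conjugation onto your two-step argument, and replacing the dominated-convergence claim by the splitting above, would make the proof complete.
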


The functions $w_n$ above are taken to be periodic in space, for instance: 
\be
  \label{eq:mollifiers}
  w_\eta(t,x)=t^{-\lambda_2(x)}\int_{-\infty}^\infty\int_0^\infty 
  t^{\lambda_2(y)}w(s,y)k_{\eta}\Bigl(\ln\frac st\Bigr) k_\eta(x-y) 
  \frac 1s dsdy.
\ee
Here, $k_\eta:\R\rightarrow\R_+$ is any smooth kernel supported in
$[-\eta,\eta]$, satisfying $\int_\R k_\eta(x)dx=1$ for all positive
$\eta$. We note that the constant $\epsilon>0$ in
\Eqref{eq:mollifierconvergence} is introduced in order to guarantee
uniform convergence on $(0,\delta]$ in the case where $w$ has
no limit at $t=0$, being just bounded and continuous.

In terms of the spaces $X_{\delta,\alpha,l,m}$, we define in a
mathematically precise way a notion of canonical expansions and
asymptotic data as follows.

\begin{definition}
  \label{def:remainder}
  Consider a second-order Fuchsian equation
  \Eqref{eq:secondorderFuchsian} with continuous coefficients $a, b,
  \lambda_1, \lambda_2$.  Suppose that $v$ and $w$ are functions
  related as follows:
  \be
    \label{eq:splitv}
    v(t,x)=
    \begin{cases}
      u_*(x)\,t^{-a(x)}\ln t+u_{**}(x)\,t^{-a(x)}+w(t,x),
      & \quad (a(x))^2=b(x),\\
      u_*(x)\,t^{-\lambda_1(x)}+u_{**}(x)\,t^{-\lambda_2(x)} 
      +w(t,x),
      & \quad (a(x))^2 \neq b(x),
    \end{cases}    
  \ee
  for some prescribed data $u_*$ and $u_{**} \in H^{m'}(U)$,
   where $m'$ is
  some non-negative integer.  Then, one says that $v$ satisfies a
  \textbf{canonical two-term expansion}
  with \textbf{asymptotic data} $u_*$
  and $u_{**}$ and \textbf{remainder} $w$, provided $w\in
  X_{\delta,\alpha,l,m}$ for some constants $\delta,\alpha>0$ and
  non-negative integers $l, m$.
\end{definition}
If the coefficients of the equations are such that there is a continuous
transition between the two cases in \Eqref{eq:splitv}, then the
asymptotic data functions $u_*$ and $u_{**}$ are renormalized
by \Eqref{eq:transitiondata}.


\subsection{An existence result for second-order Fuchsian 
  ODE systems}

An important property of the ODE solution operator $H$ defined in
\Eqref{eq:defH} is derived now.

\begin{proposition}[Continuity of the ODE solution operator $H$] 
  \label{prop:Hestimate}
  Pick up any constants $\delta>0$, $\alpha>0$ and any integers
  $l\ge 1$, $m\geq 0$.  
  Then, for each $\epsilon>0$, the operator $H$ defined in \Eqref{eq:defH}
  extends uniquely to a continuous linear map
  $X_{\delta,\alpha,l-1,m}\rightarrow X_{\delta,\alpha - \eps,l,m}$,
  and there exists a constant $C_\eps>0$ (independent of $\delta$
  provided $\delta$ is sufficiently small), so that 
  \be
    \label{eq:Hestimate}
    \|H[w]\|_{\delta,\alpha-\eps,l,m}
    \le C_\eps \delta^\epsilon \|w\|_{\delta,\alpha,l-1,m},
  \ee
  for all $w\in X_{\delta,\alpha,l-1,m}$.
\end{proposition}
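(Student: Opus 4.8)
The plan is to estimate, for each fixed $t\in(0,\delta]$ and each pair $p\le l$, $q\le m$, the weighted $L^2(U)$ norm of $\del_x^q D_t^p H[w](t,\cdot)$ in terms of the corresponding quantities for $w$, and then take the supremum over $t$. The starting point is the pointwise formulas for $D_t^p H[w]$ from \Propref{prop:behaviorH}, namely \eqref{eq:expressionsH}--\eqref{eq:expressionsH2} for $p\ge 1$, together with \eqref{eq:defH} itself for $p=0$. Each of these expresses $D_t^p H[w](t,x)$ as a linear combination of terms of the shape $t^{-\mu(x)}\int_0^t D_s^{p-1}w(s,x)\,s^{\mu(x)-1}\,ds$ (or with an extra $\ln(s/t)$ factor in the degenerate case), where $\mu$ is $\lambda_1(x)$, $\lambda_2(x)$, or $a(x)$. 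The key scalar estimate is the following: if $\sup_K|g(s)|=O(s^{\beta})$ with $\beta>-\Re\mu$, then $\bigl|t^{-\mu}\int_0^t g(s)s^{\mu-1}ds\bigr|\lesssim \frac{1}{\beta+\Re\mu}\,t^{\beta}$ (for the degenerate case with the logarithm one loses an $\eps$, getting $t^{\beta-\eps}$, or equivalently one controls $\int_0^t s^{\beta+\Re\mu-1}|\ln(s/t)|ds \lesssim_\eps t^{\beta+\Re\mu-\eps}$ uniformly). Applying this with $g(s)=D_s^{p-1}w(s,x)$ and $\beta$ essentially $\Re\lambda_2(x)-\alpha$ (allowing a small loss $\eps$ to absorb the spatially varying exponent and the logarithm, exactly as flagged in the remark after \Propref{prop:behaviorH}) produces the factor $t^{\Re\lambda_2(x)-\alpha}$ on the right and, crucially, a gain of $\delta^{\eps}$ because the integral $\int_0^t$ over a domain of length $\le\delta$ contributes $t^{\eps}\le\delta^{\eps}$ when the exponent is shifted from $\alpha$ down to $\alpha-\eps$.

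First I would handle the case $q=0$ (no spatial derivatives) and a single fixed eigenvalue branch, establishing the pointwise bound
\[
  \Bigl|D_t^p H[w](t,x)\Bigr|
  \;\le\; C_\eps\,\delta^{\eps}\,t^{\Re\lambda_2(x)-(\alpha-\eps)}\;
  \max_{0\le p'\le p-1}\Bigl(\sup_{0<s\le t} s^{-(\Re\lambda_2(x)-\alpha)}\bigl|D_s^{p'}w(s,x)\bigr|\Bigr)
\]
for $1\le p\le l$, and the analogous bound for $p=0$ using \eqref{eq:defH}. The constant $C_\eps$ depends on $\eps$ and on lower bounds for the distances $|\beta+\Re\mu|$, hence ultimately on $\alpha$ and the (continuous, hence bounded on the compact $K$, but we are on periodic $U$ so compact anyway) coefficient functions — these are fixed data, so $C_\eps$ is independent of $\delta$ for $\delta$ small. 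Next I square, integrate in $x$ over $U$, and move the supremum inside via the elementary inequality $\int_U \bigl(\sup_{0<s\le t}F(s,x)\bigr)^2 dx \le \sup_{0<s\le \delta}\int_U F(s,x)^2 dx$ applied to $F(s,x)=s^{-(\Re\lambda_2(x)-\alpha)}|D_s^{p'}w(s,x)|$; the right-hand side is then bounded by $\|w\|_{\delta,\alpha,l-1,m}^2$. Taking the square root and then the supremum over $t\in(0,\delta]$, using $t^{\eps}\le\delta^{\eps}$, yields \eqref{eq:Hestimate} for the $q=0$ part of the norm.

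For $q\ge 1$ one differentiates \eqref{eq:expressionsH2} (and \eqref{eq:defH}) in $x$ under the integral sign — legitimate by the regularity hypotheses recorded before \Propref{prop:behaviorH} — and applies the Leibniz rule. Differentiating the prefactor $t^{-\mu(x)}$ brings down factors of $\ln t$ (times derivatives of $\mu$), and differentiating $s^{\mu(x)-1}$ inside the integral brings down $\ln s$; both are absorbed by the $\eps$-loss, since $|\ln t|\,t^{\eps}$ and $|\ln s|\,s^{\eps}$ are bounded on $(0,\delta]$. Derivatives of $\lambda_1,\lambda_2,a$ and of $1/(\lambda_1-\lambda_2)$ are bounded coefficients and go into $C_\eps$; the worst term is still of the model integral form with $g(s)=\del_x^{q'}D_s^{p-1}w(s,x)$, so the scalar estimate applies verbatim and contributes $\|w\|_{\delta,\alpha,l-1,m}$ again (now through the $q'\le m$ summand). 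Summing the finitely many terms in $p\le l$, $q\le m$ completes the bound. Finally, uniqueness of the continuous extension to $X_{\delta,\alpha,l-1,m}$ follows from \Lemref{lem:density}: smooth functions are dense (up to an arbitrarily small $\eps$-shift of the index), $H$ is already defined and bounded on them by the estimate just proved, and a bounded linear map on a dense subspace of a Banach space has a unique continuous extension.

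The main obstacle I expect is bookkeeping the $\eps$-losses cleanly: one must verify that a single small $\eps>0$ simultaneously (i) makes the scalar integral $\int_0^t s^{\beta+\Re\mu-1}(1+|\ln s|+|\ln(s/t)|)\,ds$ converge at $s=0$ with the claimed power $t^{\beta+\Re\mu-\eps}$ uniformly in $x\in U$ — which forces $\beta+\Re\mu-\eps>0$, i.e. one needs $\alpha$ strictly larger than the jump induced by the spatial variation, automatically fine since we are free to shrink $\eps$ — and (ii) controls the degenerate-case logarithm and the logarithms generated by spatial differentiation, and (iii) produces the advertised $\delta^{\eps}$ smallness rather than just boundedness. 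Getting all three from the same $\eps$, and confirming the constant genuinely does not degrade as $\delta\to 0$, is the delicate part; everything else is the Duhamel-type scalar estimate packaged with Minkowski's integral inequality.
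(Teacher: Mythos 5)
There is a genuine gap at the central quantitative step. Your plan is to first prove a \emph{pointwise} bound
$|D_t^p H[w](t,x)|\le C_\eps\delta^\eps t^{\Re\lambda_2(x)-(\alpha-\eps)}\sup_{0<s\le t}s^{-(\Re\lambda_2(x)-\alpha)}|D_s^{p'}w(s,x)|$, then square, integrate in $x$, and pass to the norm via the claimed inequality
$\int_U\bigl(\sup_{0<s\le t}F(s,x)\bigr)^2dx\le\sup_{0<s\le\delta}\int_UF(s,x)^2dx$. That inequality is false: for nonnegative $F$ one only has the reverse, $\sup_s\int_UF^2dx\le\int_U(\sup_sF)^2dx$, and the gap between the two sides can be infinite (take $F(s,\cdot)$ to be a tall thin bump whose location moves with $s$; each slice has small $L^2(U)$ norm while the pointwise supremum over $s$ is large everywhere). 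Concretely, the norm $\|w\|_{\delta,\alpha,l-1,m}$ is a $\sup_t$ of an $L^2_x$ quantity, so for a general $w\in X_{\delta,\alpha,l-1,m}$ the intermediate object $\int_U\bigl(\sup_s s^{-(\Re\lambda_2-\alpha)}|D_s^{p'}w(s,x)|\bigr)^2dx$ need not be finite at all; relatedly, the scalar estimate you invoke presupposes a pointwise decay $|D_s^{p'}w(s,x)|=O(s^{\Re\lambda_2(x)-\alpha})$ for each $x$, which is the hypothesis of Proposition~\ref{prop:behaviorH} but is not implied by membership in $X_{\delta,\alpha,l-1,m}$. So the strategy ``pointwise in $x$ first, then integrate in $x$'' is structurally mismatched with the function spaces.

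The fix is to keep the $s$-integral representation and perform the $x$-integration \emph{before} any supremum over $s$ is taken. This is what the paper's proof does: it applies the Cauchy--Schwarz inequality in $s$ with the splitting $s^{-1}=s^{-(1+\eta)/2}s^{-(1-\eta)/2}$, which converts $\bigl|\int_0^t(\cdots)s^{-1}ds\bigr|^2$ into $\frac1\eta t^\eta\int_0^t|\cdots|^2s^{-1-\eta}ds$; then Fubini puts the $x$-integral inside the $s$-integral, and the resulting quantity $\sup_{0<s<t}\int_U s^{2(\Re\lambda_2-\alpha)}|\cdots|^2dx$ is exactly what the norm controls. (Minkowski's integral inequality for the weighted $L^2_x$ norm, which you mention only in passing at the end, would accomplish the same thing.) The remaining ingredients of your proposal --- the origin of the $\delta^\eps$ factor, the $\eps$-loss absorbing the logarithms $\ln t$, $\ln s$ generated by spatial differentiation of $t^{-\lambda_i(x)}$ and $s^{\lambda_i(x)}$ (the paper's careful choice $\eta<\epsilon_1<\epsilon$), and the density/extension argument via Lemma~\ref{lem:density} with a split $\epsilon=\epsilon_0+\widetilde\epsilon$ --- are all consistent with the paper, but they sit downstream of the flawed interchange and do not repair it.
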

We stress at this stage that for all $l\ge 2$ the
extended solution operator $H$ indeed provides the general solution to
the Fuchsian equation: given any function $g\in
X_{\delta,\alpha,l-1,m}$, the function $w :=H[g]\in
X_{\delta,\alpha-\eps,l,m}$ satisfies the second-order equation
\[
D^2 w + 2 a \, Dw + b \, w = g,
\]
as equality between functions in the space $X_{\delta,\alpha-\eps,l-2,m}$.

\begin{proof}[Proof of \Propref{prop:Hestimate}]
  We restrict attention to the case $a^2\not=b$, the proof of the
  other case is completely similar and the transition between the two
  cases can be understood as a limiting process.  Consider first a
  function $w\in X_{\delta,\alpha,l-1,m}\cap C^\infty((0,\delta]\times
  U)$. We have
  \[
  \|H[w]\|_{\delta,\alpha-\epsilon,l,m}^2
  =\sup_{0<t\le\delta}
    \int_U t^{2(\Re\lambda_2-\alpha+\epsilon)}\sum_{q=0}^m
    \sum_{p=0}^l|\del_x^q D^pH[w]|^2 \, dx,  
  \]
  which is finite thanks to \Propref{prop:behaviorH} for all
  $\epsilon>0$.  The term $p=0$ is expressed explicitly by means
  of \Eqref{eq:defH}, while the case $p>0$ is treated in
  \Eqref{eq:expressionsH} and \eqref{eq:expressionsH2}.  Suppose
  $(t,x)\in (0,\delta]\times U$. For convenience, we introduce
  functions $w_1$ and $w_2$ so that
  $$
  \del_x^q D^pH[w](x)
  =\int_0^t\del_x^q(w_2(t,x,s)-w_1(t,x,s))s^{-1}ds.
  $$
  For $p=0$, these functions are
  \begin{align*}
    w_1(t,x,s)&:=\frac 1{\lambda_1(x)-\lambda_2(x)}
    w(s,x)t^{-\lambda_1(x)}s^{\lambda_1(x)},\\
    w_2(t,x,s)&:=\frac 1{\lambda_1(x)-\lambda_2(x)}
    w(s,x)t^{-\lambda_2(x)}s^{\lambda_2(x)},
  \end{align*}
  while for $p>0$ we only need to substitute $w$ by $D^{p-1}w$ and
  complement the expression by factors whose particular form is not
  relevant for the following. With this, we get
  \begin{align*}
    |\del_x^q D^pH[w](x)|^2
   & =\left|\int_0^t\del_x^q(
      w_2(t,x,s)-w_1(t,x,s))s^{-1}ds\right|^2\\
    &\le 2\, \left|\int_0^t\del_x^q
        w_2s^{-1}ds\right|^2
      + 2 \, \left|\int_0^t\del_x^q
        w_1s^{-1}ds\right|^2        
  \end{align*}
  The first term (and in the same way the second one) is handled
  via Cauchy-Schwarz's inequality (for some constant $\eta>0$) 
  \begin{align*}
    \int_0^t|\del_x^q w_2s^{-1}|ds
    &=\int_0^t|\del_x^q w_2|s^{-(1+\eta)/2}
    s^{-(1-\eta)/2}ds\\
    &\le\left(\int_0^t|\del_x^q
      w_2|^2s^{-1-\eta}ds\right)^{1/2}
    \left(\int_0^t s^{-1+\eta}ds\right)^{1/2}\\
    &=\left(\frac 1\eta t^\eta\int_0^t|\del_x^q
      w_2|^2s^{-1-\eta}ds\right)^{1/2}.
  \end{align*}
  Hence, we have 
  \begin{equation}
    \label{eq:intermediatecomp1}
    \begin{split}
      &\int_U t^{2(\Re\lambda_2-\alpha+\epsilon)}
      |\del_x^q D^p H[w](x)|^2 dx
      \\
      & \le \frac 2\eta\Bigl(
      \int_U 
      \int_0^tt^{2(\Re\lambda_2-\alpha+\epsilon)}
      t^\eta(\del_x^q
      w_2)^2 s^{-1-\eta}ds dx
      +\int_U 
      \int_0^tt^{2(\Re\lambda_2-\alpha+\epsilon)}
      t^\eta(\del_x^q
      w_1)^2 s^{-1-\eta}ds dx
      \Bigr).
    \end{split}
  \end{equation}
  The first term on the right side of this inequality can be
  written as
  \begin{equation}
    \begin{split}
      \label{eq:intermediatecomp3}
      &\int_U 
      \int_0^tt^{2(\Re\lambda_2-\alpha+\epsilon)}
      t^\eta(\del_x^q
      w_2)^2 s^{-1-\eta}ds dx\\
      &=\int_0^t\left(\int_U 
        \left(s^{2(\Re\lambda_2-\alpha)}
          (\del_x^q w_2 t^{\lambda_2} s^{-\lambda_2})^2
          s^{2(\epsilon_1-\eta)}t^{2(\epsilon-\epsilon_1)}\right)
        \left(\frac st\right)^{2(\alpha-\epsilon_1)+2 i \Im\lambda_2}
        dx\right)
      t^\eta s^{-1+\eta} 
      ds\\
      &\le \frac 1\eta t^{2\eta} \sup_{0<s<t}\left(\int_U 
        \left(s^{2(\Re\lambda_2-\alpha)}
          (\del_x^q w_2 t^{\lambda_2} s^{-\lambda_2})^2
          s^{2(\epsilon_1-\eta)}t^{2(\epsilon-\epsilon_1)}\right)dx\right).
    \end{split}
  \end{equation}
  Here we have assumed that the constant $\epsilon_1$ satisfies
  $\epsilon_1\le\alpha$. The significance of the terms
  $s^{2(\epsilon-\eta)}$ and $t^{2(\epsilon-\epsilon_1)}$ becomes
  clear in a moment.  For the second term on the right side of
  \Eqref{eq:intermediatecomp1}, we get similarly 
  \be
    \label{eq:intermediatecomp2}
    \begin{split}
      &\int_U 
      \int_0^tt^{2(\Re\lambda_2-\alpha+\epsilon)}
      t^\eta(\del_x^q
      w_1)^2 s^{-1-\eta}ds dx\\
      &=\int_0^t\left(\int_U 
        \left(s^{2(\Re\lambda_2-\alpha)}
          (\del_x^q w_1 t^{\lambda_1} s^{-\lambda_1})^2
          s^{2(\epsilon_1-\eta)}t^{2(\epsilon-\epsilon_1)}\right)
        \left(\frac st\right)^{2(\alpha-\epsilon_1)
          +2(\lambda_1-\lambda_2)+2 i \Im\lambda_2}
        dx\right)
      t^\eta s^{-1+\eta} 
      ds\\
      &\le \frac 1\eta t^{2\eta}\sup_{0<s<t}\left(\int_U 
        \left(s^{2(\Re\lambda_2-\alpha)}
          (\del_x^q w_1 t^{\lambda_1} s^{-\lambda_1})^2
          s^{2(\epsilon_1-\eta)}t^{2(\epsilon-\epsilon_1)}\right)dx\right),
    \end{split}
  \ee
  where in the last step we used that $\Re\lambda_1\ge\Re\lambda_2$. Now, for
  $p=0$ we compute,
  \begin{align*}
    w_1 t^{\lambda_1}s^{-\lambda_1}
    &=\frac 1{\lambda_1-\lambda_2}w(s,x),\\
    (\del_x w_1) t^{\lambda_1}s^{-\lambda_1}
    &=\del_x\left(\frac 1{\lambda_1-\lambda_2}w(s,x)\right)
    +\left(\frac 1{\lambda_1-\lambda_2}w(s,x)\right)
    \del_x\lambda_1\, \ln\frac ts,
  \end{align*}
  etc.; analogous formulas hold for $w_2$ and for $p>0$. So the terms
  $\del_x^q w_1 t^{\lambda_1} s^{-\lambda_1}$ incorporate spatial
  derivatives of $w$ of all order lower or equal to $q$, and all
  orders lower than $q$ are multiplied with logarithmic terms in $t$
  and $s$. Hence, in order to guarantee that the suprema in
  \Eqref{eq:intermediatecomp3} and \eqref{eq:intermediatecomp2} are
  finite for each $t>0$, we must choose $\epsilon_1>\eta$. Moreover,
  the suprema are uniformly bounded for all $t\in (0,\delta]$, if
  $\epsilon>\epsilon_1$.  With these choices, we can rearrange all
  terms, introduce a finite constant $C$, which is independent of
  $\delta$ if $\delta$ is small, as in the hypothesis, and hence
  obtain
  \[ 
    \|H[w]\|_{\delta,\alpha-\eps,l,m}
    \le C \delta^\eta \|w\|_{\delta,\alpha,l-1,m},
  \]
  for all $w\in X_{\delta,\alpha,l-1,m}\cap C^\infty((0,\delta]\times
  U)$. Since this inequality holds for all $\eta$ smaller than
  $\epsilon$, it also holds in the limit
  $\eta\rightarrow\epsilon$. The constant $C$ can hence be adapted so
  that \Eqref{eq:Hestimate} follows for all $w\in
  X_{\delta,\alpha,l-1,m}\cap C^\infty((0,\delta]\times U)$.

  Now let $w$ be a general element in $X_{\delta,\alpha,l-1,m}$
  and
  choose a positive $\epsilon_0$ with $\epsilon_0<\epsilon$. We set
  $\widetilde\epsilon:=\epsilon-\epsilon_0$. According to
  \Lemref{lem:density}, there exists a sequence $(w_n)\subset
  X_{\delta,\alpha,l-1,m}\cap C^\infty((0,\delta]\times U)$, so that
  $\lim_{n\rightarrow\infty}\|w-w_n\|_{\delta,\alpha-\epsilon_0,l-1,m}=0$. Our
  results for the smooth case
   show that $(H[w_n])$ is a Cauchy
  sequence in $X_{\delta,\alpha-\epsilon_0-\widetilde\epsilon,l-1,m}
  =X_{\delta,\alpha-\epsilon,l-1,m}$, and we can denote its limit
  element by $H[w]\in X_{\delta,\alpha-\epsilon,l-1,m}$. In this way,
  we define the extension of $H$ to the whole space
  $X_{\delta,\alpha,l-1,m}$. It is then straightforward to see that
  the limit of the estimate \Eqref{eq:Hestimate} leads to the claimed
  estimate for the full space $X_{\delta,\alpha-\epsilon_0,l-1,m}$.
\end{proof}

Consider a Fuchsian equation \Eqref{eq:secondorderFuchsian} with
right-hand side $f$ of the form \Eqref{eq:inhomog}. Let $v$ be a
function in the form of \Defref{def:remainder} with remainder $w \in
X_{\delta,\alpha,l,m}$ and prescribed data $u_*,u_{**}\in H^{m'}(U)$
for some $\delta>0$, $\alpha>0$ and $l,m,m'\geq 0$.  Suppose that $m,
m'$ are sufficiently large.
If necessary, the derivatives in \Eqref{eq:inhomog} are understood in
the sense of distributions
and we set
\be
  \label{eq:ddefoperatorF}
  F[w](t,x):=f[v](t, x)
\ee
and finally
\begin{equation}
  \label{eq:operatorG}
  G:=H\circ F.
\end{equation}
We are now in a position to define an iteration
sequence based on this operator $G$.

\begin{proposition}[Iteration sequence]
  \label{prop:iterationsequence}
  With the same notation and assumptions as in
  \Propref{prop:Hestimate}, let $k$ be the number of spatial
  derivatives in $f$ according to \Eqref{eq:inhomog} and $m_0$ another
  non-negative integer.  Suppose that, for given asymptotic data, the
  operator $F$ satisfies the following {\it regularity assumption}
  (for some $\eps_0>0$)
  \begin{equation}
    \label{eq:consistentF}
    F:X_{\delta,\alpha,l,m}\rightarrow
    X_{\delta,\alpha+\eps_0,l-1,m-k},
  \end{equation}
  for an integer $l\ge 1$, and all non-negative integers $m$ with
  $k\le m\le m_0$. Let $G$ be the operator defined in
  \Eqref{eq:operatorG}. Then, given any $w_1\in
  X_{\delta,\alpha,l,m_0}$, the (in general finite) sequence $(w_j)$
  determined by
  \begin{equation}
    \label{eq:iteration}
    w_{j+1}=G[w_j],     \qquad\text{for all integers } j\in[1,m_0/k]
  \end{equation}  
  is well-defined and, moreover, $w_{j+1}\in
  X_{\delta,\alpha,l,m_0-j k}$.
\end{proposition}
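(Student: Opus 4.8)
The plan is to prove the claim by induction on $j$, using the two ingredients already established: the continuity estimate for the solution operator $H$ from \Propref{prop:Hestimate} and the regularity assumption \Eqref{eq:consistentF} on $F$. The key observation is that, at each step of the iteration $w_{j+1}=G[w_j]=H[F[w_j]]$, the operator $F$ costs $k$ spatial derivatives but gains $\eps_0$ in the decay exponent $\alpha$ (and costs one $D$-derivative), while $H$ then costs a small $\eps$ in the decay exponent but restores the lost $D$-derivative; by choosing $\eps<\eps_0$ the net effect is that the decay index is preserved (or improved) while only spatial regularity is consumed. This is why the sequence is in general only finite: after $m_0/k$ applications of $F$ one runs out of spatial derivatives, which is exactly the range of $j$ in \Eqref{eq:iteration}.

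First I would set up the base case: by hypothesis $w_1\in X_{\delta,\alpha,l,m_0}$. For the inductive step, assume $w_j\in X_{\delta,\alpha,l,m_0-(j-1)k}$ for some $j$ with $1\le j\le m_0/k$; in particular the spatial regularity index $m:=m_0-(j-1)k$ satisfies $m\ge k$, so the regularity assumption \Eqref{eq:consistentF} applies and yields
\[
F[w_j]\in X_{\delta,\alpha+\eps_0,\,l-1,\,m-k}=X_{\delta,\alpha+\eps_0,\,l-1,\,m_0-jk}.
\]
Next I would apply \Propref{prop:Hestimate}, with its role of $\alpha$ played by $\alpha+\eps_0$, its $l$ played by $l$, and its $m$ played by $m_0-jk$: for any $\eps>0$ the map $H$ sends $X_{\delta,\alpha+\eps_0,l-1,m_0-jk}$ continuously into $X_{\delta,\alpha+\eps_0-\eps,l,m_0-jk}$. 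Choosing $0<\eps\le\eps_0$ (for definiteness $\eps=\eps_0$, or any smaller value and then using that $X$ with a larger decay index embeds in $X$ with a smaller one on a bounded time interval) gives
\[
w_{j+1}=H[F[w_j]]\in X_{\delta,\alpha,\,l,\,m_0-jk},
\]
which is precisely the assertion $w_{j+1}\in X_{\delta,\alpha,l,m_0-jk}$. This closes the induction.

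The only point requiring a little care — and the main (mild) obstacle — is bookkeeping the indices so that the hypotheses of both \Propref{prop:Hestimate} and the regularity assumption \Eqref{eq:consistentF} are genuinely met at every step: one needs $l\ge 1$ throughout (which holds since $l$ is never decreased below its initial value after $H$ restores the $D$-derivative lost to $F$), and one needs $m_0-(j-1)k\ge k$, i.e. $j\le m_0/k+1-1=m_0/k$ roughly, which is exactly the stated range $j\in[1,m_0/k]$; beyond that range \Eqref{eq:consistentF} is no longer available, so the sequence terminates, consistent with the phrase ``in general finite.'' I would also remark that for a system one applies the scalar argument componentwise with the component-dependent exponents $\alpha^{(i)}$, as already explained in the text, and that the composition $G=H\circ F$ is well-defined on the relevant spaces precisely because the decay gain $\eps_0$ from $F$ more than compensates the decay loss $\eps$ from $H$.
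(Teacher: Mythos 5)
Your induction is correct and is precisely the argument the paper leaves implicit (the proposition is stated without proof as an immediate consequence of \Propref{prop:Hestimate} and the regularity assumption \eqref{eq:consistentF}): at each step $F$ consumes $k$ spatial derivatives and one $D$-derivative while gaining $\eps_0$ in the decay exponent, and $H$ restores the $D$-derivative at the cost of $\eps\le\eps_0$ in the exponent, so that the embedding $X_{\delta,\alpha+\eps_0-\eps,l,m}\subset X_{\delta,\alpha,l,m}$ closes the induction and the constraint $m_0-(j-1)k\ge k$ yields exactly the stated range $j\le m_0/k$. Your bookkeeping of the indices, including the observation that the sequence terminates when the spatial regularity is exhausted (and is infinite when $k=0$ or $m_0=\infty$), matches the paper's intent.
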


Consider a sequence $(w_j)$ defined in the above lemma. It determines
a sequence $(v_j)$ for fixed asymptotic data according to
\Eqref{eq:splitv}, and all $v_j$ satisfy the canonical two-term
expansion.  
For $l, j \ge 2$, the function $w_j\in X_{\delta,\alpha,l,m}$
satisfies the second-order equation
\begin{equation*} 
  D^2 w_j+2a \, D w_j+b \, w_j
  = F[w_{j-1}]
\end{equation*}
as equality in the space $X_{\delta,\alpha,l-2,m}$. The sequence
$(w_j)$ has infinitely many elements if $m_0=\infty$ or if $k=0$. In
the latter case the second-order Fuchsian equation is a system of
ordinary differential equations (with the spatial variable $x$ as a
parameter). In both cases, a function $w$ is a fixed point of the
iteration sequence if and only if the associated function $v$ is a
solution of \Eqref{eq:secondorderFuchsian}.  A fixed point theorem for
the ODE case is as follows.

\begin{theorem}[Existence of solutions to second-order Fuchsian ODEs]
  \label{th:fixedpointtheorem}
  Under the assumptions as in \Propref{prop:iterationsequence} with $k=0$, 
  suppose additionally that for given asymptotic data, the operator $F$ satisfies the following {\bf Lipschitz continuity
  property}: for each $r>0$ and 
  $\eps_0>0$ arising in \Eqref{eq:consistentF}, there exists $\widehat C >0$
  independent of $\delta$, so that
  \be
    \label{eq:ConvergenceCondF}
    \|F[w]-F[\widetilde w]\|_{\delta,\alpha+\epsilon_0,l-1,m} \le \widehat C \, 
    \|w-\widetilde w\|_{\delta,\alpha,l,m}
  \ee
  for all $w,\widetilde w\in \overline{B_r(0)}\subset
  X_{\delta,\alpha,l,m}$.  Then, given any initial data $w_1\in
  X_{\delta,\alpha,l,m}$ and provided $\delta>0$ is sufficiently
  small, the iteration sequence \Eqref{eq:iteration} converges to a
  unique fixed point $w\in X_{\delta,\alpha,l,m}$.
\end{theorem}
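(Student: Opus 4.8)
The plan is to set up a standard contraction-mapping (Banach fixed point) argument in the Banach space $X_{\delta,\alpha,l,m}$, using the composite operator $G=H\circ F$ and exploiting the $\delta^\eps$-smallness factor furnished by \Propref{prop:Hestimate}. Since we are in the case $k=0$, the iteration $w_{j+1}=G[w_j]$ stays in the single space $X_{\delta,\alpha,l,m}$ (no loss of spatial derivatives per step), so there is no need to chase decreasing regularity indices; that bookkeeping issue, which complicates the PDE case, is absent here.

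First I would fix a radius $r>0$ large enough that the prescribed initial data $w_1$ lies in $\overline{B_r(0)}$, and I would show that for $\delta$ small enough $G$ maps $\overline{B_r(0)}$ into itself. For this, take $w\in\overline{B_r(0)}$. By the regularity assumption \Eqref{eq:consistentF}, $F[w]\in X_{\delta,\alpha+\eps_0,l-1,m}$; applying \Propref{prop:Hestimate} with the choice $\eps=\eps_0$ (so that $H:X_{\delta,\alpha+\eps_0,l-1,m}\to X_{\delta,\alpha,l,m}$) gives
\[
\|G[w]\|_{\delta,\alpha,l,m}=\|H[F[w]]\|_{\delta,\alpha,l,m}\le C_{\eps_0}\,\delta^{\eps_0}\,\|F[w]\|_{\delta,\alpha+\eps_0,l-1,m}.
\]
One then needs a bound on $\|F[w]\|$ on the ball; this follows from the Lipschitz estimate \Eqref{eq:ConvergenceCondF} applied with $\widetilde w=0$ together with the fixed quantity $\|F[0]\|$, giving $\|F[w]\|_{\delta,\alpha+\eps_0,l-1,m}\le \widehat C\,r+\|F[0]\|_{\delta,\alpha+\eps_0,l-1,m}=:C_1(r)$. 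Hence $\|G[w]\|_{\delta,\alpha,l,m}\le C_{\eps_0}\delta^{\eps_0}C_1(r)$, which is $\le r$ once $\delta$ is chosen small; note $C_{\eps_0}$ is independent of $\delta$ (for $\delta$ small) by the proposition, and $\widehat C$, $\|F[0]\|$ can be taken independent of $\delta$ as well, so this is a genuine smallness-in-$\delta$ argument.

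Next I would establish the contraction property: for $w,\widetilde w\in\overline{B_r(0)}$,
\[
\|G[w]-G[\widetilde w]\|_{\delta,\alpha,l,m}=\|H[F[w]-F[\widetilde w]]\|_{\delta,\alpha,l,m}\le C_{\eps_0}\delta^{\eps_0}\|F[w]-F[\widetilde w]\|_{\delta,\alpha+\eps_0,l-1,m}\le C_{\eps_0}\widehat C\,\delta^{\eps_0}\|w-\widetilde w\|_{\delta,\alpha,l,m},
\]
using linearity of $H$, then \Propref{prop:Hestimate}, then \Eqref{eq:ConvergenceCondF}. Choosing $\delta$ small enough that $C_{\eps_0}\widehat C\,\delta^{\eps_0}<1$ (and simultaneously small enough for the self-map property above) makes $G$ a contraction on the complete metric space $\overline{B_r(0)}$. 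The Banach fixed point theorem then yields a unique fixed point $w\in\overline{B_r(0)}\subset X_{\delta,\alpha,l,m}$, and the iteration $w_{j+1}=G[w_j]$ converges to it geometrically. Since $r$ was arbitrary (any $w_1$ lies in some ball), uniqueness of the fixed point in all of $X_{\delta,\alpha,l,m}$ follows by taking $r$ large; finally, the remark preceding the theorem identifies a fixed point $w$ of $G$ with a solution $v$ of \Eqref{eq:secondorderFuchsian}, completing the proof.

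The only genuinely delicate point is making sure all the constants — $C_{\eps_0}$ from \Propref{prop:Hestimate}, $\widehat C$ from \Eqref{eq:ConvergenceCondF}, and $\|F[0]\|$ — are uniform in $\delta$ as $\delta\to 0$, so that the single smallness condition on $\delta$ can be met; but this uniformity is exactly what is asserted in the hypotheses (``independent of $\delta$'' in both \Propref{prop:Hestimate} and the Lipschitz assumption), so there is no real obstacle here beyond stating it carefully. A minor technical subtlety is that \Propref{prop:Hestimate} produces a loss $\alpha\mapsto\alpha-\eps$; this is harmless because the regularity gain $\eps_0$ in \Eqref{eq:consistentF} is available to absorb it, and one simply takes $\eps=\eps_0$ so that $H\circ F$ returns to the original exponent $\alpha$.
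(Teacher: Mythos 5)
Your proposal is correct and follows essentially the same route as the paper: the contraction estimate for $G=H\circ F$ obtained by combining \Propref{prop:Hestimate} (with $\eps=\eps_0$) and the Lipschitz property \Eqref{eq:ConvergenceCondF}, followed by smallness of $\delta^{\eps_0}$ and the Banach fixed point theorem. Your explicit verification that $G$ maps $\overline{B_r(0)}$ into itself is in fact a more careful treatment than the paper's terse remark that the iterates remain bounded because the sequence is Cauchy.
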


\begin{proof}
  Our previous results imply
  \[\|G[w]-G[\widetilde w]\|_{\delta,\alpha,l,m} \le \widetilde C \delta^\eta
  \|w-\widetilde w\|_{\delta,\alpha,l,m}
  \]
  for a constant $\widetilde C>0$, provided $w,\widetilde w\in
  \overline{B_K}(0)\subset X_{\delta,\alpha,l,m}$. Hence, for
  sufficiently small $\delta$, the operator $G$ becomes a
  contraction. The convergence of the iteration sequence follows if we
  can guarantee that $w_j\in \overline{B_K(0)}$ for a
  sufficiently large $K$. This, however, is the case since the
  sequence $(w_j)$ is a Cauchy sequence thanks to the contraction
  property of $G$ and, hence, is bounded.
\end{proof}

Condition \Eqref{eq:ConvergenceCondF} guarantees convergence of the
sequence $(w_j)$ in the ODE case. The approach of this section is not
sufficient to cover the PDE cases of interest and, in general, 
this condition does not hold if $k>0$. Moreover, the iteration
sequence constructed above is only finite if $m_0$ is finite.  We can
expect that in typical applications, $m_0$ is infinite if the
asymptotic data are smooth and, say, $w_1=0$. Still, this does not lead
to an existence result except for the analytic case, see
\cite{KichenassamyRendall}. Well-posedness for second-order Fuchsian
PDE's in a larger than the analytic class will be addressed
in \Sectionref{sec:hyperbolicfuchsianequations} after we make the
additional assumption that the Fuchsian equations are hyperbolic.


\subsection{Asymptotic solutions of arbitrary order}
\label{sec:asymptsol}
The iterative sequence $(w_j)$ has useful asymptotic properties.  In
order to simplify the discussion in this section, we assume, instead
of \Eqref{eq:inhomog}, that $f$ has the form
\be
  \label{eq:inhomog2}
  f[u](t,x):=f(t, x; u, Du, \del_x u, \del_x Du,\del_x^2 u) 
\ee 
and that it is a polynomial in all of the arguments involving $u$ with
coefficients which are smooth and spatially periodic on
$(0,\delta]\times U$.  The operator $F$ is defined, in the same way as
was done earlier,
 from given asymptotic data $u_*$ and $u_{**}$. We henceforth
assume in the following that $u_*,u_{**}\in H^{m_1}(U)$ for some
non-negative integer $m_1$ and that \Eqref{eq:consistentF} holds for
$k=2$ and $m_0=m_1$.

\begin{definition}
  A function $v$ satisfying the canonical two-term expansion with
  given asymptotic data is called an {\bf asymptotic solution of order
    $\gamma > 0$} to the system \Eqref{eq:secondorderFuchsian}
  provided the \textbf{residual}
  \[
  R[w]:=L[w]-F[w]
  \] 
belongs to $X_{\delta,\gamma+\Re\lambda_2,0,0}$, in which 
the following notation
  \be
    \label{eq:defL}
    L[w] := D^2 w + 2a \, Dw + b \, w,
  \ee
is used for the principal part of \Eqref{eq:secondorderFuchsian} and one 
assumes 
  that $w\in X_{\delta,\alpha,l,m}$ with $l,m\ge 2$.
\end{definition}

In this definition, we use the obvious generalization of the spaces
$X_{\delta,\widetilde\alpha,l,m}$ to spatially dependent exponents
$\widetilde\alpha$. Note that $L[w]=L[v]$ if $w$ and $v$ are related
as in \Eqref{eq:splitv}.  When $l=1$ or $m=1$, one needs to reformulate
the operator $L$ (hence $R$) in a weak form, as we will explain below.  

\begin{theorem}
  \label{th:orderofsequence}
  Suppose that the operator $F$ satisfies the conditions stated earlier for given asymptotic data, and
  consider the iteration sequence $w_j\in X_{\delta,\alpha,l,m_1-2(j-1)}$
  for $j\ge 2$ given by \Eqref{eq:iteration} with $w_1=0$.  Then, 
  for any constant $\kappa<1$ the sequence has the property
  \[
  w_{j+1}-w_j\in X_{\delta,\alpha+(j-1)\kappa\epsilon_0,l,m_1-2j} 
  \] 
  for $1\le j\le m_1/2$. Moreover, the residual
  satisfies
  \[
  R[w_{j}]\in X_{\delta,
    \alpha+(j-1)\kappa\epsilon_0,l-1,m_1-2(j-2)},
  \]
  and hence, $w_j$ is an asymptotic solution of
  order $\gamma=-\Re\lambda_2+\alpha+(j-1)\kappa\epsilon_0$
  for $2\le j$. 
\end{theorem}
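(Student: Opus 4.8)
The plan is to prove the two containment statements by induction on $j$, since the statement about $w_j$ being an asymptotic solution of the claimed order is an immediate consequence of the residual bound together with the definition of asymptotic solution and the relation $R[w_j] = L[w_j] - F[w_j]$. The base case $j=1$ is trivial: since $w_1 = 0$ we have $w_2 - w_1 = w_2 = G[w_1] = H[F[0]]$, and by the regularity assumption \Eqref{eq:consistentF} with $k=2$ one has $F[0] \in X_{\delta,\alpha+\epsilon_0,l-1,m_1-2}$; applying \Propref{prop:Hestimate} (with any loss parameter $\widetilde\epsilon$ we choose to absorb into $\kappa$) gives $w_2 \in X_{\delta,\alpha+\epsilon_0-\widetilde\epsilon,l,m_1-2}$, and choosing $\widetilde\epsilon = (1-\kappa)\epsilon_0$ yields $w_2 \in X_{\delta,\alpha+\kappa\epsilon_0,l,m_1-2}$, which matches the claim for $j=1$ (noting that for $j=1$ the index $(j-1)\kappa\epsilon_0 = 0$, so actually one gets the stronger $\alpha+\kappa\epsilon_0$; the point is the \emph{difference} $w_2-w_1$ lives in a better space than $w_2$ is a priori known to, and this extra gain propagates).

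For the inductive step, I would write $w_{j+1} - w_j = G[w_j] - G[w_{j-1}] = H[F[w_j] - F[w_{j-1}]]$. The key is to estimate $F[w_j] - F[w_{j-1}]$ in terms of $w_j - w_{j-1}$. Here the polynomial structure of $f$ (assumed in \Eqref{eq:inhomog2}) is essential: writing $F[w_j] - F[w_{j-1}]$ as a telescoping/Taylor-type difference, each term is a product of a bounded coefficient, some factors that are bounded in the relevant $X$-spaces (the $w_j$, $w_{j-1}$ and their asymptotic-data pieces, which all lie in $X_{\delta,\alpha,l,\cdot}$), and exactly one factor of $(w_j - w_{j-1})$ or a derivative thereof, which by the inductive hypothesis lies in $X_{\delta,\alpha+(j-2)\kappa\epsilon_0,l,m_1-2(j-1)}$. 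Because the exponent $\alpha$ in the norm \Eqref{eq:Banachspace} appears in a power of $t$ and products of such functions add exponents (morally $t^{\alpha_1} \cdot t^{\alpha_2} = t^{\alpha_1+\alpha_2}$, while the extra factors contribute only a harmless bounded amount), the product $F[w_j] - F[w_{j-1}]$ gains the full extra decay $(j-2)\kappa\epsilon_0$ from that single factor, \emph{plus} the baseline $\epsilon_0$ that \Eqref{eq:consistentF} always supplies — giving $F[w_j]-F[w_{j-1}] \in X_{\delta, \alpha + \epsilon_0 + (j-2)\kappa\epsilon_0, l-1, m_1-2j+?}$ with the appropriate drop in spatial index by $k=2$. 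Then \Propref{prop:Hestimate} converts this back to a bound on $H$ of it, at the cost of $\widetilde\epsilon$ in the exponent and a gain of one $t$-derivative order; choosing $\widetilde\epsilon = (1-\kappa)\epsilon_0$ again, the net exponent becomes $\alpha + \kappa\epsilon_0 + (j-2)\kappa\epsilon_0 = \alpha + (j-1)\kappa\epsilon_0$, which is exactly the claimed space for $w_{j+1}-w_j$. The residual bound then follows from $R[w_j] = L[w_j] - F[w_j]$ and the fact that $w_j$ is a fixed-point-like object up to one iteration: $L[w_j] = L[G[w_{j-1}]] = F[w_{j-1}]$ (using the remark after \Propref{prop:Hestimate} that $H$ genuinely inverts the principal operator $L$ on these spaces, valid for $l \ge 2$), so $R[w_j] = F[w_{j-1}] - F[w_j] = -(F[w_j]-F[w_{j-1}])$, which we just showed lies in $X_{\delta,\alpha+\epsilon_0+(j-2)\kappa\epsilon_0,l-1,\cdot} \subset X_{\delta,\alpha+(j-1)\kappa\epsilon_0,l-1,\cdot}$ since $\epsilon_0 \ge \kappa\epsilon_0$; translating $\alpha = \gamma + \Re\lambda_2 - (j-1)\kappa\epsilon_0 + \ldots$ — more precisely, with $\gamma := -\Re\lambda_2 + \alpha + (j-1)\kappa\epsilon_0$ we get $R[w_j] \in X_{\delta,\gamma+\Re\lambda_2,0,0}$ (after discarding surplus derivative indices), which is exactly the definition of $w_j$ being an asymptotic solution of order $\gamma$.

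The main obstacle I anticipate is the bookkeeping for the product estimate on $F[w_j] - F[w_{j-1}]$: one must check carefully that multiplying functions in the weighted Sobolev spaces $X_{\delta,\widetilde\alpha,l,m}$ behaves as expected — i.e.\ that $X_{\delta,\alpha_1,l,m} \cdot X_{\delta,\alpha_2,l,m} \hookrightarrow X_{\delta,\alpha_1+\alpha_2,l,m}$ (up to constants, and requiring $m$ large enough for the Sobolev algebra property in $x$, which is where the "$m, m'$ sufficiently large" hypothesis is used). The subtlety is that the non-differenced factors are only in the baseline space $X_{\delta,\alpha,l,\cdot}$, not in a better one, so the gain comes purely from the single differenced factor, and one must verify this does not get diluted when spatial or $D$-derivatives hit the product and land on the "wrong" factors — but since all factors are at least in $X_{\delta,\alpha,l,\cdot}$ and the differenced one is strictly better, every term in the Leibniz expansion still carries the improvement. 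The second, minor, obstacle is handling the loss parameter $\epsilon$ in \Propref{prop:Hestimate}: one must allocate it consistently as $(1-\kappa)\epsilon_0$ at each step so that it accumulates to $(j-1)(1-\kappa)\epsilon_0$ over $j$ steps while the genuine gain accumulates to $(j-1)\epsilon_0$, leaving the net $(j-1)\kappa\epsilon_0$; this is why the statement is phrased with an arbitrary $\kappa < 1$ rather than $\kappa = 1$. Finally, when $l=1$ or $m=1$ one invokes the weak reformulation of $L$ and $R$ mentioned in the text, but since the hypothesis here is $l,m \ge 2$ via $w_j \in X_{\delta,\alpha,l,m_1-2(j-1)}$ with $m_1$ large, this does not intervene for the range $1 \le j \le m_1/2$ provided $m_1$ is chosen large enough.
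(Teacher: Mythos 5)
Your proposal is correct and follows essentially the same route as the paper: induction via $w_{j+1}-w_j=H[F[w_j]-F[w_{j-1}]]$, factoring the difference of $F$ (the paper uses the mean value theorem to write it as $M\cdot\Delta V$ where your telescoping product expansion does the same job) so that the single differenced factor carries the $(j-2)\kappa\epsilon_0$ gain and the remaining factor supplies the baseline $\epsilon_0$, then applying \Propref{prop:Hestimate} with loss $\epsilon=(1-\kappa)\epsilon_0$, and finally the identity $R[w_j]=F[w_{j-1}]-F[w_j]$ for the residual. Your allocation of the loss parameter and your explanation of why $\kappa<1$ is needed coincide exactly with the paper's.
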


This establishes that the order of the asymptotic solution $w_j$ is an
increasing function in $j$.  Hence, the functions $w_j$ can be
interpreted as approximations of actual solutions of increasing
accuracy at $t=0$.
 
\begin{proof} 
  In order to show the first relation, we proceed inductively and
  start with $j=1$. We need to show that $w_2\in
  X_{\delta,\alpha,l,m_1-2}$ which is true by
  \Propref{prop:iterationsequence}.  Next, we suppose that
  \[
  w_{j}-w_{j-1}\in X_{\delta,\alpha+(j-2)\kappa\epsilon_0,l,m_1-2(j-1)},
  \]
  has already been shown for a given integer $j\in[2,m_1/2+1]$.  We
  take the difference of the equations for $w_{j+1}$ and $w_{j}$, and
  the linearity of the operator $H$ implies
  \begin{equation}
    \label{eq:diffsequence}
    w_{j+1}-w_j=H[F[w_j]-F[w_{j-1}]].
  \end{equation}
  Since $f$ of the form \Eqref{eq:inhomog2} depends smoothly on its
  arguments by assumption, the mean value theorem
  implies the existence of a matrix-valued function $M$ of the form
  \begin{align*}
    M[w_j,w_{j-1}](t,x)
    :=M(t, x;\, &w_{j}, Dw_{j}, \del_x w_{j}, \del_x
    Dw_{j},\del_x^2 w_{j},\\
    &w_{j-1}, Dw_{j-1}, \del_x w_{j-1}, \del_x
    Dw_{j-1},\del_x^2 w_{j-1})
  \end{align*} 
  depending as a polynomial on all arguments involving $w_j$ and
  $w_{j-1}$ with the property that first
  \[M[w_j,w_{j-1}]\in 
  X_{\delta,\Re\lambda_2-\alpha+\epsilon_0,l-1,m_1-2(j-2)}, 
  \]
  and that second
  \be
    \label{eq:splitDiffF}
    F[w_j]-F[w_{j-1}]=M[w_j,w_{j-1}]\cdot 
      \Delta V[w_j,w_{j-1}].
  \ee
  Here, the
  vector-valued function $\Delta V$ is defined as
  \begin{align*}
    \Delta V[w_j,w_{j-1}]:=(&w_{j}-w_{j-1}, Dw_{j}-Dw_{j-1}, \del_x
    w_{j}-\del_x w_{j-1},
    \\
    &\del_x Dw_{j}-\del_x Dw_{j-1},\ldots,
    \del_x^2 w_{j}-\del_x^2 w_{j-1})^T.
  \end{align*}
  We have,
  \[\Delta V[w_j,w_{j-1}]\in X_{\delta,
    \alpha+(j-2)\kappa\epsilon_0,l-1,m_1-2(j-2)}.\]
  All this implies that
  \[F[w_j]-F[w_{j-1}]\in X_{\delta,
    \alpha+(j-2)\kappa\epsilon_0+\epsilon_0,l-1,m_1-2(j-2)}.
  \] 
  Then, \Propref{prop:Hestimate} yields
  \[w_{j+1}-w_{j}\in X_{\delta,
    \alpha+(j-1)\kappa\epsilon_0,l,m_1-k(j-2)}.
  \]   
  At this point we see the significance of the requirement
  $\kappa<1$. Namely, we must choose the constant
  $\epsilon$ in \Propref{prop:Hestimate} as
  $\epsilon=\epsilon_0(1-\kappa)$.
  
  The second claim of the theorem is now an immediate consequence. We
  write the system, which determines $w_j$, in the form
  \[L[w_j]-F[w_j]=F[w_{j-1}]-F[w_j].
  \] 
  Again, we can write the right side as above.  We conclude from the
  previous results that the right side is in $X_{\delta,
    \alpha+(j-2)\kappa\epsilon_0+\epsilon_0,l-1,m_1-2(j-2)}$ for $2\le
  j$.  Since the left side equals $R[w_{j}]$, the result follows.
\end{proof}


\section{Second-order hyperbolic Fuchsian systems}
\label{sec:hyperbolicfuchsianequations}

\subsection{Assumptions and basic definitions}
\label{sec:assumptions_SIVP}

From now on we focus on \textit{hyperbolic} second-order Fuchsian
equations in the sense of Definition~\ref{def:secondorderFuchsianHyp} below
---which form a special case of general second-order Fuchsian
equations. Our aim of this section is to establish a well-posedness theory for the
(singular) initial value problem when data are prescribed on the singularity. 

\begin{definition}[Second-order hyperbolic Fuchsian systems]
\label{def:secondorderFuchsianHyp}
A second-order hyperbolic Fuchsian system is a set of partial
differential equations of the form
\be
    \label{eq:secondorderFuchsianHyp}
      D^2 v+2A\, D v+B\, v 
      -t^2K^2\partial_x^2 v
      = f[v],
\ee
in which the function $v:(0,\delta]\times U\rightarrow \R^n$ is the
main unknown (defined for some $\delta>0$ and some interval $U$),
while the coefficients $A=A(x)$, $B=B(x)$, $K=K(t,x)$ are diagonal
$n\times n$ matrix-valued maps and are smooth in $x \in U$ and $t$ in
the half-open interval $(0,\delta]$, and $f=f[v](t, x)$ is an
$n$-vector-valued map of the following form
\be
  \label{eq:hypinh}
  f[v](t,x):=f\Big(t, x, v(t,x), Dv(t,x), t K(t,x) \del_x v(t,x)\Big).
\ee 
\end{definition}

As earlier, we assume that all functions are periodic with respect to
$x$ and that $U$ is a periodicity domain. Further restrictions on the
coefficients and on the right-hand side will be imposed and discussed
in the course of our investigation.  Hence, hyperbolic second-order
Fuchsian systems are second-order Fuchsian systems with a particular
structure of their right-hand side: in our new notation, we have
separated the second-order spatial derivatives from other terms in the
right-hand side $f$ and incorporate them into the principal part of
the equation, which now reads
\be
  \label{eq:defLPDE}
  L:=D^2+2A \, D +B-t^2K^2\partial_x^2.
\ee
This is a linear wave operator for $t>0$ and, indeed,
\Eqref{eq:secondorderFuchsianHyp} is hyperbolic provided $K$ satisfies
(positivity) conditions given below. Later on we will construct solutions
where  the first three terms of the principal part are of
the same order at $t=0$ and dominant as in the previous section, while
the second spatial derivative term is assumed to be of higher order at
$t=0$. Hence, we expect the same phenomenology at $t=0$ as earlier, and
the only significance of the new term in the principal part is that it
allows to derive energy estimates.

The eigenvalues of the matrix $K$ are denoted by $k^{(i)}$ and, in the
scalar case (or when there is no need to specify the index), we
simply write $k$. These quantities are interpreted as
characteristic speeds. Throughout this section, we assume that 
they have the form 
\begin{equation}
  \label{eq:behaviorofk}
  \begin{split}
    &k^{(i)}(t,x)=t^{\beta^{(i)}(x)}\nu^{(i)}(t,x),\\
    &\text{with }\beta^{(i)}:U\rightarrow (-1,\infty),\,
    \nu^{(i)}:[0,\delta]\times U\rightarrow (0,\infty)\text{ smooth functions.}
  \end{split}
\end{equation}
In particular, we assume that each derivative of $\nu^{(i)}$ has a
finite limit at $t=0$ for each $x\in U$. The motivation for these
assumptions will become clear in the forthcoming discussion.  Note
we allow for the characteristic speeds to diverge at
$t=0$. At a first glance, this appears to conflict with the standard
finite domain of dependence property of hyperbolic equations. Recall
that for the standard initial value problem of hyperbolic equations,
the solution at a given point is determined by the restriction of the
data to a bounded domain of the initial hypersurface; this is a
consequence of the finiteness of the characteristic speeds.  A closer
look at the requirement $\beta(x)>-1$, however, indicates that the
characteristic curves are {\sl integrable} at $t=0$ and, hence that the
{\sl finite} domain of dependence property is preserved under our
assumptions.

In order to simplify the presentation and without genuine loss of generality,
we restrict the discussion now to the scalar case $n=1$. 
Consider any second-order Fuchsian hyperbolic equation and 
for each non-negative integer $l$ and real numbers
$\delta,\alpha>0$, define the space $X_{\delta,\alpha,l}$
by
\[X_{\delta,\alpha,l}:=\bigcap_{p=0}^lX_{\delta,\alpha,p,l-p},
\]
and introduce the norm
\[\|f\|_{\delta,\alpha,l}:=\Biggl(\sum_{p=0}^l
  \|f\|_{\delta,\alpha,p,l-p}^2\Biggr)^{1/2}, \qquad 
  f\in X_{\delta,\alpha,l}.
\]
 Recall that the spaces
$X_{\delta,\alpha,l,m}$ and norms $\|\cdot\|_{\delta,\alpha,l,m}$ have
been introduced in the previous section.  It is straightforward to see
that the spaces $(X_{\delta,\alpha,l}, \|\cdot\|_{\delta,\alpha,l})$
have similar properties as the previously defined ones.  As we will
see in the following discussion, it is not possible to control
solutions of our equations in the spaces $X_{\delta,\alpha,l}$
directly. It turns out that we must use spaces $(\tilde
X_{\delta,\alpha,l}, \|\cdot\|_{\delta,\alpha,l}^\sim)$ instead which
are defined as earlier, but in the norm $\|f\|_{\delta,\alpha,l}^\sim$
of some function $f$, the highest spatial derivative term
$\partial^l_x f$ is weighted with the factor $t^{\beta+1}$. Here
$\beta$ is the characteristic speed of the equation given by
\Eqref{eq:behaviorofk}.  It is easy to see under the earlier conditions
that $(\tilde X_{\delta,\alpha,l}, \|\cdot\|_{\delta,\alpha,l}^\sim)$
are Banach spaces. Moreover, for any $w\in \tilde
X_{\delta,\alpha,l}$, the mollified function $w_\eta$ defined by
\Eqref{eq:mollifiers} is an element of $\tilde
X_{\delta,\alpha-\epsilon,l}\cap C^\infty((0,\delta]\times U)$ for
every $\epsilon>0$. Furthermore, the sequence of mollified functions
$w_\eta$ in the limit $\eta\rightarrow 0$ converges to $w$ in the norm
$\|\cdot\|_{\delta,\alpha-\epsilon,l}^\sim$. We also note that
$X_{\delta,\alpha,l}\subset\tilde X_{\delta,\alpha,l}$. For our later
discussion, let us define
\[X_{\delta,\alpha,\infty}:=\bigcap_{l=0}^\infty
X_{\delta,\alpha,l},
\] 
and note that $X_{\delta,\alpha,\infty}=\bigcap_{l=0}^\infty \tilde
X_{\delta,\alpha,l}$.

For $w\in \tilde X_{\delta,\alpha,1}$, the operator $L$ in \Eqref{eq:defLPDE}
is defined in the sense of distributions, only, via the following weak form:  
\begin{equation}
  \label{eq:weakL}
\aligned
   & \langle  \mathcal L[w],\phi\rangle
    \\
    &:=\int_{0}^\delta\int_\R
    t^{\Re\lambda_2(x)-\alpha} \Big(
    && -Dw(t,x) D\phi(t,x)
    +(2A(x)-\Re\lambda_2(x)+\alpha-1)\,Dw(t,x)\phi(t,x)\\
  &  &&+B(x)\,w(t,x)\phi(t,x)  
    + t K(t,x)\partial_x w(t,x) tK(t,x)\partial_x\phi(t,x)\\
  &  &&+(2t\partial_xK(t,x)+\partial_x\Re\lambda_2(x) K(t,x)t\ln t)
    t K(t,x)\partial_xw(t,x)\,\phi(t,x)\Big)\,\,dx dt,
\endaligned
\end{equation}
where $\phi$ is any test function, i.e.~a real-valued
$C^\infty$-function on $(0,\delta]\times\R$ together with some $T\in
(0,\delta)$ and a compact set $K\in\R$ so that $\phi(t,x)=0$ for all
$t>T$ and $x\not\in K$, and each derivative of $\phi$ has a finite
(not necessarily vanishing) limit at $t=0$ for every $x\in U$. The
formula \Eqref{eq:weakL} is obtained as follows. First we assume that
$w$ is a smooth function for $t>0$ in $\tilde X_{\delta,\alpha,1}$. We
compute $L[w]$ and multiply \Eqref{eq:defLPDE} with
$t^{\Re\lambda_2-\alpha}$. Then we integrate this expression in $x$ on
$U$ and in $t$ on $[\epsilon,\delta]$ for some $\epsilon>0$, and
integrate by parts. The resulting expression, which resembles
\Eqref{eq:weakL} plus a boundary term at $t=\epsilon$, is meaningful
for general $w\in \tilde X_{\delta,\alpha,1}$ and only the limit
$\epsilon\rightarrow 0$ remains to be checked. It turns out that the
assumption $w\in \tilde X_{\delta,\alpha,1}$ is sufficient to
guarantee that this limit is finite and in particular that the
boundary term at $t=\epsilon$ resulting from integration by parts goes
to zero in the limit.  For our later discussion, we note that for any
given test function $\phi$, the linear functional $\langle\mathcal
L[\cdot],\phi\rangle: \tilde X_{\delta,\alpha,1}\rightarrow\R$ is continuous
with respect to the norm $\|\cdot\|_{\delta,\alpha,1}^\sim$. This is the
main reason to include the factor $t^{\Re\lambda_2(x)-\alpha}$ in the
definition of $\mathcal L$.

Consider now functions $u$, $v$, $w$ on $(0,\delta]\times U$ related by 
\be
  \label{eq:splitv2}
  v(t,x)=u(t,x)+w(t,x).  
\ee 
In the following, $v$ will stand for a solution of a Fuchsian system.
The function $u$ will be called the {leading-order part} and $w$
the {remainder} of the solution at the singularity at $t=0$. In
agreement with the discussion of the canonical two-term expansion in
\Defref{def:remainder}, we will look for remainders $w$ in spaces
$\tilde X_{\delta,\alpha,l}$ for some $\alpha>0$. However, at this stage of
the discussion we will not yet fix the particular form of the function
$u$ and its dependence on the asymptotic data.  Indeed let us assume that
 $u$ is some given function. In analogy to our earlier discussion,
we introduce the operator $F$ as
\[
F[w](t,x):=f[u+w](t,x).
\]
If the operator $F$ is a map $\tilde X_{\delta,\alpha,1}\rightarrow
\tilde X_{\delta,\alpha,0}, w\mapsto F[w]$, which we shall assume later on,
it is meaningful to define its weak form by (for all test functions $\phi$)
\[\langle\mathcal F[w],\phi\rangle:=
\int_{0}^\delta\int_\R t^{\Re\lambda_2(x)-\alpha}
F[w](t,x)\phi(t,x)dxdt. 
\]

\begin{definition}[Weak solutions of second-order hyperbolic Fuchsian
  systems]
  \label{def:weaksolution}
  Let $u$ be a given function and $\delta, \alpha>0$ be constants. 
  Then, one says 
  that $w\in \tilde X_{\delta,\alpha,1}$ is a weak solution to the second-order
  hyperbolic Fuchsian equation \Eqref{eq:secondorderFuchsianHypForw},
  provided
  \be
    \label{eq:weakequation}    
    \mathcal P[w]:=
    \mathcal L[w]+\mathcal L[u]-\mathcal F[w]
    =0.
  \ee
\end{definition}
For our later discussion, we note that for any given test function
$\phi$ and under our earlier assumptions, 
the linear functional $\langle\mathcal P[\cdot],\phi\rangle$
on $\tilde X_{\delta,\alpha,1}$ is continuous with respect to the norm
$\|\cdot\|_{\delta,\alpha,1}^\sim$.  For
completeness, we also state here that the classical form of
the equation for $w$ \be
  \label{eq:secondorderFuchsianHypForw}
  L[w] = F[w]-L[u], 
\ee 
if $v$ is a classical solution of the equation
\Eqref{eq:secondorderFuchsianHyp}. Clearly, if $w$ and $u$
are sufficiently smooth, then $w$ is a weak
solution if and only if $w$ is a classical solution of
\Eqref{eq:secondorderFuchsianHypForw}, or equivalently if $v$ given by
\Eqref{eq:splitv2} is a classical solution to the original
second-order hyperbolic Fuchsian equation
\Eqref{eq:secondorderFuchsianHyp}.

We arrive at the following important notion. 

\begin{definition}[Singular initial value problem (SIVP)]
  \label{def:SIVP}
  Consider a second-order hyperbolic Fuchsian equation
  \Eqref{eq:secondorderFuchsianHypForw} with coefficients ($a, b,
  \lambda_1, \lambda_2)$ and characteristic speeds $k$ satisfying all the conditions 
  stated earlier. 
  Moreover, choose a leading-order part $u$.  Then, a function
  $v:(0,\delta]\times U\rightarrow\R$ is called a solution of the {\bf
    singular initial value problem} provided $w:=v-u$ belongs
  to $\tilde X_{\delta,\alpha,1}$ for some $\alpha>0$, and is a weak
  solution to the second-order hyperbolic Fuchsian system
  \Eqref{eq:weakequation}.
\end{definition}
In particular, we will be interested in the case when $u$ is
parametrized by asymptotic data in analogy to the canonical two-term
expansion. At this stage of the discussion, however, the particular
form of $u$ is not fixed yet.


\subsection{Linear theory in the space 
\texorpdfstring{$\tilde X_{\delta,\alpha,1}$}
{X{delta,alpha,1}}. Main statement}
\label{sec:Linear1Statement}

In this subsection and the following one, we study a particularly fundamental
case described by the two conditions: 
\begin{enumerate}
\item Vanishing leading-order part: $u\equiv 0$.
\item Linear source-term:
  \begin{equation}
    \label{eq:linearinhom2}
    F[w](t,x)=f_0(t,x)+f_1(t,x) w+f_2(t,x) Dw+f_3(t,x) t k \partial_x
    w,
  \end{equation}  
  with given functions $f_0$, $f_1$, $f_2$, $f_3$, so that
  $f_1$,
  $f_2$, $f_3$ are smooth spatially periodic on $(0,\delta]\times U$, and
  near $t=0$
  \begin{equation}
    \label{eq:linearinhomDecay}
    \sup_{x\in\bar U}f_a(t,x)=O(t^\mu), \qquad a=1,2,3,
  \end{equation}
  for some constant $\mu>0$. 
\end{enumerate} 
We have not made any assumptions for the function $f_0$ yet, since
in the following discussion this function will play a different role than $f_1$, $f_2$, $f_3$. 
Moreover, no loss of generality is implied
by the condition $u\equiv 0$, since
the general case can be recovered by absorbing $L[u]$ into the function $f_0$.

Under these assumptions, we pose the question whether there exists a
unique weak solution $w$ of the given second-order hyperbolic equation
in $\tilde X_{\delta,\alpha,1}$ for some $\delta,\alpha>0$.

\begin{proposition}[Existence of solutions of the linear singular
  initial value problem in $\tilde X_{\delta,\alpha,1}$]
  \label{prop:existenceSVIPlinear}
  Under the assumptions made so far, there
  exists a unique solution $w\in \tilde X_{\delta,\alpha,1}$ of the 
  singular initial value problem for given $\delta,\alpha>0$ provided:
  \begin{enumerate}
  \item The matrix
    \begin{equation}
      \label{eq:defNfinal3}
      N:=
      \begin{pmatrix}
        \Re(\lambda_1-\lambda_2)+\alpha & ((\Im\lambda_1)^2/\eta-\eta)/2 & 0 \\
        ((\Im\lambda_1)^2/\eta-\eta)/2 
        & \alpha 
        & t\partial_x{k}-\partial_x\Re(\lambda_1-\lambda_2)(t k\ln t) \\
        0 & t\partial_x{k}-\partial_x\Re(\lambda_1-\lambda_2)(t k\ln t) 
        & \Re(\lambda_1-\lambda_2)+\alpha-1-Dk/k
      \end{pmatrix}
    \end{equation}
    is positive semidefinite at each
    $(t,x)\in (0,\delta)\times U$ for a constant $\eta>0$.
  \item The source-term function $f_0$ is in
    $X_{\delta,\alpha+\epsilon,0}$ for some $\epsilon>0$.
  \end{enumerate}
  Then, the solution operator
  \[\mathbb H: X_{\delta,\alpha+\epsilon,0}
  \rightarrow \tilde X_{\delta,\alpha,1},\quad f_0\mapsto w,
  \] 
  is
  continuous and there exists a finite constant $C_\epsilon>0$ so that
  \begin{equation}
    \label{eq:continuityHPDE}
    \|\mathbb H[f_0]\|_{\delta,\alpha,1}^\sim \le \delta^{\epsilon}
    C_\epsilon \|f_0\|_{\delta,\alpha+\epsilon,0},
  \end{equation}
  for all $f_0$.
  The constant $C_\epsilon$ can depend on $\delta$, but is bounded for
  all small $\delta$.
\end{proposition}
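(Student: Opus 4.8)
The plan is to prove the proposition by an energy estimate carried out in the rescaled time $\eta=-\ln t$, to use that estimate as an a priori bound for a constructive (Galerkin) approximation, and to read off uniqueness from the same estimate. Throughout, recall $u\equiv0$, so a weak solution is a $w\in\tilde X_{\delta,\alpha,1}$ with $\mathcal L[w]=\mathcal F[w]$ in the sense of \eqref{eq:weakequation}.

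\emph{Step 1: the energy estimate.} For $w$ smooth on $(0,\delta]\times U$ one multiplies the classical equation $L[w]=F[w]$, with $L$ as in \eqref{eq:defLPDE} and $F$ linear as in \eqref{eq:linearinhom2}, by $2\,t^{\Re\lambda_2(x)-\alpha}\,Dw$, integrates over $U$, integrates by parts in $x$ in the term carrying $t^2K^2\partial_x^2w$, and collects total $D$-derivatives. Passing to $\eta=-\ln t$, where the weight becomes $e^{-(\Re\lambda_2-\alpha)\eta}$ and $D=-\partial_\eta$, one is led to an identity of the form
\[
\partial_\eta E(\eta)=\int_U e^{-2(\Re\lambda_2(x)-\alpha)\eta}\,V^{T}NV\,dx+\Phi_0+\Phi_1 ,
\]
where $E(\eta)$ is a quadratic energy equivalent, at $t=e^{-\eta}$, to the integrand defining $(\|w\|_{\delta,\alpha,1}^{\sim})^2$; the vector is $V:=(w,\,Dw,\,tK\partial_xw)^{T}$; the matrix is precisely $N$ of \eqref{eq:defNfinal3}, the free constant $\eta>0$ in $N$ arising from the way the zeroth-order coefficient (through $(\Im\lambda_1)^2=b-a^2$) is carried in the oscillatory case $a^2<b$, while the entries involving $\partial_xK$, $\partial_x\Re(\lambda_1-\lambda_2)$ and $Dk/k$ come from $x$-derivatives of the coefficients and from $D(tK)$; $\Phi_0$ collects the contribution of $f_0$ and, using hypothesis (2), is bounded by $e^{-\epsilon\eta}\|f_0\|_{\delta,\alpha+\epsilon,0}\,E(\eta)^{1/2}$; and $\Phi_1$ collects the $f_1,f_2,f_3$ contributions, which by \eqref{eq:linearinhomDecay} are bounded by $C\,e^{-\mu\eta}E(\eta)$. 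Hypothesis (1), $N$ positive semidefinite, makes the first term on the right nonnegative. Integrating the identity down from $\eta=+\infty$, where the boundary contribution vanishes — this is exactly the property the weight $t^{\Re\lambda_2-\alpha}$ and the particular combination of lower-order terms in \eqref{eq:weakL} are designed to ensure, as observed in the text — together with a Gronwall argument absorbing $\Phi_1$ for $\delta$ small and the bound $\int_{-\ln t}^{+\infty}e^{-\epsilon\eta}\,d\eta\le\epsilon^{-1}\delta^{\epsilon}$, yields after a Young inequality $\sup_{0<t\le\delta}E\le C_\epsilon\,\delta^{2\epsilon}\|f_0\|_{\delta,\alpha+\epsilon,0}^{2}$, that is, the estimate \eqref{eq:continuityHPDE}, with $C_\epsilon$ bounded for small $\delta$ exactly as in \Propref{prop:Hestimate}. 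The three components of $V$ are the reason we must work in $\tilde X_{\delta,\alpha,1}$: the natural energy controls $tK\partial_xw$, carrying the extra weight $t^{\beta+1}$ forced by \eqref{eq:behaviorofk}, and not $\partial_xw$; correspondingly $N$ is $3\times3$.

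\emph{Step 2: existence.} I would truncate the Fourier series in $x$ at level $m$. The truncated system is, mode by mode, a second-order Fuchsian system of \emph{ordinary} differential equations: the operator $-t^2K^2\partial_x^2$ becomes multiplication by $t^{2+2\beta(x)}\nu(x)^2(\text{mode})^2$, a zeroth-order term vanishing at $t=0$ since $\beta>-1$, so the indicial roots $\lambda_1,\lambda_2$ are unchanged; it is absorbed either as a time-dependent coefficient (cf.\ the remark after \Defref{def:Fuchsian}) or as an $\epsilon_0$-gaining perturbation of the source. Since the source \eqref{eq:linearinhom2} is linear and, by \eqref{eq:linearinhomDecay}, has small Lipschitz constant, \Theoremref{th:fixedpointtheorem} (with $k=0$) produces, for $\delta$ small, a solution $w^{(m)}$ of the truncated singular initial value problem with vanishing asymptotic data. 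The matrix $N$ does not involve $m$, so the estimate of Step 1 holds for $w^{(m)}$ uniformly in $m$; hence $(w^{(m)})$ is bounded in $\tilde X_{\delta,\alpha,1}$, and a weakly convergent subsequence $w^{(m)}\rightharpoonup w$ is a weak solution, passing to the limit in the linear weak equation \eqref{eq:weakequation} via continuity of $\langle\mathcal P[\cdot],\phi\rangle$ on $\tilde X_{\delta,\alpha,1}$ for each test function $\phi$; weak lower semicontinuity of the norm gives \eqref{eq:continuityHPDE} for $w$. One may alternatively first treat the case $f_1=f_2=f_3=0$, obtaining the operator $\mathbb H_0$, and recover the general case by a contraction using $\|\mathbb H_0\|\le C\delta^{\epsilon}$ and \eqref{eq:linearinhomDecay}; this is the formulation that discretizes into a numerical scheme.

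\emph{Step 3: uniqueness, and the main obstacle.} If $w,w'$ both solve, then $\bar w=w-w'\in\tilde X_{\delta,\alpha,1}$ is a weak solution with $f_0\equiv0$; approximating $\bar w$ by its mollifications \eqref{eq:mollifiers}, which converge in $\|\cdot\|_{\delta,\alpha-\epsilon,1}^{\sim}$ (an arbitrarily small loss in the weight, absorbed by starting from a slightly larger $\alpha$), and applying Step 1 with vanishing source forces $\bar w=0$. The real work is Step 1: arranging the integration by parts so that every term which is not manifestly of good sign or of lower order assembles \emph{exactly} into $V^{T}NV$ with the matrix \eqref{eq:defNfinal3} — placing the $\partial_xK$, $\partial_x\Re(\lambda_1-\lambda_2)$ and $Dk/k$ terms and the oscillatory $\eta$-dependent entries correctly — and checking that the weighted energy has a finite limit as $t\to0$ on all of $\tilde X_{\delta,\alpha,1}$, which is precisely where the $t^{\beta+1}$-weighting of the top spatial derivative is indispensable. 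A secondary technical point is the uniformity in $m$ of the energy bound and the treatment of the singular ODEs for the Fourier coefficients.
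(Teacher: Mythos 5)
Your Step 1 is, in substance, the paper's Lemma~\ref{lem:energyestimates}: the same weighted energy, the same integration by parts in $x$ producing the $tK\partial_xw$ component, and the same dissipation matrix $N$ (the paper obtains the exact entries of \eqref{eq:defNfinal3} only after the substitution $\widetilde w=t^{\Re\lambda_1}w$, which you should make explicit, since the direct energy of $w$ gives a different quadratic form). The divergence is in Step 2, and that is where your argument has a genuine gap. The paper does \emph{not} discretize in space: it solves the classically well-posed \emph{regular} initial value problem with zero data at $t_0=\tau_n$, extends by zero to $(0,\tau_n]$, and uses the $t_0$-uniformity of the energy estimate to show the resulting sequence is Cauchy in $\tilde X_{\delta,\alpha,1}$ as $\tau_n\to0$; the whole boundary-term difficulty at the singularity is thereby replaced by the exact vanishing of the data at $t_0$. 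Your Fourier--Galerkin truncation does not decouple ``mode by mode'': $K$, $a$, $b$, the $f_a$, and the weight $t^{2(\Re\lambda_2(x)-\alpha)}$ all depend on $x$, so the truncated system is a \emph{coupled} second-order ODE system whose coefficient matrices are projected multiplication operators $P_m a P_m$, $P_m b P_m$. These are not diagonal and need not be simultaneously diagonalizable, so Theorem~\ref{th:fixedpointtheorem} (which presupposes the diagonal principal part of Definition~\ref{def:Fuchsian}) does not apply as stated to produce $w^{(m)}$. Moreover, the asserted uniform-in-$m$ energy bound requires commuting $P_m$ past the $x$-dependent weight and coefficients; the resulting commutator terms are exactly what a Galerkin argument for variable-coefficient hyperbolic equations must control, and you do not address them.

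A secondary point: you repeatedly integrate the energy identity ``down from $\eta=+\infty$'' on the grounds that the boundary contribution vanishes at $t=0$. Membership in $\tilde X_{\delta,\alpha,1}$ only makes the weighted energy \emph{bounded} as $t\to0$, not vanishing, so this step is justified only for approximants constructed with data vanishing at some $t_0>0$ (or with an extra $t^\epsilon$ of decay inherited from $f_0\in X_{\delta,\alpha+\epsilon,0}$), not for an arbitrary element of the space. This matters most in your uniqueness Step 3, where the difference of two weak solutions is only known to have bounded energy near $t=0$; integrating the dissipation inequality from $t=0$ then yields only $E(t)\le\limsup_{s\to0}E(s)$, which is vacuous. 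If you replace the Galerkin truncation by the paper's regularization in time, both of these issues disappear simultaneously, and the rest of your outline (Cauchy property, passage to the limit via continuity of $\langle\mathcal P[\cdot],\phi\rangle$, density in $f_0$) matches the paper's proof.
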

For reasons that will become clear later on, we call $N$ the
\textbf{energy dissipation matrix}. We have assumed that $\alpha$ is a
positive constant. If, however, $\alpha$ is a positive spatially
periodic function in $C^1(U)$, the definition of the spaces
$X_{\delta,\alpha,k}$ and $\tilde X_{\delta,\alpha,k}$ remains the
same, and only the $(2,3)$- and $(3,2)$-components of the energy
dissipation matrix $N$ change to
$t\partial_x{k}-\partial_x(\Re(\lambda_1-\lambda_2)+\alpha)(t k\ln
t)$. In the following, we continue to assume that $\alpha$ is a
constant in order to keep the presentation as simple as possible, but
we stress that all following results hold (with this slight change of
$N$) if $\alpha$ is a function, and hence no new difficulty arise.

\subsection{Linear theory in the space \texorpdfstring{$\tilde
    X_{\delta, \alpha,1}$}{X{delta,alpha,1}}. The proof}

The main idea for the proof is to approximate a solution of the {\sl
  singular} initial value problem by a sequence of solutions of {\sl
  regular} initial value problems.

\begin{definition}[Regular initial value problem (RIVP)]
  Fix $t_0\in (0,\delta]$ and some smooth periodic functions
  $g,h:U\rightarrow\R$, and suppose that the right-hand side is
  of the form \Eqref{eq:linearinhom2} with given smooth spatially
  periodic functions $f_0$, $f_1$, $f_2$, $f_3$ on $[t_0,\delta]\times
  U$.  Then, $w:[t_0,\delta]\times U\rightarrow\R$ is called a
  solution of the {\bf regular initial value problem}
  associated with the \textbf{regular data} $g, h$ if
  \Eqref{eq:secondorderFuchsianHypForw} holds everywhere on
  $(t_0,\delta]\times U$ and, moreover,
  \[w(t_0,x)=g(x), \quad \partial_t w(t_0,x)=h(x).\]
\end{definition}

For the regular initial value problem, we indeed assume that
$f_0$ is smooth just as $f_1$, $f_2$ and $f_3$. By the general theory of linear hyperbolic
 equations, the regular initial value problem is well-posed, in the sense that there exists a
unique smooth solution $w$ defined on $[t_0,\delta]$ for any choice of smooth initial data.

In order to simplify the presentation, we restrict to the scalar case
$n=1$ for this whole section; the general case can be obtained with
the same ideas.
Choose $\delta,\alpha>0$ and let $w\in C^{1}((0,\delta]\times U)$
be a spatially periodic function. Then, we define its {\bf energy} at
the time $t\in (0,\delta]$ by
\be
  \label{eq:defenergy}
\aligned
    E[w](t):=&  e^{-\kappa t^\gamma} 
    \int_U t^{2(\lambda_2(x)-\alpha)} \, e[w](t,x) \, dx,
    \\
    e[w](t,x) :=&  
    \frac 12\left((\eta\, w(t,x))^2+(Dw(t,x))^2 
      +(t k(t,x)\partial_x w(t,x))^2\right),
    \endaligned
\ee 
for some constants $\kappa\ge 0$, $\gamma>0$ and $\eta>0$.  
For convenience, we also introduce the following notation. For any
scalar-valued function $w$, we define the vector-valued function
\be
  \label{eq:hatw}
  \widehat w(t,x):=t^{\Re\lambda_2(x)-\alpha}
  (\eta w(t,x),Dw(t,x),t k(t,x)\partial_x w(t,x)),
\ee
involving the same constants as in the energy.
Then, we can write
\be
  \label{eq:relationEnergyNorm}
  E[w](t)=\frac 12 e^{-\kappa t^\gamma}\|\widehat
  w(t,\cdot)\|^2_{L^2(U)},
\ee
the norm here being the Euclidean $L^2$-norm for vector-valued
functions in $x$.  It is important to realize that, provided $\eta>0$,
the expression $\sup_{0<t\le\delta} \|\widehat w(t,\cdot)\|_{L^2(U)}$
for functions of the form \Eqref{eq:hatw} yields a norm which is
equivalent to $\|\cdot\|_{\delta,\alpha,1}^\sim$, thanks to
\Eqref{eq:behaviorofk}. Therefore, the energy \Eqref{eq:defenergy} is
of relevance for the discussion of functions in $\tilde X_{\delta,\alpha,1}$.

\begin{lemma}[Fundamental energy estimate for the regular initial
  value problem]
  \label{lem:energyestimates}
  Suppose that the source-term is of the form \Eqref{eq:linearinhom2}
  with the conditions \Eqref{eq:linearinhomDecay} and that the energy
  dissipation matrix \Eqref{eq:defNfinal3} is positive semidefinite on
  $(0,\delta]\times U$ for given constants $\alpha,\eta>0$.  Then, if
  $\delta>0$ is sufficiently small, there exist constants
  $C,\kappa,\gamma>0$, independent of the choice of $t_0\in
  (0,\delta]$, so that for all solutions $w$ of the regular initial
  value problem with smooth regular data at $t=t_0$, we have \be
   \label{eq:integratedEnergyEstimate}
   \aligned
    &      \|\widehat{w}(t,\cdot)\|_{L^2(U)}
    \\
    & \leq  
    C \, e^{\frac 12\kappa(t^\gamma-t_0^\gamma)}
    \Bigl(\|\widehat{w}(t_0,\cdot)\|_{L^2(U)}
    +\int_{t_0}^t
    s^{-1}\|s^{\Re\lambda_2-\alpha}
    f_0(s,\cdot)\|_{L^2(U)}ds\Bigr),
   \endaligned
  \ee 
  for all $t\in [t_0,\delta]$.
\end{lemma}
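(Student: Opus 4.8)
The plan is to derive a differential inequality for the energy $E[w](t)$ defined in \Eqref{eq:defenergy} and then integrate it. First I would assume $w$ is a smooth solution of the regular initial value problem on $[t_0,\delta]$ and compute $\frac{d}{dt}E[w](t)$ directly. Differentiating under the integral sign, the time derivative of the weighted integrand produces three types of contributions: (i) terms coming from differentiating the explicit powers of $t$ and the cutoff factor $e^{-\kappa t^\gamma}$, (ii) terms coming from differentiating the quadratic form $e[w]$ in which we substitute the equation \Eqref{eq:secondorderFuchsianHypForw} to replace $D^2 w$, and (iii) a spatial-derivative cross term $t k \partial_x w \cdot \partial_x(Dw)$ which must be integrated by parts in $x$ (using spatial periodicity, so no boundary terms) to convert it into lower-order form. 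The key algebraic point is that, after this integration by parts and collecting terms, the ``dangerous'' part of $\frac{d}{dt}E[w](t)$ is precisely $-\frac{1}{t}e^{-\kappa t^\gamma}\int_U t^{2(\Re\lambda_2-\alpha)}\,\widehat w^{\mathsf T} N \widehat w\,dx$ up to lower-order and source contributions, where $N$ is the energy dissipation matrix \Eqref{eq:defNfinal3}. This is the reason the particular combination of weights $t^{2(\Re\lambda_2-\alpha)}$ and the auxiliary constant $\eta$ were chosen in the first place.

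Next I would exploit the positive semidefiniteness of $N$: it forces the dangerous term to have the favorable sign $\le 0$, so it may be discarded in an upper bound for $\frac{d}{dt}E[w](t)$. What remains to control are (a) the contributions of $f_1, f_2, f_3$, which by the decay assumption \Eqref{eq:linearinhomDecay} are $O(t^\mu)$ with $\mu>0$ and hence, for $\delta$ small, can be absorbed — either into $N$ (shifting it by an $O(t^\mu)$ perturbation, still positive semidefinite for small $\delta$ after possibly enlarging the $(1,1)$ and $(2,2)$ slack) or into a term of the form $\frac{C}{t}E[w](t)$; (b) terms arising from $\partial_x k$, $Dk/k$, $\partial_x\Re\lambda_2$ and similar smooth coefficients, which under \Eqref{eq:behaviorofk} are bounded by $C/t$ times the energy density and can be grouped with the $e^{-\kappa t^\gamma}$-factor — this is exactly where the Gronwall-type exponential $e^{\frac12\kappa(t^\gamma - t_0^\gamma)}$ comes from, the constants $\kappa,\gamma>0$ being tuned so that $-\kappa\gamma t^{\gamma-1}E$ dominates those $C/t$ remainders; and (c) the genuine source $f_0$, whose contribution to $\frac{d}{dt}E$ is bounded by a Cauchy–Schwarz estimate of the form $\frac{1}{t}\|t^{\Re\lambda_2-\alpha} f_0(t,\cdot)\|_{L^2(U)}\cdot E[w](t)^{1/2}$ (up to constants). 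Collecting all of this yields a differential inequality of the shape $\frac{d}{dt}E[w](t)^{1/2} \le \frac{C}{t}\|t^{\Re\lambda_2-\alpha}f_0(t,\cdot)\|_{L^2(U)}$ after the $\kappa,\gamma$ terms have absorbed the homogeneous $C/t$ part; integrating from $t_0$ to $t$ and using \Eqref{eq:relationEnergyNorm} to pass between $E[w]^{1/2}$ and $\|\widehat w(t,\cdot)\|_{L^2(U)}$ gives \Eqref{eq:integratedEnergyEstimate}, with the constants $C,\kappa,\gamma$ manifestly independent of $t_0$ because the inequality was pointwise in $t$.

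Finally I would note that the bookkeeping must be done carefully enough that all constants are uniform in $t_0\in(0,\delta]$ and bounded as $\delta\to 0$; this is automatic because every estimate above is a pointwise-in-$t$ statement on $(0,\delta]$ and the smallness of $\delta$ is used only to guarantee (once and for all) that the $O(t^\mu)$ perturbation keeps $N$ positive semidefinite and that $\kappa\gamma t^{\gamma-1}$ beats the fixed coefficient in the $C/t$ homogeneous remainder.

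I expect the main obstacle to be step (b) — the precise matching of the smooth-coefficient remainder terms (those involving $\partial_x k$, $Dk/k$, $\partial_x \Re\lambda_2$, and the $\ln t$ factors that appear when spatial derivatives hit spatially-dependent powers of $t$) against the off-diagonal and diagonal entries of $N$, so that exactly a copy of $\widehat w^{\mathsf T} N\widehat w$ is isolated with the correct sign, with everything left over being honestly of lower order and absorbable by the exponential weight. Getting the algebra to close is the crux; once the correct quadratic form $N$ emerges, the positive-semidefiniteness hypothesis does the rest and the Gronwall integration is routine.
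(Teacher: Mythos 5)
Your overall strategy (differentiate the energy \eqref{eq:defenergy}, substitute the equation for $D^2w$, integrate by parts in $x$ using periodicity, isolate a quadratic form, invoke semidefiniteness, then apply Gronwall) is indeed the paper's strategy, but there is a genuine gap at exactly the point you yourself flag as the crux: the quadratic form produced by the direct computation with $w$ is \emph{not} the energy dissipation matrix $N$ of \eqref{eq:defNfinal3}. Carrying out your step for $w$ itself yields a matrix whose $(1,1)$, $(2,2)$ and $(3,3)$ entries are $\alpha-\Re\lambda_2$, $\Re\lambda_1+\alpha$ and $\alpha-\Re\lambda_2-1-Dk/k$, and whose $(1,2)$ entry involves $b/\eta$ rather than $(\Im\lambda_1)^2/\eta$. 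Positive semidefiniteness of $N$ does not imply positive semidefiniteness of that matrix: for complex conjugate roots $\lambda_1=\overline{\lambda_2}$ with common real part $a>\alpha$ one has $N_{11}=\Re(\lambda_1-\lambda_2)+\alpha=\alpha>0$ while the direct $(1,1)$ entry is $\alpha-a<0$, so the ``dangerous'' term does not have the favorable sign. The paper closes the algebra by first rescaling, $\widetilde w:=t^{\Re\lambda_1}w$, which shifts the roots to $\lambda_i-\Re\lambda_1$ and makes the quadratic form for $\widetilde w$ exactly $N$; the Gronwall argument is run for $\widehat{\widetilde w}$, and one returns to $\widehat w$ at the very end through the invertible triangular matrix $T$ of \eqref{eq:trafomatrixT}, which is where the constant $C$ in \eqref{eq:integratedEnergyEstimate} originates. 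Without this (or an equivalent device) your assertion that ``exactly a copy of $\widehat w^{\mathsf T}N\widehat w$ is isolated'' is false as stated.

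A second, smaller flaw is the plan in your item (b) to let $-\kappa\gamma t^{\gamma-1}E$ absorb ``$C/t$ remainders'': since $t^{\gamma-1}=t^{\gamma}\cdot t^{-1}\ll t^{-1}$ as $t\to 0$, no choice of $\kappa,\gamma>0$ makes $\kappa\gamma t^{\gamma-1}$ dominate a genuine $C/t$ term uniformly on $(0,\delta]$. In the paper the coefficient terms involving $\partial_xk$, $Dk/k$ and $\partial_x\Re\lambda_2\,(tk\ln t)$ are not absorbed by the damping at all; they are entries of $N$ and are controlled solely by the semidefiniteness hypothesis. The only contributions absorbed by the $\kappa\gamma t^{\gamma-1}$ damping are those of $f_1,f_2,f_3$ (and of the rescaling), which are $O(t^{\mu'-1})$ with $\mu'>0$ by \eqref{eq:linearinhomDecay}, so that choosing $\gamma\le\mu'$ and $\kappa$ large suffices and the constants remain uniform in $t_0$.
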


The role of the matrix $N$ in \Eqref{eq:defNfinal3} in the proof of
this result motivates the name ``energy dissipation
matrix''. Moreover, this results demonstrates the importance of the
assumption $\beta(x)>-1$ in \Eqref{eq:behaviorofk}. Namely, if
$\beta(x)\le -1$ at a point $x\in U$, then for any choice of $\alpha$
and $\eta$, the matrix $N$ would not be positive semidefinite for
small $t$ at $x$. While the energy estimate would still be true for a
given $t_0$, we would nevertheless lose uniformity of the constants
in the estimates with respect to $t_0$. We already stress at this
stage that it is this uniformity that will be crucial in the proof of
\Propref{prop:existenceSVIPlinear}.

\begin{proof}[Proof of Lemma~\ref{lem:energyestimates}]
  It will be convenient to work with the function
  \[\widetilde w:=t^{\lambda} w\]
  for some smooth periodic function $\lambda:U\rightarrow\R$, in order
  to optimize the positivity requirement on the matrix $N$ at the end
  of this proof.  Since $w$ is a smooth solution of
  \Eqref{eq:secondorderFuchsianHypForw},
  \[D^2w+2a Dw+b w-t^2 k^2 \partial_x^2 w=F[w]
  \] 
  for $t\ge t_0$ with coefficients $a$, $b$, (or $\lambda_1$ and
  $\lambda_2$), and $k$, it follows by direct computation that
  \begin{equation*}
    D^2\widetilde w+2\widetilde a D\widetilde w+\widetilde b \widetilde w-t^2
    k^2\partial_x^2 \widetilde w 
    =t^{\lambda}f_0 +F_L[\tilde w].
  \end{equation*}
  Here, $F_L[\tilde w]$ is an expression linear in $\eta \widetilde
  w$, $D\widetilde w$ and $t k\partial_x\widetilde w$ with smooth
  coefficient functions, which are\footnote{In the case that $\lambda$
    is not a constant, this is strictly speaking only true if
    $\lambda$ is not too negative.} $O(t^{\mu'})$ at $t=0$ for some
  constant $\mu'>0$. Hence, the new source-term is again of the form
  \Eqref{eq:linearinhom2} with the conditions
  \Eqref{eq:linearinhomDecay}. The coefficients of the principle part
  are given by
  \[\widetilde a=a-\lambda,\quad \widetilde b=b-2 a \lambda+\lambda^2,\]
  so that
  \[\widetilde\lambda_1=\lambda_1-\lambda,\quad
  \widetilde\lambda_2=\lambda_2-\lambda.
  \] 
  We consider the energy $E[\widetilde w]$ with respect to these 
  coefficients
  and find
  \begin{align*}
    DE[\widetilde w]=&-\kappa\gamma t^\gamma E[\widetilde w]
    +e^{-\kappa t^\gamma} 
    \int_U 2(\Re\widetilde\lambda_2-\alpha) 
    t^{2(\Re\widetilde\lambda_2-\alpha)}e[\tilde w] 
    dx\\
    &+e^{-\kappa t^\gamma} 
    \int_U  t^{2(\Re\widetilde\lambda_2-\alpha)} De[\tilde w] dx.
  \end{align*}
  Now,
  \begin{align*}
    De[\widetilde w]&=\eta^2\, \widetilde w D\widetilde w
    +D\widetilde w D^2\widetilde w
    +(1+{D{k}}/{{k}})({t k}\partial_x \widetilde w)^2
    +t^2 k^2\partial_x\widetilde w \partial_xD\widetilde w\\
    &=\eta^2\, \widetilde w D\widetilde w+D\widetilde w 
    (-2\widetilde a D\widetilde w-\widetilde b \widetilde w 
    +t^2 k^2\partial_x^2 \widetilde w+ t^{\lambda}f_0+F_L[\widetilde w])\\
    &\quad+(1+{D{k}}/{{k}})({t k}\partial_x \widetilde w)^2
    +t^2 k^2\partial_x\widetilde w \partial_xD\widetilde w\\
    &=-(\widetilde b-\eta^2) \widetilde w D\widetilde w
    -2\widetilde a (D\widetilde w)^2 
    +(1+{D{k}}/{{k}})({t k}\partial_x \widetilde w)^2
    +D\widetilde w  (t^{\lambda}f_0+F_L[\widetilde w])\\
    &\quad+t^2 k^2\partial_x^2 \widetilde w D\widetilde w
    +t^2 k^2\partial_x\widetilde w \partial_xD\widetilde w.
  \end{align*}
  When $De[\widetilde w]$ is multiplied with 
  $t^{2(\Re\widetilde\lambda_2(x)-\alpha)}$, the last
  two terms can be treated as follows
  \begin{align*}
    &t^{2(\Re\widetilde\lambda_2(x)-\alpha+1)}(
    {k}^2\partial_x^2 \widetilde w D\widetilde w
    +{k}^2\partial_x\widetilde w \partial_xD\widetilde w)\\
    &=\partial_x(t^{2(\Re\widetilde\lambda_2(x)-\alpha+1)}{k}^2
    \partial_x \widetilde w D\widetilde w)
    -2t^{2(\Re\widetilde\lambda_2(x)-\alpha)}
    (t \partial_x{k})({t k}\partial_x \widetilde w) D\widetilde w\\
    &\quad-2 t^{2(\Re\widetilde\lambda_2(x)-\alpha)}
    (\partial_x\Re\widetilde\lambda_2)
    ({t k}\ln t)({t k}\partial_x \widetilde w) D\widetilde w.
  \end{align*}
  The first term on the right vanishes after integration in space by
  virtue of periodicity on the domain $U$. Now we collect all terms of
  $DE$ as follows
  \[DE[\widetilde w]=:DE_1[\widetilde w]+DE_2[\widetilde w],\]
  where
  \begin{align*}
    DE_1[\widetilde w]:=
    e^{-\kappa t^\gamma} \int_U t^{2(\Re\widetilde\lambda_2-\alpha)}
    \Bigl(&
    (\Re\widetilde\lambda_2-\alpha)(\eta \widetilde w)^2
    +(\Re\widetilde\lambda_2-\alpha-2\widetilde a)(D\widetilde w)^2\\
    &+(\Re\widetilde\lambda_2
    -\alpha+1+\frac{D{k}}{k})({t k}\partial_x\widetilde w)^2
    -(\widetilde b/\eta-\eta)(\eta \widetilde w)D\widetilde w\\
    &-2(t\partial_x{k}
    +(\partial_x\Re\widetilde\lambda_2)
    ({t k}\ln t))({t k}\partial_x \widetilde w)D\widetilde w\Bigr)dx,
  \end{align*}
  and
  \begin{equation*}
    \begin{split}
      DE_2[\widetilde w]:=\frac 12 e^{-\kappa t^\gamma} \int_U
      t^{2(\Re\widetilde\lambda_2-\alpha)} \Bigl(& -\kappa\gamma
      t^\gamma(\eta \widetilde w)^2 -\kappa\gamma t^\gamma(D\widetilde w)^2
      -\kappa\gamma t^\gamma({t k} \partial_x \widetilde w)^2\\
      &+2(t^{\lambda}f_0+F_L[\widetilde w])D\widetilde w\Bigr)dx.
    \end{split}
  \end{equation*}
  Using the expressions of $\widetilde\lambda_{1}$ and $\widetilde\lambda_{2}$, we get
  \begin{align*}
    DE_1[\widetilde w]:=
    \int_U t^{2(\Re\lambda_2-\lambda-\alpha)}
    \Bigl(&
    -(\lambda-\Re\lambda_2+\alpha)(\eta \widetilde w)^2
    -(\Re\lambda_1-\lambda+\alpha)(D\widetilde w)^2\\
    &-(\lambda-\Re\lambda_2+\alpha-1-\frac{D{k}}{k})
    ({t k}\partial_x\widetilde w)^2\\
    &-\left((\Im\lambda_1)^2/\eta-\eta\right)
    (\eta \widetilde w)D\widetilde w\\
    &-2(t\partial_x{k}
    +\partial_x(\Re\lambda_2-\lambda)({t k}\ln t))
    ({t k}\partial_x \widetilde w)D\widetilde w\Bigr)dx.
  \end{align*}
  When we choose $\lambda=\Re\lambda_1$, as we will do now, we can
  write $DE_1$ as follows
  \begin{equation*}
    DE_1[\widetilde w]=\int_U  
    -(\widehat{\widetilde w}\cdot N\cdot \widehat{\widetilde w}^T) dx,
  \end{equation*}
  where $N$ is the energy dissipation matrix in \Eqref{eq:defNfinal3} and
  \[\widehat{\widetilde w}:=t^{\Re\widetilde\lambda_2(x)-\alpha}
  (\eta \widetilde w(t,x),D\widetilde w(t,x),t k(t,x)\partial_x\widetilde
  w(t,x)).
  \]
  Note that $\widehat{\widetilde w}=\widehat w\cdot T$ with
  $\widehat w$ from \Eqref{eq:hatw} and
  \begin{equation}
    \label{eq:trafomatrixT}
    T:=
    \begin{pmatrix}
      1 & \Re\lambda_1/\eta & (\partial_x\Re\lambda_1) (t k \ln t)/\eta\\
      0 & 1 & 0\\
      0 & 0 & 1
    \end{pmatrix}.
  \end{equation}
  The matrix $T$ is invertible and can hence be interpreted as the
  transformation matrix from the variable $w$ to the variable
  $\widetilde w$ and vice versa. Thus, if $N$ is positive semidefinite
  at all $(t,x)$, it follows that $DE_1[\widetilde w]\le 0$ for all
  $\widetilde w$, and hence for all $w$. According to the hypothesis
  of this lemma, this is the case, and we are left with
  \begin{equation*} 
    DE[\widetilde w](t)\le DE_2[\widetilde w](t).
  \end{equation*}
  Denote by $\langle\cdot,\cdot\rangle_{L^2(U)}$ the Euclidean
  $L^2$-scalar product.  Thanks to the properties of the expression
  $F_L[\widetilde w]$, we can choose $\kappa>0$ (large enough) and
  $\gamma>0$ (small enough) so that, uniformly for all $t_0$ (provided
  $\delta$ is sufficiently small),
  \[DE_2[\widetilde w](t)\le
    e^{-\kappa t^\gamma}
    \langle (0,t^{\Re\lambda_2-\alpha}f_0(t,\cdot),0),
    \widehat{\widetilde w}(t,\cdot)\rangle_{L^2(U)}.
  \] 
  
  In order to integrate the differential inequality for $E[\widetilde
  w]$ now, we use \Eqref{eq:relationEnergyNorm} and the Cauchy-Schwarz
  inequality to obtain
  \begin{equation*}
    D\left(
      \frac 12e^{-\kappa t^\gamma}\|\widehat{\widetilde w}(t,\cdot)\|_{L^2(U)}^2\right)
    \le e^{-\kappa t^\gamma}\|t^{\Re\lambda_2-\alpha}f_0(t,\cdot)\|_{L^2(U)}
    \|\widehat{\widetilde w}(t,\cdot)\|_{L^2(U)}.
  \end{equation*}
  This yields
  \[\frac d{dt}\|\widehat{\widetilde  w}(t,\cdot)\|_{L^2(U)}\le 
  \frac 12\kappa\gamma t^{\gamma-1}\|\widehat{\widetilde  w}(t,\cdot)\|_{L^2(U)}
  +t^{-1}\|t^{\Re\lambda_2-\alpha}f_0(t,\cdot)\|_{L^2(U)}.
  \]
  Then the
  Gronwall inequality implies
  \begin{equation*} 
    \|\widehat{\widetilde  w}(t,\cdot)\|_{L^2(U)}
    \le e^{\frac 12\kappa(t^\gamma-t_0^\gamma)}
    \left(\|\widehat{\widetilde  w}(t_0,\cdot)\|_{L^2(U)}
      +\int_{t_0}^t s^{-1}\|s^{\Re\lambda_2-\alpha}f_0(s,\cdot)\|_{L^2(U)}ds\right),
  \end{equation*}
  for all $t_0\le t\le\delta$. Thanks to the properties of the
  transformation matrix $T$ defined in \Eqref{eq:trafomatrixT} under
  our assumptions, one can check that there exists a
  constant $C>0$ (independent of $t_0$ and $t$) so that
  \begin{equation*} 
    \|\widehat{w}(t,\cdot)\|_{L^2(U)}\le C e^{\frac 12\kappa(t^\gamma-t_0^\gamma)}
    \left(\|\widehat{w}(t_0,\cdot)\|_{L^2(U)}
      +\int_{t_0}^t s^{-1}\|s^{\Re\lambda_2-\alpha}f_0(s,\cdot)\|_{L^2(U)}ds\right).
  \end{equation*}
\end{proof}

\begin{proof}[Proof of \Propref{prop:existenceSVIPlinear}]
  We start by assuming a smooth function $f_0$ on $t>0$, and look for
  weak solutions $w\in \tilde X_{\delta,\alpha,1}$. The first step is to
  consider a monotonically decreasing sequence
  $(\tau_n)_{n\in\N}\subset (0,\delta]$ converging to $0$. We define a
  sequence $(w_n)_{n\in\N}$ of functions on $(0,\delta]\times U$ as
  follows. For all $n\in\N$, we set $w_n(t)=0$ for all
  $t\in(0,\tau_n]$. On the interval $[\tau_n,\delta]$, we let $w_n$ be
  the unique solution of the RIVP for zero regular data at
  $t_0=\tau_n$.  The linearity of the equation and the conditions on
  the coefficients imply that $w_n$ is well-defined on the whole
  interval $(0,\delta]$, and that $w_n\in C^1([0,\delta]\times U)$. It
  is easy to see that indeed, $(w_n)\subset \tilde X_{\delta,\alpha,1}$ for
  all $\alpha>0$.  The motivation for choosing the sequence $(w_n)$ is
  that the associated functions $v_n$ (defined according to
  \Eqref{eq:splitv2}) can be hoped to behave more and more like a
  solution to the second-order Fuchsian equation obeying the
  leading-order behavior dictated by $u$, when $n$ tends to infinity;
  recall that our assumption $u\equiv 0$ represents no loss
  of generality. Hence, the sequence $(w_n)$ is expected to converge
  to a solution $w$ of the SIVP. We prove that this is the case making
  use of the energy estimates for the RIVP derived in
  \Lemref{lem:energyestimates}.

  Fix some arbitrary $m,n\in\N$ with $m\ge n$, thus
  $0<\tau_m\le\tau_n\le\delta$, and define $\xi:=w_m-w_n$. Thus, $\xi$
  is identically zero on $(0,\tau_m]$, it satisfies
  \Eqref{eq:secondorderFuchsianHypForw} with the given source-term for
  $[\tau_m,\tau_n]$, and it satisfies
  \Eqref{eq:secondorderFuchsianHypForw} with the given source-term but
  with vanishing inhomogeneity $f_0$ for $[\tau_n,\delta]$ (due to the
  linearity of the equation). Then, \Lemref{lem:energyestimates}
  implies that
  \be
    \label{eq:intermediateenergy}
    \|\widehat \xi(t,\cdot)\|_{L^2(U)}
    \begin{cases}
      =0, & \text{$t\in(0,\tau_m]$},\\
      \le C e^{\frac 12\kappa(t^\gamma-\tau_m^\gamma)}\int_{\tau_m}^t
      s^{-1}\|s^{\Re\lambda_2-\alpha}f_0(s,\cdot)\|_{L^2(U)}ds, & 
      \text{$t\in[\tau_m,\tau_n]$},\\
      \le C e^{\frac 12\kappa(t^\gamma-\tau_m^\gamma)}\int_{\tau_m}^{\tau_n}
      s^{-1}\|s^{\Re\lambda_2-\alpha}f_0(s,\cdot)\|_{L^2(U)}ds, & 
      \text{$t\in[\tau_n,\delta]$},
    \end{cases}
  \ee
  where $\widehat\xi$ is the vector-valued function associated with $\xi$
  in the same way as in \Eqref{eq:hatw}. 
  In particular, all constants here are independent of $\tau_m$ and
  $\tau_n$.
  We get
  \be
    \label{eq:estimatesource}
    s^{-\epsilon}\|s^{\Re\lambda_2-\alpha}f_0(s,\cdot)\|_{L^2(U)}
    \le 
    \|f_0\|_{\delta,\alpha+\epsilon,0},
  \ee
  for all $s\in (0,\delta]$.  Here, $\epsilon>0$ is the constant given
  in the hypothesis of this proposition.
  In total, the map $s\mapsto
  s^{-\epsilon}\|s^{\Re\lambda_2-\alpha}f_0(s,\cdot)\|_{L^2(U)}$ is a
  bounded continuous function on $(0,\delta]$ and hence is
  integrable. Hence, the function
  \begin{equation*}
    G(t):=\int_{0}^{t} s^{-1}\|s^{\Re\lambda_2-\alpha}f_0(s,\cdot)\|_{L^2(U)}ds
  \end{equation*} 
  is well-defined and finite for all $t$. Indeed, it is continuous for
  $t\in (0,\delta]$ and $\lim_{t\rightarrow 0} G(t)=0$. This implies
  that $G$ is uniformly continuous on $(0,\delta]$.  By taking the
  supremum in $t$ on the interval $(0,\delta]$ of
  \Eqref{eq:intermediateenergy} and adapting the constant $C$ if
  necessary, we show that
  \begin{equation}
    \label{eq:Cauchysequence}
    \|w_m-w_n\|_{\delta,\alpha,1}^\sim\le C|G(\tau_n)-G(\tau_m)|.
  \end{equation}
  In particular, we point out that while $C$ can depend on the choice
  of $\delta$, it is bounded for small $\delta$.  Now, since $G$ is
  uniformly continuous and $(\tau_n)$ is a Cauchy sequence with limit
  zero, it follows that $(G(\tau_n))$ is a Cauchy sequence with limit
  zero. Thus, \Eqref{eq:Cauchysequence} implies that the sequence
  $(w_n)$ is a Cauchy sequence in $(\tilde
  X_{\delta,\alpha,1},\|\cdot\|_{\delta,\alpha,1}^\sim)$, and hence
  there exists a limit function $w\in \tilde X_{\delta,\alpha,1}$.

  Now we have to check that $w$ is a weak solution of the Fuchsian
  equation, i.e.\ $\left\langle\mathcal P[w],\phi\right\rangle=0$ for
  all test functions $\phi$ according to \Eqref{eq:weakequation}. Consider an arbitrary 
  test-function $\phi$ and pick up any $n\in\N$.  The sequence
  element $w_n$ is constructed so that
  \[|\left\langle\mathcal P[w_n],\phi\right\rangle|\le 
  \int_{0}^{\tau_n} \left|\left\langle s^{\Re\lambda_2-\alpha}f_0(s,\cdot),
    \phi(s,\cdot)\right\rangle_{L^2(U)}\right|ds.
  \]
  This estimate holds since, for all $t>\tau_n$, the approximate
  solution $w_n$ satisfies  the equation
  \Eqref{eq:secondorderFuchsianHypForw}. This yields
  \begin{equation*}
    |\left\langle\mathcal P[w_n],\phi\right\rangle|  
    \le \sup_{t\in(0,\delta]}\|t \phi(t,\cdot)\|_{L^2(U)}
    \int_{0}^{\tau_n} s^{-1}\left\|s^{\Re\lambda_2-\alpha}f_0(s,\cdot)
    \right\|_{L^2(U)}ds
    =\widetilde C\, G(\tau_n),
  \end{equation*}
  for some constant $\tilde C$.  Hence,
  $\lim_{n\rightarrow\infty}\left\langle\mathcal P[w_n],\phi\right\rangle=0$.
  Since $\left\langle\mathcal P[\cdot],\phi\right\rangle$ is 
  continuous on $\tilde X_{\delta,\alpha,1}$ with respect
  to the norm $\|\cdot\|_{\delta,\alpha,1}^\sim$ for any given test
  function $\phi$ as noted earlier, it follows that
  \[\left\langle\mathcal P[w],\phi\right\rangle=0.\]
  Hence $w$ is a solution of the SIVP.

  Let us check that the limit $w$ of the sequence
  $(w_n)$ does not depend on the choice of sequence $(\tau_n)$, i.e.\
  that our solution procedure yields a unique solution (which,
  however, is not guaranteed to be the only solution of the SIVP at
  this stage of the proof). Let $(\tau_n)$ be a monotonically
  decreasing sequence in $(0,\delta]$ with limit $0$, $(w_n)$ the
  corresponding sequence of approximate solutions and $w$ its limit;
  the same for another monotonically decreasing sequence
  $(\widetilde\tau_n)$ in $(0,\delta]$ with limit $0$, sequence of
  approximate solutions $(\widetilde w_n)$ and limit $\widetilde
  w$. Now, we take the union of the two sequences $(\tau_n)$ and
  $(\widetilde\tau_n)$ and sort the new sequence $(\widehat\tau_n)$,
  so that it becomes a monotonically decreasing Cauchy sequence with
  limit $0$. Then \Eqref{eq:Cauchysequence} shows that
  $\|w_n-\widetilde w_m\|_{\delta,\alpha,1}^\sim\rightarrow 0$ for
  $n,m\rightarrow\infty$. Hence
  \[\|w-\widetilde w\|_{\delta,\alpha,1}^\sim\le
  \|w-w_n\|_{\delta,\alpha,1}^\sim
  +\|w_n-\widetilde w_m\|_{\delta,\alpha,1}^\sim
  +\|w_m-\widetilde w_m\|_{\delta,\alpha,1}^\sim\rightarrow 0,\]
  and so $w=\widetilde w$.

  So far, we have shown that for all 
  $f_0\in X_{\delta,\alpha+\epsilon,0}\cap C^\infty((0,\delta]\times U)$,
  there exists a weak solution $w\in \tilde X_{\delta,\alpha,1}$, and
  that $w$ is independent of the choice of sequence $(\tau_n)$.  Then
  the solution operator
  \begin{gather*}
    \mathbb H:
    \tilde X_{\delta,\alpha+\epsilon,0}\cap C^\infty((0,\delta]\times U)
    \rightarrow \tilde X_{\delta,\alpha,1},\quad
    f_0\mapsto w,
  \end{gather*}
  is well-defined. It is clearly linear, and we derive the estimate
  \Eqref{eq:continuityHPDE} now. From \Eqref{eq:Cauchysequence}, we
  get
  \[\|w\|_{\delta,\alpha,1}^\sim\le \|w_1\|_{\delta,\alpha,1}^\sim+
  C G(\delta).
  \] 
  We can estimate $\|w_1\|_{\delta,\alpha,1}$ as follows. Because $w_1$
  is a solution of the RIVP with zero regular data at $t_0=\tau_1$,
  estimate \Eqref{eq:integratedEnergyEstimate} yields
  \begin{equation*}
    \|\widehat w_1(t,\cdot)\|_{L^2(U)}\le
    C e^{\frac 12\kappa(t^\gamma-\tau_m^\gamma)} \int_{\tau_1}^t
    s^{-1}\|s^{\Re\lambda_2-\alpha}f_0(s,\cdot)\|_{L^2(U)}ds
  \end{equation*}
  for all $t\in[\tau_1,\delta]$. For all $t\in(0,\tau_1]$, we have
  $\|\widehat w_1(t,\cdot)\|_{L^2(U)}= 0$. This shows that
  $\|w_1\|_{\delta,\alpha,1}\le C G(\delta)$ with some adapted $C$,
  and hence, absorbing the factor $2$ into the constant, we get
  \be
    \label{eq:estimatew}
    \|w\|_{\delta,\alpha,1}^\sim\le C G(\delta).
  \ee
   Now, the estimate
  \Eqref{eq:estimatesource} gives
  \[
  G(\delta)\le \frac 1\epsilon \delta^\epsilon
  \|f_0\|_{\delta,\alpha+\epsilon,0}^\sim.
  \]  
  Using this together with \Eqref{eq:estimatew} yields
  \Eqref{eq:continuityHPDE} on the subset
  $\tilde X_{\delta,\alpha+\epsilon,0}\cap C^\infty((0,\delta]\times U)$ of
  $\tilde X_{\delta,\alpha+\epsilon,0}$. Now we proceed in the same way as in
  the proof of \Propref{prop:Hestimate} in order to
  extend this operator and the validity of \Eqref{eq:continuityHPDE}
  to the full space. For any given test function $\phi$, the
  expression $\left\langle\mathcal P[w],\phi\right\rangle$ is
  continuous with respect to $w$ and $f_0$ (in their respective
  norms). This is sufficient to show that the extended operator $\mathbb H$
  maps to a weak solution $w$.

  Finally, estimate \Eqref{eq:continuityHPDE} allows us to show
  uniqueness. Assume that we have two solutions $w$ and $\widetilde w$
  for the same source-term. Then $w-\widetilde w$ is a solution of the
  same equation with $f_0=0$. Hence, \Eqref{eq:continuityHPDE} implies
  that
  \[\|w-\widetilde w\|_{\delta,\alpha,1}^\sim\le 0,
  \] 
  and so uniqueness is established. 
\end{proof}


\subsection{Non-linear theory in the spaces
  \texorpdfstring{$\tilde X_{\delta, \alpha,2}$}
  {tilde X{delta,alpha,2}} and 
  \texorpdfstring{$X_{\delta, \alpha,\infty}$}
  {X{delta,alpha,infty}}}
\label{sec:nonlineartheory}

\paragraph{The general non-linear theory}
The well-posedness theory of the previous section, where we restrict
to the space $\tilde X_{\delta, \alpha,1}$, has certain
limitations. First, the statement that the solution of the Fuchsian
equation $w$ is an element of $\tilde X_{\delta, \alpha,1}$ yields
particularly weak information about the behavior of the first spatial
derivative at $t=0$. It would be advantageous if we were able to prove
the stronger statement $w\in X_{\delta, \alpha,1}$, possibly under
stronger assumptions. Second, it turns out that we need to require a
Lipschitz property of the source-term for the general non-linear case
which rules out natural non-linearities, for instance quadratic ones,
if we only control the first derivatives of the solution.  In the case
of one spatial dimension, as we always assume in the whole paper, it
is sufficient to increase regularity to the space $\tilde X_{\delta,
  \alpha,2}$. It is then clear how to proceed to $\tilde X_{\delta,
  \alpha,k}$ with arbitrary $k\in\N$.  Nevertheless, in some
applications \cite{BeyerLeFloch2}, the space $\tilde X_{\delta,
  \alpha,k}$ imposes too strong a restriction, due to the weak control
of the highest spatial derivative. This problem can be avoided by
formulating the theory in the space $X_{\delta,\alpha,\infty}$.

\begin{lemma}[Existence of solutions of the linear singular
  initial value problem in $\tilde X_{\delta,\alpha,2}$]
  \label{lem:existenceSVIPlinear2}
  Let us make the same assumptions as listed in the beginning
  of \Sectionref{sec:Linear1Statement} with $f_a\equiv 0$ for
  $a=1,2,3$ (for simplicity).  Then there exists a unique solution
  $w\in \tilde X_{\delta,\alpha,2}$ of the singular initial value
  problem for given $\delta,\alpha>0$ provided:
  \begin{enumerate}
  \item The energy dissipation matrix \Eqref{eq:defNfinal3}
    is positive definite at each
    $(t,x)\in (0,\delta)\times U$ for a constant $\eta>0$.
  \item The source-term function $f_0$ is in
    $X_{\delta,\alpha+\epsilon,1}$ for some $\epsilon>0$.
  \end{enumerate}
  Then, the solution operator
  \[\mathbb H: X_{\delta,\alpha+\epsilon,1}
  \rightarrow \tilde X_{\delta,\alpha,2},\quad f_0\mapsto w,
  \] 
  is
  continuous and there exists a finite constant $C_\epsilon>0$ so that
  \begin{equation*} 
    \|\mathbb H[f_0]\|_{\delta,\alpha,2}^\sim \le \delta^{\epsilon}
    C_\epsilon \|f_0\|_{\delta,\alpha+\epsilon,1},
  \end{equation*}
  for all $f_0$.  The constant $C_\epsilon$ is bounded for all small
  $\delta$.
\end{lemma}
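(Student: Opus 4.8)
The strategy is to deduce the $\tilde X_{\delta,\alpha,2}$-statement from the already-established $\tilde X_{\delta,\alpha,1}$-theory of \Propref{prop:existenceSVIPlinear} by differentiating the equation once with respect to $x$. With $u\equiv 0$ and $f_a\equiv 0$ for $a=1,2,3$, the equation reads simply $L[w]=f_0$ with $L=D^2+2AD+B-t^2K^2\partial_x^2$ and $f_0\in X_{\delta,\alpha+\epsilon,1}\subset X_{\delta,\alpha+\epsilon,0}$. By \Propref{prop:existenceSVIPlinear} there is a unique weak solution $w\in\tilde X_{\delta,\alpha,1}$, obtained as the limit in $\|\cdot\|_{\delta,\alpha,1}^\sim$ of the sequence $(w_n)$ of RIVP-solutions with zero regular data at $t_0=\tau_n$, extended by $0$ on $(0,\tau_n]$. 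I would first record a refinement of the energy bound used in that proof: since the data of $w_n$ vanish at $t=\tau_n$ and $\|s^{\Re\lambda_2-\alpha}f_0(s,\cdot)\|_{L^2(U)}\le s^{\epsilon}\|f_0\|_{\delta,\alpha+\epsilon,0}$, \Lemref{lem:energyestimates} together with the Gronwall estimate there yields the \emph{pointwise-in-time decay}
\[
  \|\widehat{w_n}(t,\cdot)\|_{L^2(U)}\le C\,t^{\epsilon}\,\|f_0\|_{\delta,\alpha+\epsilon,0},\qquad 0<t\le\delta,
\]
uniformly in $n$; it is this decay, and not merely boundedness, that drives the rest of the argument.

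Next I would differentiate. Each $w_n$ being smooth for $t>0$, the function $w_{1,n}:=\partial_x w_n$ solves, on $(\tau_n,\delta]\times U$ with zero regular data at $t=\tau_n$,
\[
  L[w_{1,n}]=\partial_x f_0-2(\partial_x A)Dw_n-(\partial_x B)w_n+2(t\,\partial_x K)\,(tK\partial_x w_{1,n}).
\]
This is again a second-order hyperbolic Fuchsian equation with the \emph{same} principal coefficients $A,B,K$, hence the same $\lambda_1,\lambda_2,\beta$ and the same energy dissipation matrix \Eqref{eq:defNfinal3}. The last term is of the admissible form $f_3(tK\partial_x w_{1,n})$ with $f_3=2t\,\partial_x K$; since $K^{(i)}=t^{\beta^{(i)}}\nu^{(i)}$ with $\beta^{(i)}>-1$ by \Eqref{eq:behaviorofk}, one gets $f_3=O(t^{1+\inf_U\beta-\epsilon'})=O(t^{\mu})$ for some $\mu>0$ — this is where $\beta>-1$ re-enters. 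The remaining terms $\partial_x f_0-2(\partial_x A)Dw_n-(\partial_x B)w_n$ are treated as an inhomogeneity of ``$f_0$-type''. Here lies the main point, and the only genuine obstacle: $\partial_x A$ and $\partial_x B$ do \emph{not} decay at $t=0$, so these terms cannot be viewed as ``$f_1w+f_2Dw$''-perturbations subject to \Eqref{eq:linearinhomDecay}. As a source, however, they are harmless by the decay just recorded, since
\[
  \big\|s^{\Re\lambda_2-\alpha}\big(2(\partial_x A)Dw_n+(\partial_x B)w_n\big)(s,\cdot)\big\|_{L^2(U)}\le C\,\|\widehat{w_n}(s,\cdot)\|_{L^2(U)}\le C\,s^{\epsilon}\|f_0\|_{\delta,\alpha+\epsilon,0},
\]
while $\|s^{\Re\lambda_2-\alpha}\partial_x f_0(s,\cdot)\|_{L^2(U)}\le s^{\epsilon}\|f_0\|_{\delta,\alpha+\epsilon,1}$; hence the map $s\mapsto s^{-1}\|s^{\Re\lambda_2-\alpha}(\mathrm{source})(s,\cdot)\|_{L^2(U)}$ is integrable on $(0,\delta]$ with an integral bounded by $C\delta^{\epsilon}\|f_0\|_{\delta,\alpha+\epsilon,1}$. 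Applying \Lemref{lem:energyestimates} to the equation for $w_{1,n}$ (the first hypothesis of the lemma, positive definiteness of $N$, securing the coercivity of the energy estimate) and using that $\widehat{w_{1,n}}(\tau_n,\cdot)=0$ yields $\|\widehat{w_{1,n}}(t,\cdot)\|_{L^2(U)}\le C\delta^{\epsilon}\|f_0\|_{\delta,\alpha+\epsilon,1}$, uniformly in $n$ and $t$.

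I would then check that $(w_n)$ is a Cauchy sequence in $\tilde X_{\delta,\alpha,2}$. For $m\ge n$ the difference $\xi:=w_m-w_n$ vanishes on $(0,\tau_m]$, satisfies $L[\xi]=f_0$ on $(\tau_m,\tau_n]$ and $L[\xi]=0$ on $(\tau_n,\delta]$; running the differentiated energy estimate of the previous paragraph for $\partial_x\xi$, with the inhomogeneity controlled by the $\tilde X_{\delta,\alpha,1}$-smallness of $\xi$ furnished by \Eqref{eq:intermediateenergy} in the proof of \Propref{prop:existenceSVIPlinear}, gives $\sup_{0<t\le\delta}\|\widehat{\partial_x\xi}(t,\cdot)\|_{L^2(U)}\to 0$ as $m,n\to\infty$. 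The only delicate contribution, coming from the cross term $(\partial_x A)D\xi$ over the region $(\tau_n,\delta]$, is of the shape $\ln(t/\tau_n)\,(G(\tau_n)-G(\tau_m))$ with $G$ as in that proof, and it tends to $0$ because $\tau_n^{\epsilon}|\ln\tau_n|\to 0$. Since $\|\cdot\|_{\delta,\alpha,2}^\sim$ is equivalent to $\|\cdot\|_{\delta,\alpha,1}^\sim+\|\partial_x(\cdot)\|_{\delta,\alpha,1}^\sim$ (the $t^{\beta+1}$-weight on the top spatial derivative being common to both sides, and $\beta+1>0$), it follows that $(w_n)$ converges in $\tilde X_{\delta,\alpha,2}$, and its limit must equal $w$ because $\tilde X_{\delta,\alpha,2}\hookrightarrow\tilde X_{\delta,\alpha,1}$. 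Hence $w\in\tilde X_{\delta,\alpha,2}$, it is the weak solution, it is unique (again via $\tilde X_{\delta,\alpha,2}\hookrightarrow\tilde X_{\delta,\alpha,1}$ and the uniqueness in \Propref{prop:existenceSVIPlinear}), and passing to the limit in the uniform bounds of the two previous paragraphs gives the claimed continuity estimate for $\mathbb H$ with $C_\epsilon$ bounded for small $\delta$. Finally, the extension from smooth $f_0$ to a general $f_0\in X_{\delta,\alpha+\epsilon,1}$ is carried out exactly as at the end of the proofs of \Propref{prop:Hestimate} and \Propref{prop:existenceSVIPlinear}: mollify $f_0$ via \Eqref{eq:mollifiers}, sacrifice an arbitrarily small $\epsilon_0$ in the weight exponent, extend $\mathbb H$ by continuity, and let $\epsilon_0\to 0$ in the estimate.
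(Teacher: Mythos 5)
Your proof is correct, but it takes a genuinely different route from the paper's. The paper deduces the lemma from the order-one theory by passing to the first-order system for $(w_0,w_1,w_2)=(w,\,Dw,\,t^{\epsilon_1}\partial_x w)$ and invoking the system version of \Propref{prop:existenceSVIPlinear}: the artificial weight $t^{\epsilon_1}$ is precisely what turns the non-decaying couplings $(\partial_x A)Dw$ and $(\partial_x B)w$ into admissible coefficients satisfying \Eqref{eq:linearinhomDecay}, and the positive \emph{definiteness} (rather than semidefiniteness) of $N$ is what absorbs the resulting $\epsilon_1$-shift in the exponent $\alpha$. You instead differentiate the equation at the level of the regular approximants $w_n$ and rerun the energy estimate of \Lemref{lem:energyestimates} directly on $\partial_x w_n$, treating the non-decaying couplings as part of the inhomogeneity; the observation that makes this legitimate --- and it is correct --- is the pointwise-in-time decay $\|\widehat{w_n}(t,\cdot)\|_{L^2(U)}\le C\,t^{\epsilon}\|f_0\|_{\delta,\alpha+\epsilon,0}$ coming from the Gronwall bound with vanishing data at $\tau_n$. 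Your route buys a sharper conclusion: there is no loss in the exponent (the paper's own proof really lands in the space with $\tilde\alpha$ shifted by the arbitrarily small $\epsilon_1$), and positive \emph{semi}definiteness of $N$ would in fact appear to suffice for your argument; the price is that you must reopen the machinery behind \Propref{prop:existenceSVIPlinear} rather than apply it as a black box. Two small points to tidy up: first, $\|w\|_{\delta,\alpha,1}^\sim+\|\partial_x w\|_{\delta,\alpha,1}^\sim$ does not control the component $D^2w$ of $\|w\|_{\delta,\alpha,2}^\sim$, so you should recover it from the equation itself via $D^2w=f_0-2A\,Dw-Bw+(tK)(tK\partial_x^2 w)$ together with $tK=t^{1+\beta}\nu\rightarrow 0$; second, your parenthetical that positive definiteness of $N$ ``secures the coercivity'' of \Lemref{lem:energyestimates} is misleading (that lemma only uses semidefiniteness), though this is harmless since definiteness is a hypothesis of the statement anyway.
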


Analogous results hold for systems and for general linear source terms
of the form \Eqref{eq:linearinhom2} with non-vanishing functions
$f_1$, $f_2$ and $f_3$ obeying decay conditions analogous to
\Eqref{eq:linearinhomDecay} also for the first derivatives. Moreover,
the result can be generalized to an arbitrary number $k$ of
derivatives, i.e.\ to solutions in the space $\tilde
X_{\delta,\alpha,k}$. For $k\ge 3$ in one spatial dimension, the
Sobolev inequalities imply that the weak derivatives can be identified
with classical derivatives. Hence the solution $w\in \tilde
X_{\delta,\alpha,3}$ of the weak form of the equation
\Eqref{eq:weakequation} is then a classical solution of
\Eqref{eq:secondorderFuchsianHyp} with $v=w$, i.e.\ $u\equiv 0$.

\begin{proof}
  One sees immediately that (the generalization to systems of)
  \Propref{prop:existenceSVIPlinear} applies directly to the system of
  equations for the unknowns $(w_0,w_1,w_2)$ with $w_0:=w$, $w_1=Dw$
  and $w_2:=t^{\epsilon_1}\partial_x w$, where $\epsilon_1>0$ can be
  any sufficiently small constant. We cannot choose $\epsilon_1=0$
  since this would lead to a source-term which is not consistent with
  the hypotheses of \Propref{prop:existenceSVIPlinear} in the
  following.  For this system the energy dissipation matrix has the
  required properties, if it has the required properties for the
  original equation for $w$ and if we assume the same constant
  $\alpha$ for the equations for $w_0$, $w_1$ and $w_2$. However, the
  energy dissipation matrix must positive definite instead of positive
  semidefinite due to the presence of the non-vanishing constant
  $\epsilon_1$. One obtains existence and uniqueness in a space
  $\tilde X_{\delta,\alpha,1}$ for vector-valued functions
  $(w_0,w_1,w_2)$. The thus obtained space $\tilde
  X_{\delta,\alpha,1}$ for vector-valued functions $(w_0,w_1,w_2)$
  equals the space $\tilde X_{\delta,\tilde\alpha,2}$ for the original
  scalar function $w$ where $\tilde\alpha$ differs from $\alpha$ by
  the arbitrarily small constant $\epsilon_1$.
\end{proof}
It is important to note that, when we repeat the proof for an arbitrary
number of derivatives $k$, the quantity $\tilde\alpha$ can be chosen
arbitrarily close to $\alpha$ irrespective of the choice of $k$.

\begin{proposition}[Existence of solutions of the non-linear singular
  initial value problem in $\tilde X_{\delta,\alpha,2}$]
  \label{prop:existenceSVIPnonlinear2}
  Suppose that we can choose $\alpha>0$ so that the energy dissipation
  matrix \Eqref{eq:defNfinal3} is positive definite at each
  $(t,x)\in (0,\delta)\times U$ for a constant $\eta>0$.  Suppose that
  $u\equiv 0$ and that the operator $F$ has the following Lipschitz
  continuity property: For a constant $\epsilon>0$ and all
  sufficiently small $\delta$, the operator $F$ maps $\tilde
  X_{\delta,\alpha,2}$ into $X_{\delta,\alpha+\epsilon,1}$ and,
  moreover, for each $r>0$ there exists $\widehat C>0$ (independent of
  $\delta$) so that \be
  \label{eq:lipschitzFPDE2}
  \|F[w]-F[\widetilde w]\|_{\delta,\alpha+\epsilon,1} \le \widehat C \, 
  \|w-\widetilde w\|_{\delta,\alpha,2}^\sim
  \ee
  for all $w,\widetilde w\in \overline{B_r(0)}\subset
  \tilde X_{\delta,\alpha,2}$.
  Then,    
  there exists a unique
  solution $w\in \tilde X_{\delta,\alpha,2}$ of the singular initial value
  problem.
\end{proposition}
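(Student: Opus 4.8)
The plan is to recast the singular initial value problem as a fixed-point equation for the linear solution operator and to close it by Banach's contraction principle; all the analytic substance will be inherited from the linear theory. Since $u\equiv 0$ one has $\mathcal L[u]=0$, so the weak equation \Eqref{eq:weakequation} reduces to $\mathcal L[w]=\mathcal F[w]$. I would treat the \emph{entire} source $F[w]$ as the inhomogeneity $f_0$ (i.e.\ set $f_1=f_2=f_3\equiv 0$ in \Eqref{eq:linearinhom2}); the hypotheses of \Lemref{lem:existenceSVIPlinear2} then hold since the energy dissipation matrix is positive definite by assumption, so for all sufficiently small $\delta>0$ there is a continuous linear operator $\mathbb H:X_{\delta,\alpha+\epsilon,1}\to\tilde X_{\delta,\alpha,2}$ with $\|\mathbb H[g]\|_{\delta,\alpha,2}^\sim\le\delta^{\epsilon}C_\epsilon\|g\|_{\delta,\alpha+\epsilon,1}$ and $C_\epsilon$ bounded for small $\delta$, and $\mathbb H[g]$ is the \emph{unique} weak solution of $L[w]=g$. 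Since $F$ maps $\tilde X_{\delta,\alpha,2}$ into $X_{\delta,\alpha+\epsilon,1}$ by hypothesis, $\Phi[w]:=\mathbb H\bigl[F[w]\bigr]$ is a well-defined self-map of $\tilde X_{\delta,\alpha,2}$, and by the uniqueness part of \Lemref{lem:existenceSVIPlinear2} a function $w$ solves the singular initial value problem exactly when $w=\Phi[w]$.

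Next I would fix $r>0$, let $\widehat C=\widehat C(r)$ be the constant in \Eqref{eq:lipschitzFPDE2}, and estimate, for $w,\widetilde w\in\overline{B_r(0)}\subset\tilde X_{\delta,\alpha,2}$, using the linearity of $\mathbb H$ and the continuity bound for $\mathbb H$,
\[
\|\Phi[w]-\Phi[\widetilde w]\|_{\delta,\alpha,2}^\sim\le\delta^{\epsilon}C_\epsilon\,\widehat C\,\|w-\widetilde w\|_{\delta,\alpha,2}^\sim,\qquad
\|\Phi[w]\|_{\delta,\alpha,2}^\sim\le\delta^{\epsilon}C_\epsilon\bigl(\widehat C\,r+\|F[0]\|_{\delta,\alpha+\epsilon,1}\bigr),
\]
the second inequality following by taking $\widetilde w=0$ and noting that $F[0]\in X_{\delta,\alpha+\epsilon,1}$ (the hypothesis on $F$ at $w=0$), with $\|F[0]\|_{\delta,\alpha+\epsilon,1}$ bounded as $\delta\downarrow 0$ because it is a supremum over $0<t\le\delta$. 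As $C_\epsilon$ is likewise bounded, the prefactor $\delta^{\epsilon}$ is the small parameter: I would shrink $\delta$ until $\delta^{\epsilon}C_\epsilon\widehat C\le 1/2$ and $\delta^{\epsilon}C_\epsilon\|F[0]\|_{\delta,\alpha+\epsilon,1}\le r/2$, so that $\Phi$ maps $\overline{B_r(0)}$ into itself and is a contraction there, and Banach's theorem yields a unique fixed point $w\in\overline{B_r(0)}$, i.e.\ a solution of the singular initial value problem.

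The contraction argument gives uniqueness only inside $\overline{B_r(0)}$, while the proposition asks for uniqueness in all of $\tilde X_{\delta,\alpha,2}$, and removing the $r$-dependence of $\widehat C$ is the point that needs a little extra care. Given two solutions $w,\widetilde w\in\tilde X_{\delta,\alpha,2}$, I would set $r:=1+\max(\|w\|_{\delta,\alpha,2}^\sim,\|\widetilde w\|_{\delta,\alpha,2}^\sim)$, restrict everything to a small sub-interval $(0,\delta']\times U$ --- which only decreases the norms, and on which $w,\widetilde w$ remain weak solutions by restriction of test functions and locality of $f$, hence are fixed points of the analogous map $\Phi_{\delta'}$ --- and choose $\delta'$ so small that $(\delta')^{\epsilon}C_\epsilon\widehat C(r)<1$; the contraction estimate of the previous step then forces $w=\widetilde w$ on $(0,\delta']$. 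On the remaining slab $[\delta',\delta]\times U$ the equation is uniformly hyperbolic, so by interior regularity $w$ and $\widetilde w$ are smooth there and solve one and the same semilinear wave equation with identical Cauchy data at $t=\delta'$; standard uniqueness for the regular Cauchy problem then gives $w=\widetilde w$ on the whole slab, hence everywhere in $(0,\delta]\times U$.

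The hard part is not really analytical here --- the energy estimates and the passage from regular to singular initial value problems are already done in \Lemref{lem:energyestimates} and \Lemref{lem:existenceSVIPlinear2} --- but rather lies in getting the bookkeeping right: one must fix $r$ and extract $\widehat C(r)$ \emph{before} shrinking $\delta$, rely on the uniform boundedness of $C_\epsilon$ and $\|F[0]\|_{\delta,\alpha+\epsilon,1}$ as $\delta\to 0$ to make the self-mapping estimate close, and handle the global-uniqueness upgrade by the restriction-plus-regular-Cauchy-uniqueness device just described.
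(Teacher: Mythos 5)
Your proposal is correct and follows essentially the same route as the paper: the paper also defines the composed operator $\mathbb G:=\mathbb H\circ F$, invokes the linear solution operator from \Lemref{lem:existenceSVIPlinear2}, and obtains the fixed point by contraction on a closed bounded subset of $\tilde X_{\delta,\alpha,2}$ for sufficiently small $\delta$, with the $\delta^\epsilon$ prefactor supplying the smallness. The one place you go beyond the paper is the uniqueness step: the paper simply asserts that any second solution is a fixed point of the contraction, ignoring that $\widehat C$ depends on the radius $r$ of the ball containing both solutions; your device of enlarging $r$ to cover both solutions, shrinking to a sub-interval $(0,\delta']$ where the contraction estimate applies, and then propagating equality across $[\delta',\delta]$ by uniqueness for the regular Cauchy problem is a more careful (and genuinely needed) completion of that argument.
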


The generalization of this result to arbitrarily many  
derivatives is again straightforward.

\begin{proof}[Proof of \Propref{prop:existenceSVIPnonlinear2}:]  
  Similar to
  \Sectionref{sec:firstorderFuchsian}, we define the operator $\mathbb
  G:=\mathbb H\circ F$, and argue in the same way as in
  \Propref{prop:iterationsequence} and
  \Theoremref{th:fixedpointtheorem} that under the hypothesis, this
  operator is a contraction on closed and bounded subsets of $\tilde
  X_{\delta,\alpha,2}$ if $\delta$ is a sufficiently small. Hence the
  iteration sequence defined by $w_{j+1}=\mathbb G[w_j]$ for $j\ge 1$
  and, say, $w_1=0$ converges to a fixed point $w\in \tilde
  X_{\delta,\alpha,2}$ with respect to the norm
  $\|\cdot\|_{\delta,\alpha,2}^\sim$. Because of the properties of
  $\mathbb H$, a fixed point of $\mathbb G$ is a solution of the
  SIVP. Hence, we have shown existence of solutions. Uniqueness can be
  shown as follows. Given any other solution $\tilde w$ in $\tilde
  X_{\delta,\alpha,2}$, it is a fixed point of the iteration
  $w_{j+1}=\mathbb G[w_j]$. Because $\mathbb G$ is a contraction,
  there, however, only exists one fixed point, and hence $\tilde w=w$.
\end{proof}

\begin{proposition}[Existence of solutions of the non-linear singular
  initial value problem in $X_{\delta,\alpha,\infty}$]
  \label{prop:existenceSVIPnonlinearinfty}
  Suppose that we can choose $\alpha>0$ so that the energy dissipation
  matrix \Eqref{eq:defNfinal3} is positive definite at each
  $(t,x)\in (0,\delta)\times U$ for a constant $\eta>0$.  Suppose that
  $u\equiv 0$ and that the operator $F$ has the following Lipschitz
  continuity property: For a constant $\epsilon>0$, every sufficiently
  small $\delta>0$ and every non-negative integer $k$, the operator
  $F$ maps $X_{\delta,\alpha,k+1}$ into $X_{\delta,\alpha+\epsilon,k}$
  and, moreover, for each $r>0$, there exists $\widehat C>0$
  (independent of $\delta$) so that
  \begin{equation}
    \label{eq:LipschitzCinfty}
    \|F[w]-F[\widetilde w]\|_{\delta,\alpha+\epsilon,k} \le \widehat C \, 
    \|w-\widetilde w\|_{\delta,\alpha,k+1}^\sim
  \end{equation}
  for all $w,\widetilde w\in \overline{B_r(0)}\cap
  X_{\delta,\alpha,k+1}\subset \tilde X_{\delta,\alpha,k+1}$.  Then,
  there exists a unique solution $w\in X_{\delta,\alpha,\infty}$ of
  the singular initial value problem. 
\end{proposition}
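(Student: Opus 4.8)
The plan is to run the Picard-type iteration $\mathbb G:=\mathbb H\circ F$, exactly as in the proof of \Propref{prop:existenceSVIPnonlinear2}, but now tracking \emph{every} regularity level simultaneously and exploiting the identities $X_{\delta,\alpha,\infty}=\bigcap_{k\ge0}X_{\delta,\alpha,k}=\bigcap_{k\ge0}\tilde X_{\delta,\alpha,k}$ recorded in \Sectionref{sec:assumptions_SIVP}. Here $\mathbb H$ is the solution operator of the linear singular initial value problem, in the form of the generalization of \Lemref{lem:existenceSVIPlinear2} to arbitrarily many derivatives, so that for each $k$ it maps $X_{\delta,\alpha+\epsilon,k}$ boundedly into $\tilde X_{\delta,\alpha,k+1}$ with norm $\le\delta^\epsilon C_\epsilon$ as in \Eqref{eq:continuityHPDE}, and $F[w]:=f[w]$ (recall $u\equiv0$); note that, by uniqueness of the linear solution, $\mathbb H[g]$ is one and the same function independently of the level $k$ through which it is viewed.

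First I would fix $\delta,\alpha,\eta>0$ with the energy dissipation matrix \Eqref{eq:defNfinal3} positive definite on $(0,\delta)\times U$, set $w_1:=0$ and $w_{j+1}:=\mathbb G[w_j]$, and check by induction that the whole sequence lies in $X_{\delta,\alpha,\infty}$: if $w_j\in X_{\delta,\alpha,\infty}\subset X_{\delta,\alpha,k+1}$ for all $k$, then $F[w_j]\in X_{\delta,\alpha+\epsilon,k}$ for all $k$ by hypothesis, hence $F[w_j]\in X_{\delta,\alpha+\epsilon,\infty}$, and applying $\mathbb H$ at each level gives $w_{j+1}\in\tilde X_{\delta,\alpha,k+1}$ for all $k$, i.e.\ $w_{j+1}\in X_{\delta,\alpha,\infty}$. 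In particular the Lipschitz estimate \Eqref{eq:LipschitzCinfty} applies to all iterates. Next, for each fixed $k$, combining \Eqref{eq:continuityHPDE} at level $k+1$ with \Eqref{eq:LipschitzCinfty} at level $k$ yields $\|w_{j+1}-w_j\|_{\delta,\alpha,k+1}^\sim\le\delta^\epsilon C_\epsilon\widehat C\,\|w_j-w_{j-1}\|_{\delta,\alpha,k+1}^\sim$, so for $\delta$ small (and, exactly as in \Theoremref{th:fixedpointtheorem}, the iterates remain in a fixed ball so the use of \Eqref{eq:LipschitzCinfty} is legitimate) the sequence is Cauchy in $\|\cdot\|_{\delta,\alpha,k+1}^\sim$. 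Since this holds for every $k$ and all these norms dominate $\|\cdot\|_{\delta,\alpha,1}^\sim$, the limits agree and produce a single function $w\in\bigcap_k\tilde X_{\delta,\alpha,k}=X_{\delta,\alpha,\infty}$; passing to the limit in $w_{j+1}=\mathbb G[w_j]$ gives $w=\mathbb G[w]$, which by the defining property of $\mathbb H$ means precisely that $w$ is a weak solution of the SIVP. Uniqueness within $X_{\delta,\alpha,\infty}$ is then automatic: any other such solution is a fixed point of $\mathbb G$, and the contraction estimate at each level $k$ forces the difference to vanish in $\|\cdot\|_{\delta,\alpha,k}^\sim$ for all $k$.

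The one point that really needs care is that a \emph{single} $\delta>0$ makes $\mathbb G$ a contraction in every norm $\|\cdot\|_{\delta,\alpha,k}^\sim$ at once. This is where the precise structure of the hypotheses is used: the gain $\epsilon$ in the mapping property of $F$ and the corresponding factor $\delta^\epsilon$ in \Eqref{eq:continuityHPDE} are independent of $k$, the constant $\widehat C$ in \Eqref{eq:LipschitzCinfty} is independent of $\delta$, and by the remark following \Lemref{lem:existenceSVIPlinear2} the auxiliary exponent loss incurred when passing to $k$ derivatives can be taken uniformly small in $k$; together these produce contraction constants of the form $\delta^\epsilon$ times a factor that stays bounded as $k\to\infty$, so that one small $\delta$ serves all levels. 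Everything else is the same bookkeeping with the nested spaces $X_{\delta,\alpha,k}\subset\tilde X_{\delta,\alpha,k}$ as in \Propref{prop:existenceSVIPnonlinear2}, and I expect this uniformity in $k$ to be the only genuine obstacle.
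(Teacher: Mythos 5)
You have correctly isolated the crux --- that a single $\delta>0$ must make $\mathbb G=\mathbb H\circ F$ a contraction at every differentiability level $k$ simultaneously --- but your resolution of that point does not go through, and this is exactly where your argument diverges from a correct one. The contraction constant at level $k$ is of the form $\delta^\epsilon C_\epsilon\widehat C$, and while $\widehat C$ in \Eqref{eq:LipschitzCinfty} is independent of $\delta$, it is \emph{not} assumed independent of $k$: the text immediately following the proposition states explicitly that $\widehat C$ is allowed to depend on $k$, and likewise the constant $C_\epsilon$ in the $k$-derivative generalization of \Lemref{lem:existenceSVIPlinear2} and the radius $r$ of the ball containing the iterates (hence the relevant Lipschitz constant $\widehat C(r,k)$) may grow with $k$. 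The remark after \Lemref{lem:existenceSVIPlinear2} that you invoke only guarantees that the exponent $\tilde\alpha$ stays close to $\alpha$ uniformly in $k$; it says nothing about the constants. So nothing in the hypotheses yields ``a factor that stays bounded as $k\to\infty$'', and the assertion that one small $\delta$ serves all levels is unsupported.

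The intended argument instead concedes that the fixed-point iteration at level $k$ only produces a limit $w_{(k)}\in\tilde X_{\delta(k),\alpha,k}$ with a possibly $k$-dependent $\delta(k)$, observes that the $w_{(k)}$ coincide on overlapping time intervals (whence $w_{(k)}\in X_{\delta(k),\alpha,k}$, without the tilde, after possibly shrinking $\delta(k)$), and then removes the $k$-dependence of $\delta$ by a separate step: a standard propagation-of-regularity argument for symmetric hyperbolic systems shows that the solution obtained at level $k=3$ (enough, in one spatial dimension, to identify weak with classical derivatives) remains in all higher spaces on its whole interval of existence, so that one may take $\delta=\delta(3)$ for every $k$. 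To repair your write-up you should either add uniformity in $k$ of $\widehat C$ and $C_\epsilon$ as an extra hypothesis (which would weaken the proposition) or replace your uniformity claim by this persistence-of-regularity step.
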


Here, $\overline{B_r(0)}$ is defined with respect to the norm
$\|\cdot\|^\sim_{\delta,\alpha,k+1}$.  We note that the constant
$\widehat C$ is allowed to depend on $k$. Note that the Lipschitz
estimate involves the norm $\|\cdot\|_{\delta,\alpha,k+1}^\sim$, while
the elements for which this estimates needs to be satisfied are
required to be only in the subspace $X_{\delta,\alpha,k+1}$ of $\tilde
X_{\delta,\alpha,k+1}$. The main advantage of this result over the
finite differentiability case is that we only need to check that $F$
maps $X_{\delta,\alpha,k+1}$ into $X_{\delta,\alpha+\epsilon,k}$ for
all $k$, instead of the stronger statement that $F$ maps $\tilde
X_{\delta,\alpha,k+1}$ into $X_{\delta,\alpha+\epsilon,k}$ (which
would of course, however, only need to hold for finitely many $k$).

\begin{proof}
  We first generalize \Lemref{lem:existenceSVIPlinear2} to the case
  $f_0\in X_{\delta,\alpha,\infty}$. Then the solution of the linear
  equation $\mathbb H[f_0]\in X_{\delta,\alpha,\infty}$. Since $F$
  maps $X_{\delta,\alpha,\infty}$ to itself, the operator $\mathbb
  G:=\mathbb H\circ F$ maps $X_{\delta,\alpha,\infty}$ to
  itself. Hence, the same iteration as in the proof of
  \Propref{prop:existenceSVIPnonlinear2} leads to an iteration
  sequence $(w_j)\subset X_{\delta,\alpha,\infty}$. Let $k\ge 1$ be
  arbitrary. According to the hypothesis, $F$ maps
  $X_{\delta,\alpha,k}$ into
  $X_{\delta,\alpha+\epsilon,k-1}$. \Lemref{lem:existenceSVIPlinear2}
  implies that $\mathbb H$ maps $X_{\delta,\alpha+\epsilon,k-1}$ to
  $\tilde X_{\delta,\alpha,k}$. Hence $\mathbb G$ can be consider as a
  map $X_{\delta,\alpha,k}\rightarrow \tilde X_{\delta,\alpha,k}$. We
  can consider $X_{\delta,\alpha,k}$ to be a subset of $\tilde
  X_{\delta,\alpha,k}$. Then, if we choose $\delta$ small enough, the
  restriction of $\mathbb G$ to $X_{\delta,\alpha,k}\cap
  \overline{B_r(0)}\subset \tilde X_{\delta,\alpha,k}$ is a
  contraction with respect to the norm
  $\|\cdot\|^\sim_{\delta,\alpha,k}$ due to
  \Eqref{eq:LipschitzCinfty}. This implies that for an appropriate
  choice of $r$, we have that $(w_j)$ is a Cauchy sequence in
  $X_{\delta,\alpha,\infty}\cap \overline{B_r(0)}\subset \tilde
  X_{\delta,\alpha,k}$ with respect to the norm
  $\|\cdot\|^\sim_{\delta,\alpha,k}$. Hence, it converges to a limit
  $w_{(k)}\in \tilde X_{\delta(k),\alpha,k}$. We have written
  $\delta(k)$ now instead of $\delta$ in order to stress that $\delta$
  does depend on $k$. We get such a limit function $w_{(k)}\in \tilde
  X_{\delta(k),\alpha,k}$ for all $k\ge 1$. It is straightforward to
  check that $w_{(k_1)}(t)=w_{(k_2)}(t)$ for all two integers
  $k_1,k_2\ge 1$ for all $t\in
  (0,\min\{\delta(k_1),\delta(k_2)\}]$. Hence, it follows that
  $w_{(k)}\in X_{\delta(k+1),\alpha,k}$ (without tilde!) for all $k$,
  and this means that $w_{(k)}\in X_{\delta(k),\alpha,k}$ after
  possibly decreasing $\delta(k)>0$ sufficiently for all $k$. So for
  any given $k$, the limit $w_{(k)}$ is in the range of the operator
  $\mathbb G$, and thus $w_{(k)}$ is the unique fixed point of
  $\mathbb G$ and so the unique solution of the equation in
  $X_{\delta(k),\alpha,k}$. However, it is not obvious at this point
  whether we are forced to choose $\delta(k)\rightarrow 0$ for
  $k\rightarrow\infty$, and we are left with demonstrating that this
  is not the case. As soon as we have this, we have constructed the
  solution $w\in X_{\delta,\alpha,\infty}$ for some $\delta>0$. This,
  however, requires only standard arguments for symmetric hyperbolic
  equations. 
  Hence, we find that we can choose $\delta=\delta(3)$ (for one spatial
  dimension).
\end{proof}

\paragraph{The (standard) singular initial value problem}
The following discussion is devoted to particular choices of the
function $u$ motivated by the heuristics introduced
in \Sectionref{sec:firstorderFuchsian}.  Consider the case that $u$ is given by
\begin{equation}
  \label{eq:SIVPDefV}
  u(t,x) = u_0(t,x):=\begin{cases}
    u_*(x)\,t^{-a(x)}\ln t+u_{**}(x)\,t^{-a(x)}
    & \quad a^2=b,\\
    \hat u_*(x)\,t^{-\lambda_1(x)}+\hat u_{**}(x)\,t^{-\lambda_2(x)},
    & \quad a^2 \neq b,
  \end{cases}
\end{equation}
with $\hat u_*, \hat u_{**}$ given by \Eqref{eq:transitiondata} and
with asymptotic data $u_*,u_{**}\in H^3(U)$.  In this case, we will
speak of the \textbf{standard singular initial value
  problem}\footnote{In order to simplify the language, we often speak
  of the \textit{singular initial value problem} if there is no risk of
  confusion.}.  Note that this means that for all $t>0$, the map
$u(t,\cdot)$ and all its time derivatives are in $H^3(U)$.

\begin{theorem}[Well-posedness of the standard singular initial value problem
  in $\tilde X_{\delta,\alpha,2}$]
  \label{th:well-posednessSIVP}
  Given arbitrary asymptotic data $u_*,u_{**}\in H^3(U)$, the
  standard singular initial value problem admits a unique solution
  $w\in \tilde X_{\delta,\alpha,2}$ for $\alpha,\delta>0$, provided
  $\delta$ is sufficiently small and the following conditions hold:
  \begin{enumerate}
  \item{Positivity condition.} Suppose that we can choose $\alpha>0$
    so that the energy dissipation matrix \Eqref{eq:defNfinal3} is
    positive definite at each $(t,x)\in (0,\delta)\times U$ for a
    constant $\eta>0$.
  \item{Lipschitz continuity property.} For the given $\alpha>0$, the
    operator $F$ satisfies the Lipschitz continuity property stated in
    \Propref{prop:existenceSVIPnonlinear2} for all asymptotic data
    $u_*,u_{**}\in H^3(U)$ for some $\epsilon>0$.
  \item{Integrability condition.} The constants $\alpha$ and
    $\epsilon$ satisfy
    \begin{equation}
      \label{eq:intbeta}
      \alpha+\epsilon<2(\beta(x)+1)-\Re(\lambda_1(x)-\lambda_2(x)), 
      \qquad  x \in U.
    \end{equation}  
  \end{enumerate}
\end{theorem}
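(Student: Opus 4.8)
The plan is to deduce \Theoremref{th:well-posednessSIVP} from \Propref{prop:existenceSVIPnonlinear2} by absorbing the leading-order contribution $L[u_0]$ into the source-term, exactly as announced after \Eqref{eq:linearinhomDecay}. The starting observation is that the two-term expansion $u_0$ in \Eqref{eq:SIVPDefV} is, for each fixed $x$, an exact solution of the homogeneous ordinary differential equation $D^2u_0+2A\,Du_0+B\,u_0=0$ (compare the homogeneous part in the proof of \Propref{prop:explicitexpr}, and recall that $A,B$ do not depend on $t$). Consequently, applying the full principal operator $L$ of \Eqref{eq:defLPDE} to $u_0$, everything cancels except the wave term, and
\[
  L[u_0]=-\,t^2K^2\,\partial_x^2 u_0 .
\]
Therefore the equation \Eqref{eq:secondorderFuchsianHypForw} for the remainder $w=v-u_0$, namely $L[w]=F[w]-L[u_0]$ with $F[w]=f[u_0+w]$, takes the form $L[w]=\widetilde F[w]$ with the modified source operator $\widetilde F[w]:=F[w]+t^2K^2\partial_x^2u_0$, which no longer contains a separate leading-order part. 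One first checks, using that $u_0$ is smooth for $t>0$ and lies in $H^3(U)$, that the weak functional $\langle\mathcal L[u_0],\phi\rangle$ from \Eqref{eq:weakL} coincides with $\int_0^\delta\int_U t^{\Re\lambda_2-\alpha}\,L[u_0]\,\phi\,dx\,dt$, the boundary contribution as $t\to 0$ vanishing because of the decay of $t^2K^2\partial_x^2u_0$ established below; hence weak solutions of $L[w]=\widetilde F[w]$ in the sense of \Propref{prop:existenceSVIPnonlinear2} are precisely weak solutions of the standard singular initial value problem \Eqref{eq:weakequation} with $u=u_0$.

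Granting the estimate $L[u_0]\in X_{\delta,\alpha+\epsilon',1}$ for some $\epsilon'\in(0,\epsilon]$ (see the next paragraph), the modified operator $\widetilde F$ inherits all properties required by \Propref{prop:existenceSVIPnonlinear2}: it maps $\tilde X_{\delta,\alpha,2}$ into $X_{\delta,\alpha+\epsilon',1}$, since $F$ does so into $X_{\delta,\alpha+\epsilon,1}\subset X_{\delta,\alpha+\epsilon',1}$ by hypothesis (2) and the added term is a fixed element of $X_{\delta,\alpha+\epsilon',1}$; and it satisfies the Lipschitz estimate \Eqref{eq:lipschitzFPDE2} with the same constant, since $\widetilde F[w]-\widetilde F[\widetilde w]=F[w]-F[\widetilde w]$, the term $t^2K^2\partial_x^2u_0$ being independent of the unknown. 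Hypothesis (1) supplies the positive definiteness of the energy dissipation matrix \Eqref{eq:defNfinal3}. Therefore \Propref{prop:existenceSVIPnonlinear2}, applied with $u\equiv 0$ and source operator $\widetilde F$, yields a unique $w\in\tilde X_{\delta,\alpha,2}$ solving $L[w]=\widetilde F[w]$ weakly, i.e. a unique solution of the standard singular initial value problem; uniqueness is immediate from that proposition.

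It remains to estimate $L[u_0]=-t^2K^2\partial_x^2u_0$. Here I would use the factorisation $t^2K^2=t^{2(\beta+1)}\nu^2$ from \Eqref{eq:behaviorofk}, in which $\nu$ and all its $t$-derivatives have finite limits at $t=0$. Differentiating $u_0$ twice in $x$ produces, in the non-degenerate case, terms of the shape $(\partial_x^{q_1}\hat u_*)(\partial_x\lambda_1)^{q_2}(\ln t)^{q_2}\,t^{-\lambda_1}$ with $q_1+q_2\le 2$, together with the analogous $\lambda_2$- and $\hat u_{**}$-terms (and an extra factor $\ln t$ in the degenerate case $a^2=b$); taking in addition up to one $D$- and one $\partial_x$-derivative, as needed for the $X_{\delta,\alpha+\epsilon',1}$-norm, brings in at most $\partial_x^3$ of $u_*$ and $u_{**}$, which is exactly why the hypothesis $u_*,u_{**}\in H^3(U)$ is imposed. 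All these contributions are bounded by $t^{2(\beta+1)-\Re\lambda_1}$ times a power of $|\ln t|$; pairing against the weight $t^{\Re\lambda_2-(\alpha+\epsilon')}$ of the $X_{\delta,\alpha+\epsilon',1}$-norm gives an integrand of size $t^{2(\beta+1)-\Re(\lambda_1-\lambda_2)-(\alpha+\epsilon')}$ times a power of $|\ln t|$, which stays bounded on $(0,\delta]$ as soon as the exponent is positive. The strict integrability condition \Eqref{eq:intbeta} provides exactly $\alpha+\epsilon<2(\beta+1)-\Re(\lambda_1-\lambda_2)$, leaving room to choose $\epsilon'\in(0,\epsilon]$ slightly smaller than $\epsilon$ so as to absorb the logarithmic factors; the transition between the degenerate and non-degenerate cases is handled, as elsewhere in the paper, by the renormalisation \Eqref{eq:transitiondata}.

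The main obstacle is precisely this last step: the bookkeeping of the logarithms generated when $t^{-\lambda_1(x)}$ and $t^{-\lambda_2(x)}$ are differentiated in $x$, which forces the strict inequality in \Eqref{eq:intbeta} and the $\epsilon$-loss, together with the verification that the boundary contribution of $L[u_0]$ at $t\to0$ in the weak formulation genuinely vanishes. Everything else reduces to unwinding \Defref{def:weaksolution} and \Defref{def:SIVP} and invoking \Propref{prop:existenceSVIPnonlinear2}.
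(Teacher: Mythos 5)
Your proposal is correct and follows essentially the same route as the paper: the paper's own proof likewise applies \Propref{prop:existenceSVIPnonlinear2} after absorbing $L[u_0]$ into the source-term, notes that this term does not affect the Lipschitz estimate, and uses the integrability condition \Eqref{eq:intbeta} to place $L[u_0]$ in $X_{\delta,\alpha+\epsilon,1}$. Your additional details (the cancellation $L[u_0]=-t^2K^2\partial_x^2u_0$, the logarithm bookkeeping explaining the role of $H^3$ and the strictness of \Eqref{eq:intbeta}) are a faithful expansion of what the paper leaves implicit.
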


An analogous theorem can be formulated for the $C^\infty$-case based
on \Propref{prop:existenceSVIPnonlinearinfty}. In this case, the
asymptotic data $u_*$, $u_{**}$ must be in $C^\infty(U)$ and the
Lipschitz condition must be substituted by the condition of
\Propref{prop:existenceSVIPnonlinearinfty}. The unique solution $w$ of
the singular initial value problem is then an element of
$X_{\delta,\alpha,\infty}$.

We note that there might be room for improvements in the finite
differentiability case $k=2$, since three derivatives
of the asymptotic data yield control of only two derivatives of the
solution.

\begin{proof}
  We can apply \Propref{prop:existenceSVIPnonlinear2} if we are able
  to control the additional contribution of the term $L[u]$ which has
  to be considered as part of the source-term. It has no contribution
  to the Lipschitz estimate \Eqref{eq:lipschitzFPDE2}, but we have to
  guarantee that under these hypotheses, $L[u]\in
  X_{\delta,\alpha+\epsilon,1}$ for the given constant
  $\epsilon$. This is indeed the case if \Eqref{eq:intbeta} holds.
\end{proof}

\begin{example}
  Consider the second-order hyperbolic Fuchsian equation
  \[D^2 v-\lambda Dv-t^2\partial_x^2 v=0,
  \] 
  with a constant $\lambda$. This is the Euler-Poisson-Darboux
  equation. In the standard notation it is
  \[
  \partial_t^2 v-\partial_x^2 v=\frac 1t(\lambda-1)\partial_t v.
  \]
  Note that $\lambda=1$ is the standard wave equation, and in
  this case, the standard singular initial value problem reduces to
  the standard Cauchy problem.
  \begin{enumerate}
  \item Case $\lambda\ge 0$. With our notation, we have
    $\lambda_1=0$, $\lambda_2=-\lambda$, $\beta\equiv 0$,
    $\nu\equiv 1$ and $f\equiv 0$. The positivity condition of the
    energy dissipation matrix \Eqref{eq:defNfinal3} is satisfied
    precisely for $\alpha\ge 1-\lambda$ and all sufficiently
    small $\eta>0$. The integrability condition \Eqref{eq:intbeta} is
    satisfied precisely for $\lambda< 2-\alpha$. Hence, our
    previous proposition implies that the singular initial value
    problem is well-posed, provided 
    \be
      \label{eq:wellposednessEPD}
      0\le\lambda<2.
    \ee
    Namely, in this case there
    exists a solution $w$ in $\tilde X_{\delta,\alpha,2}$ for some $\alpha>0$ for
    arbitrary asymptotic data in $H^3(U)$.    
  \item Case $\lambda<0$. With our notation, we have
    $\lambda_1=|\lambda|$, $\lambda_2=0$, $\beta\equiv 0$,
    $\nu\equiv 1$ and $f\equiv 0$. The positivity condition of the
    energy dissipation matrix \Eqref{eq:defNfinal3} is satisfied
    precisely for $\alpha\ge 1-|\lambda|$ and all sufficiently
    small $\eta>0$. The integrability condition \Eqref{eq:intbeta} is
    satisfied precisely for $|\lambda|< 2-\alpha$. Hence, our
    previous proposition implies that the singular initial value
    problem is well-posed, provided
    \begin{equation*} 
      -2<\lambda<0.
    \end{equation*}
    Namely, in this case there exists a solution $w$ in
    $\tilde X_{\delta,\alpha,2}$ for some $\alpha>0$ for arbitrary asymptotic
    data in $H^3(U)$.
  \end{enumerate}
  Now, it turns out that general smooth solutions to the
  Euler-Poisson-Darboux equation can be expressed explicitly by a
  Fourier ansatz in $x$ and by Bessel functions in $t$. It is then
  easy to check that \Eqref{eq:wellposednessEPD} (and similarly for
  $\lambda<0$) is sharp: While for $0\le\lambda<2$, all
  solutions of the equation behave consistently with the two-term
  expansion at $t=0$, this is not the case for $\lambda\ge 2$
  for general asymptotic data. Hence the singular initial value
  problem is not well-posed for $\lambda\ge 2$. This is
  completely consistent with our heuristic discussion
  in \Sectionref{sec:asymptoticsolutions}. Namely, if
  $\lambda=2$, the assumption that the source-term
  $t^2\partial_x^2 v$ is negligible at $t=0$ fails since it is of
  the same order in $t$ at $t=0$ as the second leading-order term.
  However, we can see in the proof of
  \Theoremref{th:well-posednessSIVP} that in the special case $u_*=0$
  (and arbitrary $u_{**}$), the integrability condition
  \Eqref{eq:intbeta} can be relaxed.  For this special choice of data,
  solutions to the singular initial value problem exist even for
  $\lambda\ge 2$.
\end{example}

\paragraph{Singular singular initial value problems with asymptotic
  solutions of order $j$}
One of the main aims of this paper is to study the well-posedness of
the standard singular initial value problem just discussed. In this
sense, we can be satisfied with
\Theoremref{th:well-posednessSIVP}. However, it turns out that, often in
applications, the three conditions in this theorem cannot be
satisfied simultaneously. While it is often possible to find constants
$\alpha$ and $\epsilon$ in accordance with the second and third
condition, it can turn out that the corresponding choice of $\alpha$
is too small to make the energy dissipation matrix positive
definite. The following trick can sometimes solve this problem.

For the following discussion, we need to bring together results
from Sections~\ref{sec:firstorderFuchsian} and
\ref{sec:hyperbolicfuchsianequations}, and we are forced to
distinguish between operators now which for the sake of simplicity
have carried the same name so far.  Consider some asymptotic data $u_*$ and
$u_{**}$ and define the function $u_0$ as in
\Eqref{eq:SIVPDefV}. We write $\hat f$ for the source-term in
\Eqref{eq:secondorderFuchsian} and continue to write $f$ for the
source-term in \Eqref{eq:secondorderFuchsianHyp}. Accordingly, we
write $\hat F[w]:=\hat f[u_0+w]$ and $F[w]:=f[u_0+w]$, so that e.g.\
\[\hat F[w]=F[w]+t^2 k^2\partial_x^2 (u_0+w).\]
In the same way, we distinguish between operators $L$ and $\hat L$.
Now we make the same assumptions on $\hat F$ as in
\Propref{prop:iterationsequence} in \Sectionref{sec:asymptsol}. If the
asymptotic data is in $H^{m_1}(U)$ for some positive integer $m_1$,
then the function $u_j$, referred to as $w_j$ in
\Propref{prop:iterationsequence}, with $u_1=0$ is well-defined in
$X_{\delta,\tilde\alpha,l,m_1-2(j-1)}$ for some $\tilde\alpha>0$ and
all $j$ with $m_1/2+1\ge j\ge 1$, provided $F$ maps
$X_{\delta,\tilde\alpha,l,m}$ to
$X_{\delta,\tilde\alpha+\epsilon,l-1,m-1}$ for all $m\le m_1$, for
some integer $l\ge 1$ and for some $\epsilon>0$.

Now let us choose the leading-order function $u$ as
\begin{equation}
  \label{eq:leadingtermHigherOrder}
  u(t,x)=u_0(t,x)+u_j(t,x),
\end{equation} 
for $j$ in the range given above. We refer to the singular initial
value problem based on this choice of leading-order term as
\textbf{singular initial value problem with asymptotic solutions of
  order $j$}. For $j=1$, it reduces to the standard singular initial
value problem; hence we will focus on the case $j\ge 2$ in the
following. Note that, if $w$ is a solution of the singular initial
value problem of order $j$, it is also a solution of the standard
singular initial value problem. However, if there is only one solution
$w$ of the singular initial value problem with asymptotic solutions of
order $j$ for given asymptotic data, it does not mean that $w$ is the
only solution of the standard initial value problem for the same
asymptotic data.

It can be seen easily that the remainder $w$ of a solution of the
singular initial value problem with asymptotic solutions of order $j$
satisfies the equation
\[L[w]=F_j[w]:=F[u_j+w]-F[u_{j-1}]+t^2 k^2\partial_x^2(u_j-u_{j-1})\]
for $j\ge 2$. Now thanks to \Theoremref{th:orderofsequence} we have 
\[
u_{j}-u_{j-1}\in X_{\delta,\tilde\alpha+(j-2)\kappa\epsilon_0,l,m_1-2(j-1)},
\] 
for all $\kappa<1$. Hence it is reasonable to restrict to remainders
$w\in X_{\delta,\tilde\alpha+(j-2)\kappa\epsilon_0,l,m_1-2(j-1)-1}$ in
the following. One finds easily that this means that $F_j[w]\in
X_{\delta,\tilde\alpha+(j-1)\kappa\epsilon_0,l-1,m_1-2(j-1)-2}$ if
$2(\beta(x)+1)>\kappa\epsilon$ for all $x\in U$ by using similar
arguments as in \Theoremref{th:orderofsequence}. This gives us hope
that we can apply \Propref{prop:existenceSVIPnonlinear2} with
\begin{equation}
  \label{eq:increasealpha}
  \alpha:=\tilde\alpha+(j-2)\kappa\epsilon.
\end{equation}
The effect of the ansatz \Eqref{eq:leadingtermHigherOrder} is a value
of $\alpha$ which increases $\tilde\alpha$ by
$(j-2)\kappa\epsilon$. Namely if $m_1$ is sufficiently large, we can
choose $j$ large enough so that the energy dissipation matrix,
evaluated for $\alpha$, can become positive definite. The main
prize that we pay with this approach is that the asymptotic data must
be sufficiently regular and that we must live with a loss of
regularity which is stronger the larger $j$ is.

For the statement of the following theorem, we need the following notation. 
For all $w\in X_{\delta,\alpha,k}$ (or $w\in \tilde
X_{\delta,\alpha,k}$ respectively), we introduce the functions
$E_{\delta,\alpha,k}[w]:(0,\delta]\rightarrow\R$ (or $\tilde
E_{\delta,\alpha,k}[w]:(0,\delta]\rightarrow\R$ respectively) which
are defined in the same way as the respective norms, but the supremum
in $t$ has not been evaluated yet. In particular, this means that
$E_{\delta,\alpha,k}[w]$ (or $\tilde E_{\delta,\alpha,k}[w]$) is a
bounded continuous function on $(0,\delta]$.

\begin{theorem}[Well-posedness of the singular initial value problem
  with asymptotic solutions of higher-order in $\tilde
  X_{\delta,\alpha,2}$]
  \label{th:WellPosednessHigherOrderSIVP}
  Given any integer $j\ge 2$ and any asymptotic data $u_*,u_{**}\in
  H^{m_1}(U)$ with $m_1=2j+1$, there exists a unique solution $w\in
  \tilde X_{\delta,\alpha,2}$ of the singular initial value problem
  with asymptotic solutions of order $j$ for some $\alpha>0$, provided
  \begin{enumerate}
  \item $F$ maps $\tilde X_{\delta,\tilde\alpha,m_1}$ into
    $X_{\delta,\tilde\alpha+\epsilon,m_1-1}$ for all asymptotic data
    $u_*,u_{**}\in H^{m_1}(U)$ for some $\epsilon>0$ and
    $\tilde\alpha$ given by \Eqref{eq:increasealpha} for an arbitrary
    $\kappa<1$.
  \item The characteristic speed satisfies
    \[2(\beta(x)+1)>\kappa\epsilon \quad\text{for all }x\in U\]
    for the same constant $\kappa$ chosen earlier.
  \item $F$ satisfies the following Lipschitz condition: for each
    $r>0$ there exists a constant $C>0$ (independent of $\delta$) so
    that
    \[E_{\delta,\tilde\alpha+\epsilon,1}[F[w]-F[\widetilde w]](t)
    \le C \tilde E_{\delta,\tilde\alpha,2}[w-\widetilde w](t)\]
    for all $t\in (0,\delta]$ and
    for all $w,\widetilde w\in \overline{B_r(0)}\subset
    \tilde X_{\delta,\tilde\alpha,2}$.
  \item The energy dissipation matrix \Eqref{eq:defNfinal3} (evaluated
    with $\alpha$) is positive definite at each $(t,x)\in
    (0,\delta)\times U$ for a constant $\eta>0$.
  \end{enumerate}
\end{theorem}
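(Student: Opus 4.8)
The approach is to reduce the theorem to \Propref{prop:existenceSVIPnonlinear2}, applied to the equation satisfied by the remainder. As derived in the discussion preceding the statement, if $v$ solves the singular initial value problem with asymptotic solutions of order $j$ and $u=u_0+u_j$ — where $u_0$ is the two-term expansion \Eqref{eq:SIVPDefV} attached to the data $u_*,u_{**}$ and $u_j$ is the iteration-sequence element $w_j$ of \Propref{prop:iterationsequence} built from $\hat F$ with $u_1=0$ — then $w:=v-u$ satisfies $L[w]=F_j[w]$ with $F_j[w]=F[u_j+w]-F[u_{j-1}]+t^2k^2\partial_x^2(u_j-u_{j-1})$; here one uses $\hat L[u_0]=0$ (as $u_0$ solves the homogeneous ODE underlying the principal part), $\hat L[u_j]=\hat F[u_{j-1}]$ (as the ODE operator $H$ inverts $\hat L$ for $l\ge2$), and the cancellation of the residual $t^2k^2\partial_x^2$-terms, which is exactly why no separate $L[u]$-contribution survives, in contrast with \Theoremref{th:well-posednessSIVP}. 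Since $L[w]=F_j[w]$ is of the form treated by \Propref{prop:existenceSVIPnonlinear2} (vanishing leading-order part, source operator $F_j$), it suffices to check, with $\alpha:=\tilde\alpha+(j-2)\kappa\epsilon$ as in \Eqref{eq:increasealpha}, that (a) $F_j$ maps $\tilde X_{\delta,\alpha,2}$ into $X_{\delta,\alpha+\epsilon_*,1}$ for some $\epsilon_*>0$, (b) $F_j$ is Lipschitz from $\tilde X_{\delta,\alpha,2}$ to $X_{\delta,\alpha+\epsilon_*,1}$ on bounded balls, and (c) the energy dissipation matrix evaluated with $\alpha$ is positive definite, the last being hypothesis~4 verbatim.

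For the construction of the iterates, hypothesis~1 — together with the fact that $\beta(x)>-1$ makes $t^2k^2=t^{2(\beta+1)}\nu^2$ contribute a strictly positive amount of decay — guarantees that $\hat F$ satisfies the regularity assumption \Eqref{eq:consistentF} of \Propref{prop:iterationsequence} with $k=2$ and $m_0=m_1=2j+1$. Hence the iterates $u_1=0,u_2,\dots$ are well-defined with $u_i\in X_{\delta,\tilde\alpha,l,m_1-2(i-1)}$; in particular $u_{j-1},u_j\in X_{\delta,\tilde\alpha,l,3}$ with $l\ge2$ (since $m_1-2(j-1)=3$), so both lie in $\tilde X_{\delta,\tilde\alpha,2}$. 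The decisive additional ingredient is \Theoremref{th:orderofsequence}, applied with the $\kappa<1$ fixed in hypotheses~1--2: it yields $u_j-u_{j-1}\in X_{\delta,\tilde\alpha+(j-2)\kappa\epsilon,l,3}=X_{\delta,\alpha,l,3}$, so the difference of consecutive iterates already carries the boosted decay exponent $\alpha$, not merely $\tilde\alpha$.

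To verify (a)--(b), split $F_j[w]=\bigl(F[u_j+w]-F[u_j]\bigr)+\bigl(F[u_j]-F[u_{j-1}]\bigr)+t^2k^2\partial_x^2(u_j-u_{j-1})$. For the third summand, $\partial_x^2(u_j-u_{j-1})\in X_{\delta,\alpha,l,1}$, and multiplying by $t^{2(\beta+1)}\nu^2$ together with hypothesis~2 ($2(\beta+1)>\kappa\epsilon$) places it in $X_{\delta,\alpha+\kappa\epsilon-\epsilon'',l,1}$ for arbitrarily small $\epsilon''>0$. For the second summand, a mean-value argument exactly as in the proof of \Theoremref{th:orderofsequence} writes $F[u_j]-F[u_{j-1}]=M\cdot\Delta V$ with $\Delta V$ carrying the decay exponent $\alpha$ of $u_j-u_{j-1}$ and $M$ supplying an extra gain, so the product has decay exponent $>\alpha$; being $w$-independent it drops out of $F_j[w]-F_j[\widetilde w]$, as does the third summand. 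For the first summand, hypothesis~3 — a pointwise Lipschitz bound in terms of the $E$-functionals — applies with the arguments $u_j+w,u_j$ (resp.\ $u_j+\widetilde w$), which remain in a fixed ball of $\tilde X_{\delta,\tilde\alpha,2}$ while $w,\widetilde w$ range over a bounded ball of $\tilde X_{\delta,\alpha,2}\subset\tilde X_{\delta,\tilde\alpha,2}$; taking the supremum in $t$ and using $\tilde\alpha\le\alpha$ (for $\delta\le1$) converts it into $\|F[u_j+w]-F[u_j+\widetilde w]\|_{\delta,\tilde\alpha+\epsilon,1}\le C\,\|w-\widetilde w\|_{\delta,\alpha,2}^\sim$, and in particular $\|F[u_j+w]-F[u_j]\|_{\delta,\tilde\alpha+\epsilon,1}\le C\,\|w\|_{\delta,\alpha,2}^\sim$. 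Fixing $0<\epsilon_*\le\min\{\kappa\epsilon,\ \epsilon-(j-2)\kappa\epsilon,\ 2\min_x(\beta(x)+1)-\kappa\epsilon\}$ — which for $j\ge4$ forces $\kappa$ to be chosen suitably small, part of the ``arbitrary $\kappa<1$'' freedom in hypothesis~1 — embeds all three summands into $X_{\delta,\alpha+\epsilon_*,1}$ and yields the Lipschitz estimate with this $\epsilon_*$. With (a)--(c) established, \Propref{prop:existenceSVIPnonlinear2} (its ``$\alpha$'' being $\alpha$, its ``$\epsilon$'' being $\epsilon_*$, vanishing leading-order part) produces, for all sufficiently small $\delta$, a unique $w\in\tilde X_{\delta,\alpha,2}$ with $L[w]=F_j[w]$, which is precisely the asserted unique solution of the singular initial value problem with asymptotic solutions of order $j$.

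The main obstacle is the exponent bookkeeping compressed into (a)--(b): one must exhibit a single $\epsilon_*>0$ valid for all three summands of $F_j$ and for both the mapping and the Lipschitz requirements of \Propref{prop:existenceSVIPnonlinear2}. This is exactly where hypothesis~2 is needed in the strong form $2(\beta+1)>\kappa\epsilon$ rather than merely $\beta>-1$, and where the $w$-dependent summand — whose output decay afforded by hypothesis~3 is only $\tilde\alpha+\epsilon$, i.e.\ a gain of $\epsilon-(j-2)\kappa\epsilon$ over $\alpha$ — imposes the relation between $\kappa$ and $j$. A subsidiary technical point, already present in \Propref{prop:behaviorH} and \Propref{prop:Hestimate}, is the logarithmic loss incurred when spatial derivatives act on the $x$-dependent powers $t^{-\lambda_i(x)}$ and $t^{\beta(x)}$, together with the passage between the degenerate and non-degenerate cases; these are absorbed into the $\epsilon''$-margin above. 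The rest of the argument is routine, resting only on \Propref{prop:iterationsequence}, \Theoremref{th:orderofsequence} and \Propref{prop:existenceSVIPnonlinear2}.
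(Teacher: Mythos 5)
Your overall route is the paper's: construct $u_j$ via \Propref{prop:iterationsequence} from hypothesis~1, use \Theoremref{th:orderofsequence} to give $u_j-u_{j-1}$ the boosted exponent $\alpha=\tilde\alpha+(j-2)\kappa\epsilon$ of \Eqref{eq:increasealpha}, verify that $F_j$ meets the hypotheses of \Propref{prop:existenceSVIPnonlinear2} with this $\alpha$, and conclude. The derivation of $L[w]=F_j[w]$ and the treatment of the two $w$-independent summands of $F_j$ are in order.

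There is, however, a genuine gap in your treatment of the $w$-dependent summand $F[u_j+w]-F[u_j+\widetilde w]$. You take the supremum in $t$ of the two sides of the pointwise estimate in hypothesis~3 separately, which only yields $\|F[u_j+w]-F[u_j+\widetilde w]\|_{\delta,\tilde\alpha+\epsilon,1}\le C\,\|w-\widetilde w\|^\sim_{\delta,\alpha,2}$, i.e.\ an output controlled only in $X_{\delta,\tilde\alpha+\epsilon,1}$; to fit this into the target $X_{\delta,\alpha+\epsilon_*,1}$ you are then forced to impose $\epsilon_*\le\epsilon-(j-2)\kappa\epsilon$, hence $(j-2)\kappa<1$. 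That restriction is not available: $\kappa$ is fixed by hypotheses~1, 2 and 4 (it enters the definition \Eqref{eq:increasealpha} of $\alpha$, at which the energy dissipation matrix must be positive definite), and shrinking it both alters the statement being proved and caps the boost $(j-2)\kappa\epsilon$ by $\epsilon$ --- defeating the entire purpose of the higher-order ansatz, which is to make $\alpha$ large by taking $j$ large. The reason hypothesis~3 is stated pointwise in $t$ (rather than as a norm inequality) is precisely to avoid this loss: for $w,\widetilde w\in\tilde X_{\delta,\alpha,2}$ one has $\tilde E_{\delta,\tilde\alpha,2}[w-\widetilde w](t)\le t^{(j-2)\kappa\epsilon}\,\|w-\widetilde w\|^\sim_{\delta,\alpha,2}$, so hypothesis~3 gives $E_{\delta,\tilde\alpha+\epsilon,1}[F[u_j+w]-F[u_j+\widetilde w]](t)\le C\,t^{(j-2)\kappa\epsilon}\,\|w-\widetilde w\|^\sim_{\delta,\alpha,2}$, which after re-weighting is exactly $\|F[u_j+w]-F[u_j+\widetilde w]\|_{\delta,\alpha+\epsilon,1}\le C\,\|w-\widetilde w\|^\sim_{\delta,\alpha,2}$ --- the full gain of $\epsilon$ over the boosted $\alpha$, for the given $\kappa$. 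With this correction your $\epsilon_*$ may be taken as $\min\{\kappa\epsilon,\,2\min_x(\beta(x)+1)-\kappa\epsilon\}>0$ (up to the logarithmic margins you already account for), and the remainder of your argument, which coincides with the paper's, goes through.
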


The third condition above is meaningful since both sides of
the inequality are continuous and bounded functions on $(0,\delta]$. Note that this
theorem can be formulated without difficulty for the $C^\infty$-case and leads to a simpler statement.

\begin{proof}
  Under the first hypothesis, it can be shown similar to
  \Propref{prop:iterationsequence} that the $j$-th element of the
  iteration sequence $u_j$ is in
  $X_{\delta,\tilde\alpha,m_1-2(j-1)}$. As in
  \Theoremref{th:orderofsequence} it can be demonstrated that
  $u_j-u_{j-1}$ is in
  $X_{\delta,\tilde\alpha+(j-2)\kappa\epsilon,m_1-2(j-1)}$ for all
  $j\ge 2$. Now let $w\in
  \tilde X_{\delta,\tilde\alpha+(j-2)\kappa\epsilon,m_1-2(j-1)-1}$. Under the
  second hypothesis, it follows that $F_j[w]\in
  X_{\delta,\tilde\alpha+j \kappa\epsilon,m_1-2(j-1)-2}$. Now, in
  order to apply \Propref{prop:existenceSVIPnonlinear2} with $F$
  substituted by $F_j$, it is necessary to choose $m_1=2j+1$. The
  operator $F_j$ satisfies the Lipschitz condition of
  \Propref{prop:existenceSVIPnonlinear2} if the third hypothesis is
  satisfied. Thanks to the fourth assumption, we can now apply 
  \Propref{prop:existenceSVIPnonlinear2}.
\end{proof}


\section*{Acknowledgements}

The authors were partially supported by the Agence Nationale de la Recherche
(ANR) through the grant 06-2-134423 entitled {\sl ``Mathematical Methods in General Relativity''} (MATH-GR).
The first draft of this paper was written during the year 2008--2009
when the first author (F.B.) was an ANR postdoctoral fellow at the Laboratoire J.-L. Lions.  
The second author (P.L.F.) is grateful to the Erwin Schr\"odinger Institute,
Vienna, where this paper was completed during the program {\sl ``Quantitative Studies of Nonlinear Wave Phenomena'',} 
organized by P.C. Aichelburg, P. Bizo\'n, and W. Schlag.



\begin{thebibliography}{1}

\bibitem{ABL} 
\auth{Amorim P., Bernardi C., and LeFloch P.G.,} 
Computing Gowdy spacetimes via spectral evolution in future and past directions,  
\jou{Class. Quantum Grav.} 26 (2009), 1--18.

\bibitem{AnderssonRendall}
\auth{Andersson L. and Rendall A.D.,}  
Quiescent cosmological singularities,
\jou{Commun. Math. Phys.} 218 (2001), 479--511.

\bibitem{BLSS} 
\auth{Barnes A.P., LeFloch P.G., Schmidt B.G., and Stewart J.M.,}
The Glimm scheme for perfect fluids on plane-symmetric Gowdy spacetimes, 
\jou{Class. Quantum Grav.} 21 (2004), 5043--5074.

\bibitem{BergerMoncrief} 
\auth{Berger B.K., Moncrief V.,}
Numerical investigations of cosmological singularities, 
\jou{Phys. Rev. D} 48 (1993), 4676.

\bibitem{BergerChruscielMoncrief} 
\auth{Berger B.K., Chru\'sciel P., and Moncrief V.,}
On asymptotically flat spacetimes with $G_2$-invariant Cauchy surfaces, 
\jou{Ann. Phys.} 237 (1995), 322--354. 

\bibitem{BergerChruscielIsenbergMoncrief} 
\auth{Berger B.K., Chru\'sciel P., Isenberg J., and Moncrief V.,}
Global foliations of vacuum spacetimes with $T^2$ isometry, 
\jou{Ann. Phys.} 260 (1997), 117--148. 

\bibitem{BeyerLeFloch2} 
\auth{Beyer F. and LeFloch P.G.,}
Second-order hyperbolic Fuchsian systems. Gowdy spacetimes and the Fuchsian numerical algorithm, 
ArXiv:1006.2525.

\bibitem{BeyerLeFloch3} 
\auth{Beyer F. and LeFloch P.G.,}
in preparation. 

\bibitem{Chrusciel} 
\auth{Chru\'sciel P.,} 
On spacetimes with $U(1) \times U(1)$ symmetric compact Cauchy surfaces,
\jou{Ann. Phys.} 202 (1990), 100--150.
 
\bibitem{ChruscielIsenbergMoncrief} 
\auth{Chru\'sciel P., Isenberg J., and Moncrief V.,} 
Strong cosmic censorship in polarized Gowdy spacetimes, 
\jou{Class. Quantum Grav.} 7 (1990), 1671--1680. 

\bibitem{EardleyMoncrief} 
\auth{Eardley D. and Moncrief V.,}
The global existence problem and cosmic censorship in general relativity, 
\jou{Gen. Relat. Grav.} 13 (1981), 887--892.  

\bibitem{Gowdy73}
\auth{Gowdy R.H.,} 
Vacuum space-times with two parameter spacelike isometry groups and
compact invariant hypersurfaces: Topologies and boundary conditions,
\jou{Ann. Phys.} 83 (1974), 203--241.

\bibitem{IsenbergMoncrief} 
\auth{Isenberg J. and Moncrief V.,}
Asymptotic behavior of the gravitational field and the nature of singularities in Gowdy spacetimes, 
\jou{Ann. Phys.} 99 (1990), 84--122.

\bibitem{KichenassamyRendall}
Kichenassamy S. and Rendall A.D., 
Analytic description of singularities in Gowdy spacetimes,
\jou{Class. Quantum Grav.} 15 (1998), 1339--1355.

\bibitem{LeFlochRendall} 
\auth{LeFloch P.G. and Rendall A.D.,} 
A global foliation of Einstein-Euler spacetimes with Gowdy-symmetry on $T^3$, 
Preprint Series ESI 2242, Erwin Schr\"odinger Institute, Vienna. See also arXiv:1004.0427v1.

\bibitem{LeFlochStewart} 
\auth{LeFloch P.G. and Stewart J.M.,}
Shock waves and gravitational waves in matter spacetimes with Gowdy symmetry, 
\jou{Port. Math.} 62 (2005), 349--370.

\bibitem{LeFlochStewart2} 
\auth{LeFloch P.G. and Stewart J.M.,}
The characteristic initial value problem for plane symmetric spacetimes with weak regularity, 
Preprint Series ESI 2243, Erwin Schr\"odinger Institute, Vienna. See also arXiv:1004.2343v1. 

\bibitem{Moncrief} 
\auth{Moncrief V.,}
Global properties of Gowdy spacetimes with $T^3 \times \RR$ topology, 
\jou{Ann. Phys.} 132 (1981), 87--107.

\bibitem{Rendall00}
\auth{Rendall A.D.,}
Fuchsian analysis of singularities in {G}owdy spacetimes beyond analyticity,
\jou{Class. Quantum Grav.} 17 (2000), 3305--3316.

\bibitem{RendallWeaver} 
\auth{Rendall A.D. and Weaver M.,}
Manufacture of Gowdy spacetimes with spikes,
\jou{Class. Quantum Grav.} 18 (2001), 2959--2975. 

\bibitem{Ringstrom2} \auth{Ringstr\"om H.,} 
Asymptotic expansions close to the singularity in Gowdy spacetimes,
\jou{Class. Quant. Grav.} 21 (2004); S305--S322. 

\bibitem{Ringstrom4} 
\auth{Ringstr\"om H.,}  
Curvature blow up on a dense subset of the singularity in T3-Gowdy,
\jou{J. Hyperbolic Diff. Eqs.} 2 (2005), 547--564. 

\bibitem{Ringstrom6} 
\auth{Ringstr\"om H.,}  
Strong cosmic censorship in $T^3$-Gowdy spacetimes, 
\jou{Ann. Math.} 170 (2009), 1181--1240.  

\end{thebibliography}
\end{document}